\documentclass[12pt,letterpaper]{article}           % fleqn: align equations left

\usepackage{geometry}                                     % Custom margins for single page, etc.
\usepackage{fullpage}                                     % Use the full page
\usepackage{setspace}                                     % Enables custom margins, doublespacing, etc.
\usepackage{pdflscape}                                    % Use: \begin{landscape} ... \end{landscape}
\usepackage{tikz}

\usetikzlibrary{arrows.meta}

\definecolor{demandblue}{RGB}{31,119,180}
\definecolor{supplygreen}{RGB}{44,150,44}
\definecolor{shadowred}{RGB}{190,40,40}
\definecolor{dwtri}{RGB}{255,140,30}
\definecolor{ceilgray}{RGB}{80,80,80}

\tikzset{dropline/.style={dashed, gray!65, thin}}

\usepackage[utf8]{inputenc}                             % Font definition and input type
\usepackage[T1]{fontenc}                                  % Font output type 
\usepackage{lmodern}                                      % Latin Modern fonts
\usepackage{textcomp}                                     % Supports many additional symbols

\usepackage{mathtools}
\usepackage{amsmath}                                      % Math equations, etc.
\usepackage{amsthm}                                       % Math theorems, etc.
\usepackage{amsfonts}                                     % Math fonts (e.g. script fonts)
\usepackage{amssymb}                                      % Math symbols such as infinity

\usepackage{nicefrac,xfrac}
\usepackage{xcolor}                                          % Enables colored text
\definecolor{darkblue}{rgb}{0.0,0.0,0.66}                 % Custom color: dark blue
\definecolor{myblue}{RGB}{0,114,178}
\definecolor{myred}{RGB}{213,94,0}
\definecolor{myyellow}{RGB}{242,237,194}
\definecolor{mygreen}{RGB}{0,158,115}           % Custom color: dark blue
\usepackage[hyperfootnotes=false,bookmarksopen]{hyperref} % Enable hyperlinks, expand menu subtree
\usepackage{doi}
\hypersetup{                                              % Custom hyperlink settings
    pdffitwindow=false,                                   % true: window fit to page when opened
    pdfstartview={XYZ null null 1.00},                    % Fits the default zoom of the page to 100%
    pdfnewwindow=true,                                    % Links in new window
    colorlinks=true,                                      % false: boxed links; true: colored links
    linkcolor=darkblue,% Color of internal links
    citecolor=darkblue,                                   % Color of links to bibliography
    urlcolor=darkblue  }                                  % Color of external links

\usepackage{graphicx}                                     % Allows .jpg images to be included
\usepackage[section]{placeins}                            % Forces floats to stay in section
\usepackage{float}                                        % Used with restylefloat
\restylefloat{figure}                                     % "H" forces a figure to be "exactly here"
\usepackage[]{caption}   
\usepackage[disable]{todonotes} %[disable]to shut off notes
\usepackage[authordate,bibencoding=auto,strict,backend=biber,natbib]{biblatex-chicago}
\usepackage{csquotes}

\usepackage{tikz}                                        % Timelines and other drawings
\usetikzlibrary{decorations}                             % Formating for Tikz
\usetikzlibrary{calc}
\usetikzlibrary{matrix}
\usetikzlibrary{positioning}

\usepackage{authblk}

\usepackage{graphicx}
\graphicspath{{../output/}{./}}                           % Look for figures in output/ and current dir
\usepackage{caption}
\usepackage{subcaption}

\usepackage{amsthm}
\newtheorem{theorem}{Theorem}
\newtheorem{proposition}{Proposition}
\newtheorem{corollary}{Corollary}
\newtheorem{claim}[theorem]{Claim}

\newtheorem{lemma}{Lemma}

\theoremstyle{definition}

\newtheorem{definition}{Definition}
\newtheorem{assumption}{Assumption}

\newtheorem{remark}[theorem]{Remark}

\usepackage{pgfplots}
\pgfplotsset{compat=1.11}
\usepgfplotslibrary{fillbetween}
\usetikzlibrary{intersections}
\usetikzlibrary{patterns}
\newcommand{\R}{\mathbb{R}}

\usepackage{outlines}
\usepackage{enumitem}
\setenumerate[1]{label=\Roman*.}
\setenumerate[2]{label=\Alph*.}
\setenumerate[3]{label=\roman*.}
\setenumerate[4]{label=\alph*.}
\setlist[itemize]{leftmargin=4em}
\setlist[enumerate]{leftmargin=4em}

\usepackage[]{mathpazo}

\usepackage{abstract}
\renewenvironment{abstract}
{{%
		\setlength{\leftmargin}{0mm}
		\setlength{\rightmargin}{\leftmargin}%
	}%
	\relax}
{\endlist}

\makeatletter
\def\@maketitle{%
	\newpage
	{\fontsize{24}{26}\selectfont\raggedright  \setlength{\parindent}{0pt} \@title \par
	\vskip 20pt\relax \normalsize\fontsize{18}{20}
	\@author \par
	\vskip 20pt
	\normalsize\fontsize{18}{20}
	\@date \par}%
}
\makeatother

\usepackage{longtable,booktabs}
\usepackage[utf8]{inputenc}

\usepackage{graphicx,grffile}
\makeatletter
\def\maxwidth{\ifdim\Gin@nat@width>\linewidth\linewidth\else\Gin@nat@width\fi}
\def\maxheight{\ifdim\Gin@nat@height>\textheight\textheight\else\Gin@nat@height\fi}
\makeatother
\setkeys{Gin}{width=\maxwidth,height=\maxheight,keepaspectratio}

\title{\centering Chaos and Misallocation \\ \vspace{.25em} under Price Controls
\thanks{We thank Robert Bradley, Jeffrey Clemens, Frank Easterbrook, Jonathan Meer and seminar participants at the University of Chicago Law School for helpful comments and suggestions. All errors are our own.}}%

\author{%
Brian C.~Albrecht\thanks{International Center for Law \& Economics. Email: \href{mailto:mail@briancalbrecht.com}{mail@briancalbrecht.com}.},~ %
Alex Tabarrok\thanks{George Mason University. Email: \href{mailto:Tabarrok@gmu.edu}{Tabarrok@gmu.edu}.},~ %
and Mark Whitmeyer\thanks{Arizona State University. Email: \href{mailto:mark.whitmeyer@gmail.com}{mark.whitmeyer@gmail.com}.}%
}

\date{\today}
\begin{document}

% \listoftodos
% \newpage

% \tableofcontents
% \newpage

	% \pagenumbering{arabic}% resets `page` counter to 1 
	%	
		% \maketitle
	
	{% \usefont{T1}{pnc}{m}{n}
		\setlength{\parindent}{0pt}
		\thispagestyle{plain}
		{\fontsize{18}{20}\selectfont\raggedright 
			\maketitle  % title \par  
			
		}
		
	}

	\begin{abstract}
		
		\hbox{\vrule height .2pt width 39.14pc}
		
		\vskip 8.5pt % \small 

		\noindent Price controls kill the incentive for arbitrage. We prove a Chaos Theorem: under a binding price ceiling, suppliers are indifferent across destinations, so arbitrarily small cost differences can determine the entire allocation. The economy tips to corner outcomes in which some markets are fully served while others are starved; small parameter changes flip the identity of the corners, generating discontinuous welfare jumps. These corner allocations create a distinct source of cross-market misallocation, separate from the aggregate quantity loss (the Harberger triangle) and from within-market misallocation emphasized in prior work. They also create an identification problem: welfare depends on demand far from the observed equilibrium. We derive sharp bounds on misallocation that require no parametric assumptions. In an efficient allocation, shadow prices are equalized across markets; combined with the adding-up constraint, this collapses the infinite-dimensional welfare problem to a one-dimensional search over a common shadow price, with extremal losses achieved by piecewise-linear demand schedules. Calibrating the bounds to station-level AAA survey data from the 1973–74 U.S. gasoline crisis, misallocation losses range from roughly 1 to 9 times the Harberger triangle.

\vspace{0.5cm}

\noindent  JEL Classification: D45, D61, L51, Q41

\vspace{0.2cm}

\noindent Keywords: Price controls; Misallocation; Deadweight loss; Market segmentation; Shadow prices; Gasoline shortages

		\hbox{\vrule height .2pt width 39.14pc}

	\end{abstract}
	\vfill
    \doublespacing
	\section{Introduction}

In February 1974, the aggregate gasoline shortfall in the United States was roughly 9 percent nationwide.\footnote{A linear trend fit over 1970--1973 implies 1974 shortfalls of 8--9\% in both EIA petroleum products supplied \citep{eia_petroleum_products_supplied_annual} and FHWA net gasoline volume taxed \citep{fhwa_net_volume_gallons_taxed}.} Price controls on oil kept the price at the pump below the market price. If the standard model of price ceilings held and markets are similar, each local market would have experienced a roughly proportional 9 percent reduction, shadow prices would equalize, and the familiar Harberger triangle would measure the welfare cost. Instead, pain fell in a patchwork. As \citet{yergin_prize_1991} describes, ''gasoline was in short supply in major urban areas, but there were more than abundant supplies in rural and vacation areas, where the only shortage was of tourists.'' Why should a 9 percent national shortfall leave some markets with nothing while others have more than abundant supplies? This paper shows that feast or famine, not modest belt-tightening, is the generic outcome of price controls.

The argument rests on a simple observation: price controls kill the incentive for arbitrage. Under market prices, if gasoline is more valuable in Connecticut than in Idaho, traders profit by redirecting supply until shadow prices equalize. A slight cost advantage for one destination captures only the marginal tanker loads, because the price premium in the underserved market rises just enough to offset the cost difference. But when prices are frozen at $\bar{p}$ everywhere, there is no price gap to exploit. Every unit earns identical revenue regardless of destination, so suppliers are indifferent about where to ship. A one-cent cost advantage is as good as a one-dollar advantage: both redirect the entire flow, because there is no price in the destination market pushing back. What would have been a marginal reallocation becomes a categorical one, and the economy loses its capacity for incremental adjustment, instead lurching between all-or-nothing extremes \citep[p.\ 137]{sowell_knowledge_1980}. Allocation is no longer determined by consumer valuations but by whatever breaks the tie: small differences in transportation costs, regulatory discretion, historical consumption patterns.

The formal consequence is what we call the "Chaos" Theorem. Since price is identical across markets, any cost advantage, however small, can dramatically swing allocations. The supplier's problem reduces to cost minimization over the feasible set. Total costs are linear in quantity, the feasible set (bounded by supply and market demands) is a convex polytope. The problem, therefore, admits a vertex solution: markets are served in order of increasing delivery cost, lower-cost markets are filled to their demand caps, at most one cutoff market is partially served, and higher-cost markets receive nothing. The allocation is not guided by willingness to pay or welfare. The ''chaos'' is thus not the mathematical notion of chaos theory but the economic one of disordered, uncoordinated and divorced from welfare. 

This corner-solution structure differs from benchmarks in prior work.
The standard Harberger triangle is the \textit{minimum} welfare loss compatible with a binding ceiling; and it requires the very price variation a ceiling forbids. Corners are worse because they maximize the shadow price gap: starved markets have desperate consumers willing to pay far above the ceiling, while oversupplied markets clear at the controlled price. Each unit sent to the wrong market destroys value equal to this gap. \citet{glaeser_misallocation_2003} quantify misallocation under rent control assuming random allocation, where every apartment has equal probability of reaching any tenant. Random allocation is an interior outcome with its own smoothness: scarcity is spread widely and small parameter changes produce small welfare changes. We show that equilibrium generically occurs at neither the Harberger nor Glaeser-Luttmer benchmark. Cost-minimizing suppliers drive allocations to vertices, not interiors. Corners are not an assumption but an \emph{outcome} about what cost-minimizing suppliers choose. The correct benchmark is corners, not random, and corners generate qualitatively different welfare properties: losses far larger than either efficient or random distributions, and discontinuous jumps from small parameter perturbations.

Figure~\ref{fig:rationing_map} documents this dispersion for the peak of the crisis at the state level, showing the percentage of stations rationing fuel (either completely out of fuel or limiting purchases). Of sampled stations, Connecticut and Massachusetts exceeded 90 percent, while Idaho, Montana, Utah, Hawaii and Wyoming reported no problems. A natural interpretation is that Idaho got lucky and Connecticut got unlucky. But think about what efficient allocation would look like: with a 9 percent national shortfall, every market should experience roughly a 9 percent reduction, or at least \emph{some} reduction. Zero rationing means those states received more than their efficient share. The states with abundant fuel are a sign of the \textit{problem}. The states with zero rationing were, in a sense, overserved, since they had enough to satisfy demand even at the controlled price, not just at the market price. At market prices, vacationers in rural areas would never have outbid commuters in cities for scarce fuel. Under a binding ceiling, both pay the same price, so there is no mechanism to redirect supply. The rationing in Connecticut and the abundance in Idaho are two sides of the same misallocation.

\begin{figure}[!ht]
  \centering
  \includegraphics[width=\textwidth]{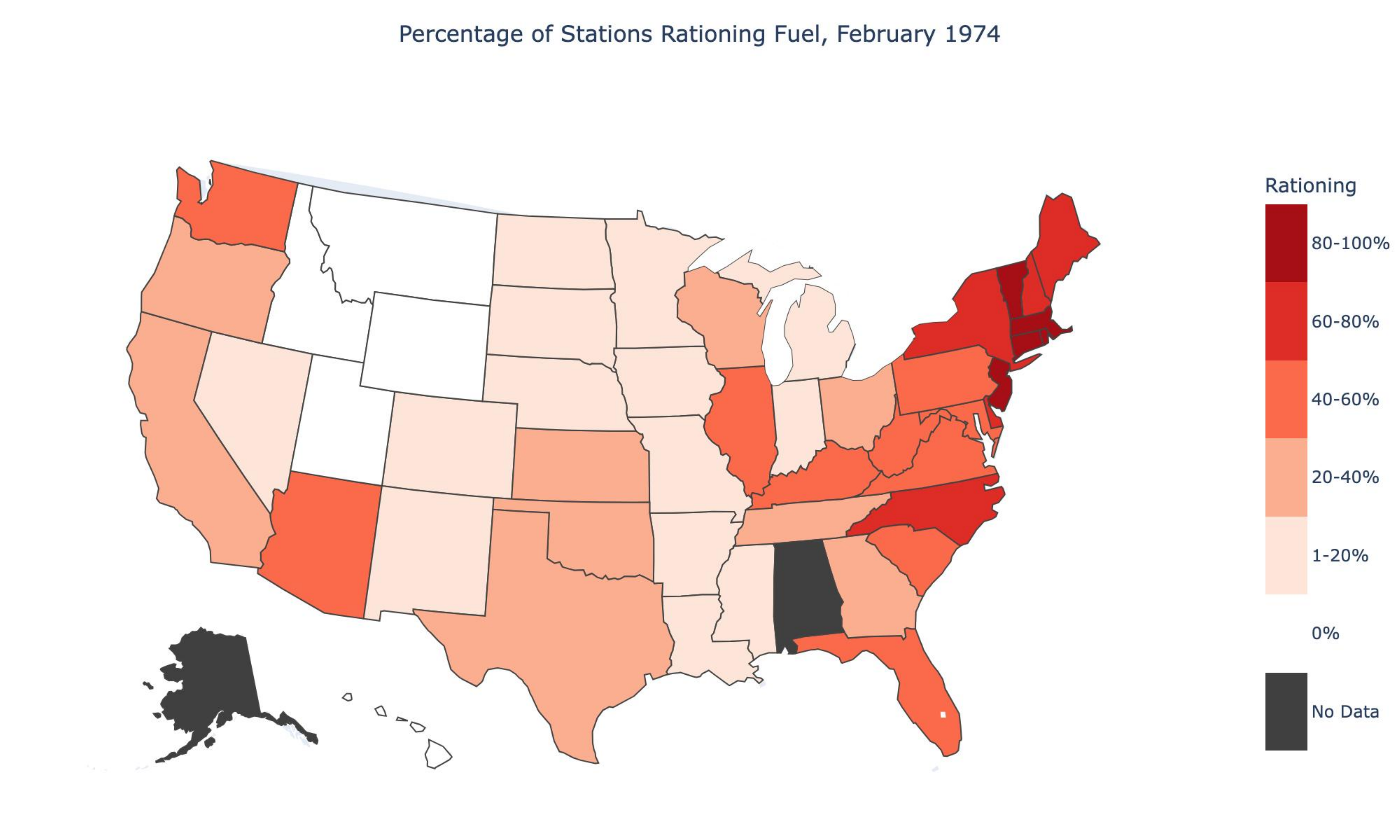}
  \caption{Percentage of gasoline stations rationing fuel by state, February 1974. Rationing includes stations completely out of fuel and those limiting purchases (e.g., maximum gallons per customer). Data are from AAA surveys of sampled stations \citep{aaa_aaa_1974}.}
  \label{fig:rationing_map}
\end{figure}

This reframes the puzzle. The question is not why some states rationed. With a 9 percent national shortfall, rationing was inevitable. The question is why other states exhibited no rationing. Why should some markets clear while others run dry? The answer is that efficient allocation requires price variation, which a binding ceiling explicitly forbids. Once prices cannot adjust, there is no mechanism to direct scarce supplies toward their highest-value uses. The patchwork pattern of Figure~\ref{fig:rationing_map} is not an accident. It is the generic outcome when prices cannot signal scarcity.\footnote{Pure corner solutions are themselves an extreme case. They assume away contractual rigidities, inventory buffers, and other frictions that slow adjustment. But as a limiting benchmark, corners have empirical bite.}

Under a ceiling, equilibrium allocations are generically corners: some markets clear at the controlled price while others are starved. Welfare depends critically on valuing units at these extreme allocations. But demand is identified only locally, near observed price-quantity pairs. Computing welfare at a corner requires integrating inverse demand deep into a region no estimation has reached. Parametric approaches settle that extrapolation by assuming functional form, perhaps bolstered with elasticity sensitivities. We instead ask what welfare statements are possible without committing to a specific extrapolation.

We prove sharp bounds on misallocation losses requiring no parametric demand assumptions. The key structural insight is that efficient allocation equalizes shadow prices across markets, so all markets share a common shadow price $p^*$. The adding-up constraint $\sum_i q_i(p^*) = \bar{Q}$ converts the infinite-dimensional problem over demand functions into a one-dimensional search over the scalar $p^*$. Within each market, the extremal demand curves that attain the bounds are piecewise linear, with slopes at the elasticity-implied endpoints. Instead of being a robustness check, robust bounds are the methodologically appropriate response to a challenge intrinsic to the theoretical problem.

How large are misallocation losses? Our robust interval implies a misallocation-to-Harberger ratio of roughly 1 to 9. Even at the lower bound, misallocation roughly equals the quantity-reduction loss; at the upper bound, the Harberger triangle accounts for barely one-tenth of total welfare cost. The quantity reduction is not the main cost of price controls. The misallocation is.

The remainder of the paper proceeds as follows. Section~\ref{sec:literature} reviews related literature. Section~\ref{sec:two-market} introduces a two-market example. Section~\ref{sec:general-model} presents the general model. Section~\ref{sec:chaos} develops the Chaos Theorem, and Section~\ref{sec:simulation} illustrates the chaos mechanism with simulations. Section~\ref{sec:empirical} applies our framework to the 1973--74 gasoline crisis, developing robust bounds on misallocation losses. Section~\ref{sec:pricetoquantity} documents how the allocation chaos generated demand for direct quantity controls. We defer proofs to Appendices~\ref{omittedproofs} and \ref{app:extremals-proofs}.

\section{Literature Review}
\label{sec:literature}

Our paper connects three distinct strands of research: the classic welfare-economics treatment of price controls, the modern literature on misallocation, and recent work that uses robust or worst-case
methods to bound welfare when key allocation details are unknown.

The canonical approach traces back to \citet{harberger_monopoly_1954}, who measured the efficiency cost of quantity reductions with the familiar ''triangle.'' Early analyses of price ceilings took the triangle as their primary yardstick (e.g. \citet{oi_measuring_1976}, \citet{gordon_response_1973}, and \citet{rockoff_drastic_1984} for U.S. wartime experience). 

A subsequent literature pointed out that the triangle misses the welfare costs of non-price rationing. \citet{barzel_theory_1974} showed that when money prices are suppressed, rationing by waiting emerges: consumers queue until the shadow ''time price'' clears the market. \citet{frech_welfare_1987} and \citet{deacon_rationing_1985,deacon_welfare_1989} applied this framework to the 1973--74 gasoline crisis, finding that queuing costs alone could exceed the entire rent transfer from producers to consumers. \citet{frech_welfare_1987} explicitly consider cross-market allocation, analyzing inefficiencies between urban and rural gasoline markets in California. In their framework, optimal rationing by queuing would equalize \emph{time prices} (money price plus waiting cost) across markets. Misallocation occurs when time prices differ: urban consumers faced longer queues than rural consumers, generating welfare losses from the failure to equalize shadow prices. But crucially, their analysis assumes an interior solution: every market receives positive supply, and inefficiency arises from unequal shadow prices across locations. Our paper identifies a distinct source of welfare loss that this queuing literature cannot capture. The Chaos Theorem predicts corner solutions where some markets receive literally nothing, an allocative loss that exists even in a hypothetical world with zero queuing costs.

Our paper is most closely related to \citet{davis_allocative_2011}, who study misallocation in the market for natural gas. \citet{brunt_moving_2025} makes a similar pedagogical point, arguing that misallocation losses under price controls are often larger than the quantity-reduction triangle. \citet{Murphy1992} analyze misallocation under partial price liberalization, showing that when some markets are freed while others remain controlled, supply flows to uncontrolled markets until controlled markets are starved. We extend their framework by characterizing the generic structure of equilibrium allocations (corners) and the discontinuous dependence on nuisance parameters.

A foundational question is whether prices or rationing better allocate scarce goods. \citet{weitzman_is_1977} shows that the answer depends on the joint distribution of needs and income: the price system dominates when wants are dispersed or incomes are equal, while rationing can dominate when needs are uniform but incomes are unequal. The mechanism-design literature formalizes this tradeoff: \citet{condorelli_market_2013} characterizes when market mechanisms versus non-market mechanisms are optimal, while \citet{akbarpour_redistributive_2024} show that non-market allocation dominates when willingness to pay is uninformative about the designer's welfare weights. \citet{kominers_price_2025} define price gouging as occurring when lowering the price from the market-clearing level would raise utilitarian welfare. Our paper abstracts from redistributive motives, focusing instead on the allocative inefficiency that arises when a binding ceiling segments markets and blocks arbitrage.

Our paper builds on the literature studying random allocation under binding price ceilings. \citet{glaeser_misallocation_2003} show that apartments do not necessarily flow to their highest-value users under rent control, and they quantify the resulting misallocation under random allocation. \citet{Bulow2012} characterize when consumer surplus rises or falls under different demand-curve shapes. 

We focus instead on market segmentation: markets are naturally segmented by geography, by transaction and transport costs, by final use, and by position in the structure of production. We show that equilibrium generically occurs at neither the Harberger benchmark (efficient allocation) nor the Glaeser-Luttmer benchmark (random allocation). Cost-minimizing suppliers drive allocations to vertices, not interiors.

Recent empirical work on rent control, such as \citet{diamond_effects_2019} on San Francisco or the comprehensive survey by \citet{kholodilin_rent_2024}, documents substantial misallocation in housing, though identification of the allocative component remains challenging. \citet{buurma-olsen_quantifying_2025} make progress on this front, finding that Dutch public-housing tenants consume units differing 7.5 percent from their efficient allocation. We provide a general theoretical framework that nests these settings as special cases.

Our robust bounds draw on the literature measuring welfare without fully specifying demand. \citet{hausman_nonparametric_1995} estimate consumer surplus using nonparametric methods; \citet{hausman_individual_2016} show that with unobserved heterogeneity, welfare is only partially identified. \citet{kremer_worst-case_2018} characterize worst-case deadweight loss, showing it can approach the entire surplus under extreme demand shapes. \citet{kang_robustness_2025} show that familiar functional forms are extremal within canonical demand families; \citet{kang_robust_regulation_2026} extends the robustness approach to the prices-versus-quantities choice. We contribute by applying these methods to misallocation losses across segmented markets, and by showing that the bound-finding problem reduces to a one-dimensional optimization with piecewise-linear extremals.

\section{Two Market Example}
\label{sec:two-market}

Consider a national market divided into two submarkets with demand functions $D_1(p)$ and $D_2(p)$. A binding price ceiling $\bar{p}$ creates a total shortage of $D(\bar{p}) - S(\bar{p})$, where $S(p)$ is aggregate supply. We denote total supply under the ceiling as $\bar Q \equiv S(\bar{p})$. The quantities allocated must sum to this total: $q_1 + q_2 = \bar Q$. No market can receive more than it demands at the ceiling price, otherwise the market price would fall below the ceiling. This gives upper bounds $q_1 \leq D_1(\bar{p}) \equiv \bar q_1$ and $q_2 \leq D_2(\bar{p}) \equiv \bar q_2$. Together, these constraints define the feasible set.

Figure~\ref{fig:misallocation_two_market} illustrates the welfare consequences of different allocations. In Panel~A, the shortage is allocated efficiently, and shadow prices equalize across markets at $p^*$, yielding deadweight losses $a$ and $b$ that sum to the national Harberger triangle $c$. The Harberger triangle is the \textit{smallest} welfare cost compatible with a binding ceiling.

Panel~B shows an equally-feasible corner allocation where Market~1 receives its full demand at the controlled price and Market~2 gets whatever is left. The welfare loss is nearly an order of magnitude larger than measured by the Harberger triangle. 

\begin{figure}[htbp]
  \centering
  % ================================================================
%  UNIFORM SCALE: xscale = 2.6 cm/unit, yscale = 2.8 cm/unit
%
%  This makes q1 + q2 = Q̄ hold visually: the physical distance
%  of q1 from origin plus q2 from origin equals Q̄ from origin
%  in the aggregate panel (all measured in the same cm/unit).
%
%  Demand functions:
%    Individual:  P_i(q) = 1.2 − q,   max q = 1.2
%    Aggregate:   P(Q)   = 1.2 − Q/2, max Q = 2.4
%
%  Key values:
%    p̄  = 0.4  →  y = 1.120
%    p* = 0.7  →  y = 1.960
%    p2 = 1.0  →  y = 2.800   [Panel B, market 2 shadow price]
%    q* = 0.5  →  x = 1.300   [each submarket, Panel A]
%    D_i(p̄)=0.8 → x = 2.080  [each submarket]
%    Q̄  = 1.0  →  x = 2.600  [aggregate]   1.300+1.300 = 2.600 ✓
%    D(p̄)=1.6  → x = 4.160  [aggregate]   2.080+2.080 = 4.160 ✓
%
%  Panel B:
%    q1 = 0.8  →  x = 2.080  q2 = 0.2 → x = 0.520  (sum=1.0 ✓)
%    p_2(0.2) = 1.0 → y = 2.800
%
%  Layout: col1=0  col2=5.0  col3=10.0   rowA=5.6  rowB=0
% ================================================================

\begin{tikzpicture}[
  >=Stealth,
  font=\small,
  line width=0.65pt
]

% ──────────────────────────────────────────────────────────────────
%  Column headings
% ──────────────────────────────────────────────────────────────────
\node[font=\normalsize\bfseries] at (1.80, 9.85) {Market 1};
\node[font=\normalsize\bfseries] at (6.80, 9.85) {Market 2};
\node[font=\normalsize\bfseries] at (12.80,9.85) {Aggregate Market};

% ──────────────────────────────────────────────────────────────────
%  PANEL A  —  efficient allocation, shadow prices equalised
% ──────────────────────────────────────────────────────────────────
\node[font=\normalsize\bfseries, anchor=east] at (-0.3,7.05) {Panel~A};

%-- A1: Market 1 -------------------------------------------------
\begin{scope}[shift={(0.0,5.6)}]
  \draw[->] (0,0) -- (3.85,0) node[right] {$q_1$};
  \draw[->] (0,0) -- (0,3.80) node[above] {$P$};
  % Demand: (0,3.360)→(3.120,0)
  \draw[demandblue, thick] (0,3.360) -- (3.120,0);
  \node[demandblue, anchor=south west] at (1.10,2.10) {$D_1(p)$};
  % Ceiling
  \draw[ceilgray, dashed, thin] (0,1.120) -- (3.65,1.120)
    node[right, ceilgray, font=\footnotesize] {$\bar{p}$};
  % Shadow price
  \draw[shadowred, dashed, thin] (0,1.960) -- (3.65,1.960)
    node[right, shadowred, font=\footnotesize] {$p^*$};
  % DWL triangle a: (q*,p*)=(1.300,1.960), (D(p̄),p̄)=(2.080,1.120)
  \fill[dwtri, opacity=0.80]
    (1.300,1.960) -- (2.080,1.120) -- (1.300,1.120) -- cycle;
  \node[font=\bfseries] at (1.590,1.460) {$a$};
  % Drop lines
  \draw[dropline] (1.300,1.960) -- (1.300,0);
  \draw[dropline] (2.080,1.120) -- (2.080,0);
  \draw[dropline] (0,1.960) -- (1.300,1.960);
  % Dot
  \fill[black] (1.300,1.960) circle (1.8pt);
  % Ticks
  \draw (1.300,0.06) -- (1.300,-0.06) node[below, font=\footnotesize] {$q_1^*$};
  \draw (2.080,0.06) -- (2.080,-0.06) node[below, font=\footnotesize] {$\bar{q}_1$};
  \draw (0.06,1.960) -- (-0.06,1.960)
    node[left, font=\footnotesize, shadowred] {$p^*$};
\end{scope}

%-- A2: Market 2 -------------------------------------------------
\begin{scope}[shift={(5.0,5.6)}]
  \draw[->] (0,0) -- (3.85,0) node[right] {$q_2$};
  \draw[->] (0,0) -- (0,3.80) node[above] {$P$};
  \draw[demandblue, thick] (0,3.360) -- (3.120,0);
  \node[demandblue, anchor=south west] at (1.10,2.10) {$D_2(p)$};
  \draw[ceilgray, dashed, thin] (0,1.120) -- (3.65,1.120)
    node[right, ceilgray, font=\footnotesize] {$\bar{p}$};
  \draw[shadowred, dashed, thin] (0,1.960) -- (3.65,1.960)
    node[right, shadowred, font=\footnotesize] {$p^*$};
  \fill[dwtri, opacity=0.80]
    (1.300,1.960) -- (2.080,1.120) -- (1.300,1.120) -- cycle;
  \node[font=\bfseries] at (1.590,1.460) {$b$};
  \draw[dropline] (1.300,1.960) -- (1.300,0);
  \draw[dropline] (2.080,1.120) -- (2.080,0);
  \draw[dropline] (0,1.960) -- (1.300,1.960);
  \fill[black] (1.300,1.960) circle (1.8pt);
  \draw (1.300,0.06) -- (1.300,-0.06) node[below, font=\footnotesize] {$q_2^*$};
  \draw (2.080,0.06) -- (2.080,-0.06) node[below, font=\footnotesize] {$\bar{q}_2$};
  \draw (0.06,1.960) -- (-0.06,1.960)
    node[left, font=\footnotesize, shadowred] {$p^*$};
\end{scope}

%-- A3: Aggregate ------------------------------------------------
% Demand: (0,3.360)→(6.240,0) but draw to x≈5.5 for visual clarity
\begin{scope}[shift={(10.0,5.6)}]
  \draw[->] (0,0) -- (5.60,0) node[right] {$Q$};
  \draw[->] (0,0) -- (0,3.80) node[above] {$P$};
  \draw[demandblue, thick] (0,3.360) -- (5.200,0.560);
  \node[demandblue, anchor=south west] at (.9,2.7) {$D(p)$};
  % Supply: Q̄=1.0 → x=2.600
  \draw[supplygreen, line width=1.2pt] (2.600,0) -- (2.600,3.65);
  % Ceiling
  \draw[ceilgray, dashed, thin] (0,1.120) -- (5.35,1.120)
    node[right, ceilgray, font=\footnotesize] {$\bar{p}$};
  % Shadow price
  \draw[shadowred, dashed, thin] (0,1.960) -- (5.35,1.960)
    node[right, shadowred, font=\footnotesize] {$p^*$};
  % Harberger triangle c: bounded by demand curve above, p̄ below, Q̄ left
  \fill[dwtri, opacity=0.80]
    (2.600,1.960) -- (4.160,1.120) -- (2.600,1.120) -- cycle;
  \node[font=\bfseries] at (3.115,1.415) {$c$};
  % Drop lines
  \draw[dropline] (2.600,1.960) -- (2.600,0);
  \draw[dropline] (4.160,1.120) -- (4.160,0);
  \draw[dropline] (0,1.960) -- (2.600,1.960);
  \fill[black] (2.600,1.960) circle (1.8pt);
  % Ticks
  \draw (2.600,0.06) -- (2.600,-0.06)
    node[below, font=\footnotesize] {$\bar{Q}{=}S(\bar{p})$};
  \draw (4.160,0.06) -- (4.160,-0.06)
    node[below, font=\footnotesize] {$D(\bar{p})$};
  \draw (0.06,1.960) -- (-0.06,1.960)
    node[left, font=\footnotesize, shadowred] {$p^*$};
  % Annotation
  \node[align=center, font=\footnotesize\itshape] at (4.50,3.30)
    {$a+b=c$\\[1pt]{\normalfont(efficient)}};
\end{scope}

% ──────────────────────────────────────────────────────────────────
%  PANEL B  —  corner allocation
% ──────────────────────────────────────────────────────────────────
\node[font=\normalsize\bfseries, anchor=east] at (-0.3,1.75) {Panel~B};

%-- B1: Market 1 (fully served: q1=D1(p̄)=0.8, a=0) --------------
\begin{scope}[shift={(0.0,0.0)}]
  \draw[->] (0,0) -- (3.85,0) node[right] {$q_1$};
  \draw[->] (0,0) -- (0,3.80) node[above] {$P$};
  \draw[demandblue, thick] (0,3.360) -- (3.120,0);
  \node[demandblue, anchor=south west] at (1.10,2.10) {$D_1(p)$};
  % p̄=p1 in red (it is the shadow price for this fully-served market)
  \draw[shadowred, dashed, thin] (0,1.120) -- (3.65,1.120)
    node[right, shadowred, font=\footnotesize] {$\bar{p}{=}p_1$};
  % Drop lines
  \draw[dropline] (2.080,1.120) -- (2.080,0);
  \draw[dropline] (0,1.120) -- (2.080,1.120);
  % Dot
  \fill[black] (2.080,1.120) circle (1.8pt);
  % a=0 label upper-right of dot, outside demand curve
  \node[font=\bfseries, anchor=west] at (2.22,1.52) {$a=0$};
  % Ticks
  \draw (2.080,0.06) -- (2.080,-0.06)
    node[below, font=\footnotesize] {$q_1{=}\bar{q}_1$};
  \draw (0.06,1.120) -- (-0.06,1.120)
    node[left, font=\footnotesize, shadowred] {$\bar{p}{=}p_1$};
\end{scope}

%-- B2: Market 2 (residual: q2=0.2, large b) ---------------------
\begin{scope}[shift={(5.0,0.0)}]
  \draw[->] (0,0) -- (3.85,0) node[right] {$q_2$};
  \draw[->] (0,0) -- (0,3.80) node[above] {$P$};
  \draw[demandblue, thick] (0,3.360) -- (3.120,0);
  \node[demandblue, anchor=south west] at (1.10,2.10) {$D_2(p)$};
  \draw[ceilgray, dashed, thin] (0,1.120) -- (3.65,1.120)
    node[right, ceilgray, font=\footnotesize] {$\bar{p}$};
  % Shadow price p2=1.0
  \draw[shadowred, dashed, thin] (0,2.800) -- (3.65,2.800)
    node[right, shadowred, font=\footnotesize] {$p_2$};
  % Large DWL triangle b: (0.520,2.800)–(2.080,1.120)–(0.520,1.120)
  \fill[dwtri, opacity=0.80]
    (0.520,2.800) -- (2.080,1.120) -- (0.520,1.120) -- cycle;
  \node[font=\bfseries] at (1.020,1.870) {$b$};
  % Drop lines
  \draw[dropline] (0.520,2.800) -- (0.520,0);
  \draw[dropline] (2.080,1.120) -- (2.080,0);
  \draw[dropline] (0,2.800) -- (0.520,2.800);
  % Dot
  \fill[black] (0.520,2.800) circle (1.8pt);
  % Ticks
  \draw (0.520,0.06) -- (0.520,-0.06) node[below, font=\footnotesize] {$q_2$};
  \draw (2.080,0.06) -- (2.080,-0.06) node[below, font=\footnotesize] {$\bar{q}_2$};
  \draw (0.06,2.800) -- (-0.06,2.800)
    node[left, font=\footnotesize, shadowred] {$p_2$};
\end{scope}

%-- B3: Aggregate (Harberger c unchanged; a+b > c) ---------------
\begin{scope}[shift={(10.0,0.0)}]
  \draw[->] (0,0) -- (5.60,0) node[right] {$Q$};
  \draw[->] (0,0) -- (0,3.80) node[above] {$P$};
  \draw[demandblue, thick] (0,3.360) -- (5.200,0.560);
  \node[demandblue, anchor=south west] at (.9,2.7) {$D(p)$};
  \draw[supplygreen, line width=1.2pt] (2.600,0) -- (2.600,3.65);
  \draw[ceilgray, dashed, thin] (0,1.120) -- (5.35,1.120)
    node[right, ceilgray, font=\footnotesize] {$\bar{p}$};
  % Shadow price (faint — p* exists but shadow prices diverge in submarkets)
  \draw[shadowred, dashed, thin] (0,1.960) -- (5.35,1.960)
    node[right, shadowred, font=\footnotesize] {$p^*$};
  % Harberger triangle c: bounded by demand curve above, p̄ below, Q̄ left
  \fill[dwtri, opacity=0.80]
    (2.600,1.960) -- (4.160,1.120) -- (2.600,1.120) -- cycle;
  \node[font=\bfseries] at (3.115,1.415) {$c$};
  % Drop lines
  \draw[dropline] (2.600,1.960) -- (2.600,0);
  \draw[dropline] (4.160,1.120) -- (4.160,0);
  \draw[dropline] (0,1.960) -- (2.600,1.960);
  \fill[black] (2.600,1.960) circle (1.8pt);
  \draw (2.600,0.06) -- (2.600,-0.06)
    node[below, font=\footnotesize] {$\bar{Q}{=}S(\bar{p})$};
  \draw (4.160,0.06) -- (4.160,-0.06)
    node[below, font=\footnotesize] {$D(\bar{p})$};
  \draw (0.06,1.960) -- (-0.06,1.960)
    node[left, font=\footnotesize, shadowred] {$p^*$};
  % Annotation
  \node[align=left, font=\footnotesize] at (4.40,3.30)
    {$a+b>c$\\[1pt]{\itshape(misallocation}\\{\itshape adds DWL)}};
\end{scope}

\end{tikzpicture}
  \caption{\textbf{Two Feasible Allocations Under Price Controls.}  A price ceiling is at $\bar{p}$, the quantity supplied is \(\bar Q = S(\bar{p})\). Panel~A:  Supply is allocated efficiently and shadow prices equalize at $p^*$ across submarkets and losses sum to the aggregate Harberger triangle; $a + b = c$. Panel~B: An equally feasible allocation. Market~1 is fully served at $\bar{p}$. Market~2 receives only the residual. Shadow prices diverge and $a + b > c$.}
  \label{fig:misallocation_two_market}
\end{figure}

Corner allocations are feasible but why should we expect corner allocations? Figure~\ref{fig:corners} shows the geometry. The constraint $q_1 + q_2 = \bar Q$ defines a downward-sloping line; the upper bounds $q_i \leq \bar q_i$ truncate it to a line segment. The endpoints $E_1$ and $E_2$ are corner solutions.

\begin{figure}[h!]
    \centering
    \includegraphics[width=0.6\linewidth]{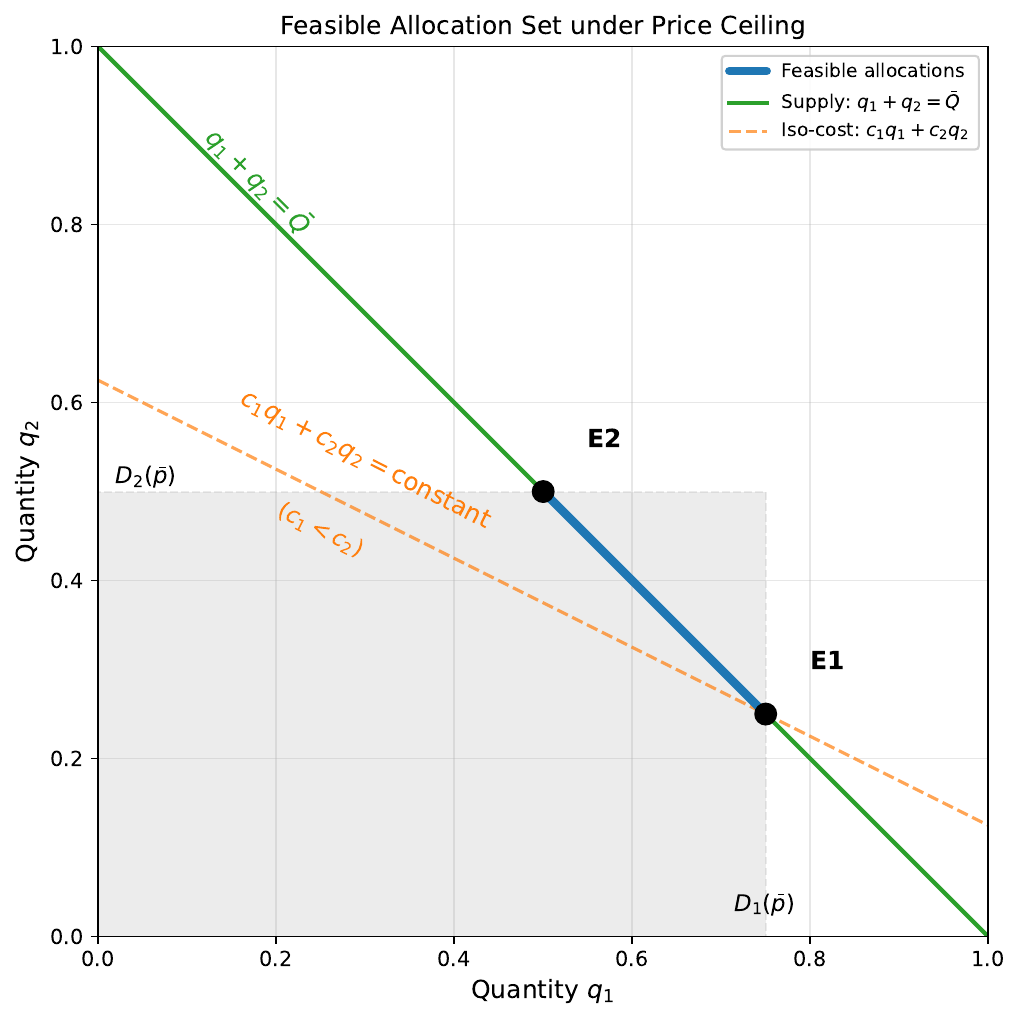}
    \caption{The feasible set is a line segment. Corners $E_1$ and $E_2$ are the only generic equilibria under price controls.}
    \label{fig:corners}
\end{figure}

Between $E_1$ and $E_2$ lies a continuum of feasible allocations. Without price controls, arbitrage would push the economy toward the efficient interior. But a binding ceiling eliminates this incentive: with money prices frozen at $\bar{p}$ everywhere, sellers are indifferent across all feasible allocations. As argued in the introduction, this indifference is resolved by cost minimization. Suppose transporting gasoline to Market~$i$ costs $c_i$ per unit. If $c_1 < c_2$, Market~1 is fully served; if $c_2 < c_1$, Market~2 is fully served. The continuum of feasible allocations collapses to a corner.

When $c_1$ and $c_2$ are nearly equal, the economy sits near a knife-edge. Any small change in relative costs can flip the entire allocation from $E_1$ to $E_2$. Because welfare varies along the feasible set, these discontinuous allocation jumps produce discontinuous welfare jumps. We formalize this in Section~\ref{sec:chaos}.

The term 'submarket' is deliberately flexible. Submarkets may be geographic--states, cities, countries--but can also span temporal segments, end uses, stages of production, and quality tiers. Likewise, 'transport cost' means any cost of moving product between segments. Set a ceiling at the "summer price" and you eliminate the incentive to 'transport' oil to winter via storage. Cap refined-product prices and minute differences in refining costs can redirect nearly all crude toward gasoline, starving diesel and jet fuel. Segments are manifold but the mechanism is the same: price controls induce seller indifference across allocations so any friction, however small, selects a corner; and small shifts in friction flip the system between corners. 

\section{General Model}
\label{sec:general-model}

We now generalize the two-market example to an arbitrary number of submarkets with general demand functions. The goal is to characterize, for any market structure and any binding price ceiling, the full range of welfare outcomes, from the efficient allocation that minimizes deadweight loss to the worst-case allocation that maximizes it.

Formally, a binding ceiling $\bar p$ fixes the aggregate quantity at $\bar Q=S\left(\bar{p}\right)$, where $S$ is the aggregate supply curve. Let $D$ denote aggregate demand. We assume that $\bar{p}$ is binding in the sense that $D\left(\bar{p}\right) > S\left(\bar{p}\right)$. There are $n\geq 1$ segmented submarkets. In market $i$, inverse demand $P_i(\cdot) \geq 0,$ is continuous and nonincreasing on $\left[0, D_i\left(\bar{p}\right)\right]$.

We formalize market segmentation via a decomposition that partitions aggregate demand into submarket demands. A decomposition specifies how many consumers in each submarket are willing to buy at any given price. The inverse demand $P_i(q_i)$ in submarket $i$, the price at which exactly $q_i$ units would be demanded, represents the marginal willingness to pay, or \emph{shadow price}, when quantity $q_i$ is allocated to that market.

\begin{definition}\label{def:decomp}
A collection of direct demand functions $\{D_i\}_{i=1}^n$ with $D_i$ continuous and nonincreasing in price is a \emph{decomposition} if $D(p)=\sum_i D_i(p)$ on $[0, \infty)$.
\end{definition}

The price ceiling $\bar{p}$ constrains feasible allocations in two ways. First, total quantity is constrained to equal supply: $\sum_i q_i = \bar Q \equiv S(\bar{p})$. Second, no submarket can receive more than its demand at the ceiling price, or the local price would fall below the ceiling. Define $\bar q_i \equiv D_i(\bar p)$ as the maximum quantity submarket $i$ can absorb at price $\bar{p}$. The feasible set is then:
\[
\mathcal{F} \coloneqq \left\{q \in\mathbb R_+^n\colon \ 0 \leq q_i \leq \bar{q}_i\ \forall \ i, \ \sum_{i=1}^n q_i=\bar Q\right\}\text{.}
\]

This is a convex polytope \citep{grunbaum_convex_2003}, meaning the intersection of a hyperplane (the adding-up constraint) with a box (the upper and lower bounds). Its extreme points, or vertices, correspond to allocations where all but at most one submarket is either completely served ($q_i = \bar q_i$) or completely starved ($q_i = 0$).

To measure welfare, we focus on consumer surplus. In submarket $i$, consumer surplus at allocation $q_i$, measured relative to the ceiling price $\bar p$, is the integral of willingness to pay above the price paid:
\[
W_i(q_i)\coloneqq\int_{0}^{q_i}\left[P_i(x)-\bar p\right]dx\text{.}
\]
Aggregating across submarkets gives total consumer surplus: $W(q) \coloneqq \sum_{i=1}^n W_i(q_i)$. For simplicity, we focus on welfare as consumer surplus from the demand side: how much surplus is lost when the fixed quantity $\bar{Q}$ is not allocated to its highest-value uses. If delivery costs $c_i$ differed enough across submarkets we would want to keep track of that as well. For example, the efficient benchmark in Proposition~\ref{prop:equalize} would equalize $P_i(q_i) - c_i$ rather than $P_i(q_i)$. Since the Chaos Theorem holds for arbitrarily small but distinct costs, the allocation result and the welfare measure are consistent in this limit.

It will sometimes be convenient to work with \emph{gross} surplus, or the area under the inverse demand curve without subtracting the price paid:
\[
\widetilde W_i(q_i)\coloneqq\int_{0}^{q_i}P_i(x) dx\text{,}
\qquad
\widetilde W(q) \coloneqq \sum_{i=1}^n \widetilde W_i(q_i)\text{.}
\]
Since $W(q) = \widetilde W(q) - \bar p \bar Q$ and total quantity $\bar Q$ is fixed on $\mathcal F$, maximizing consumer surplus $W$ is equivalent to maximizing gross surplus $\widetilde W$. This equivalence simplifies the analysis, so we can ignore the constant $\bar p \bar Q$ term and focus on gross surplus alone.

What allocation maximizes consumer surplus? Under standard conditions, surplus is maximized when the marginal value of the good is equalized across all submarkets that receive positive supply. If one market has a higher shadow price than another, we could increase total surplus by reallocating a unit from the low-value market to the high-value one. This intuition guides the following definition.
\begin{definition}\label{def:eq-shadow}
    An allocation $q \in \mathcal{F}$ \emph{equalizes shadow prices} if there exists $p^* \geq \bar{p}$ such that for every $i$, 
    \begin{enumerate}
        \item \label{def:eq-shadow1} $0 < q_i < \bar{q}_i$ $\Longrightarrow$ $P_i(q_i) = p^*$,
        \item \label{def:eq-shadow3} $q_i = \bar{q}_i$ $\Longrightarrow$ $P_i(q_i) \geq p^*$, and 
        \item \label{def:eq-shadow4} if $q_i = 0$, then $P_i(0) \leq p^*$.
    \end{enumerate}
\end{definition}
\begin{proposition}\label{prop:equalize}
$q^*$ maximizes $W$ over $\mathcal{F}$ if and only if $q^*$ equalizes shadow prices.
\end{proposition}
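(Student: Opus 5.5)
This is a concave maximization over a polytope, so I would prove the equivalence directly from concavity and KKT-type reasoning, without any differentiability beyond continuity of $P_i$. Since $P_i$ is continuous and nonincreasing on $[0,\bar q_i]$, each $\widetilde W_i$ is concave and continuously differentiable, with derivative $P_i(q_i)$ and one-sided derivatives at the endpoints. Hence $W$ is concave on the convex compact set $\mathcal F$. I work with gross surplus $\widetilde W$ throughout, as justified just before the proposition.

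\emph{Sufficiency.} Suppose $q^*$ equalizes shadow prices with level $p^*$, and take any $q\in\mathcal F$. Concavity of each $\widetilde W_i$ gives
\[
\widetilde W_i(q_i)-\widetilde W_i(q_i^*)\le P_i(q_i^*)(q_i-q_i^*)\le p^*(q_i-q_i^*).
\]
The second inequality is checked case by case using Definition~\ref{def:eq-shadow}:
\begin{itemize}
\item for interior $q_i^*$ it is an equality;
\item if $q_i^*=\bar q_i$, then $q_i-q_i^*\le 0$ and $P_i(q_i^*)\ge p^*$;
\item if $q_i^*=0$, then $q_i-q_i^*\ge 0$ and $P_i(0)\le p^*$.
\end{itemize}
Summing over $i$ and using $\sum_i q_i=\sum_i q_i^*=\bar Q$ gives $\widetilde W(q)\le \widetilde W(q^*)$.

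\emph{Necessity.} Suppose $q^*$ maximizes $W$. I first show that no profitable pairwise transfer exists. If $q_j^*>0$ and $q_k^*<\bar q_k$, then moving $\varepsilon$ from $j$ to $k$ is feasible, and its first-order effect is $\varepsilon\,[P_k(q_k^*)-P_j(q_j^*)]+o(\varepsilon)$. Optimality therefore forces $P_k(q_k^*)\le P_j(q_j^*)$ whenever $q_j^*>0$ and $q_k^*<\bar q_k$.

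Next I choose $p^*$. Let $A=\{j:q_j^*>0\}$ and $B=\{k:q_k^*<\bar q_k\}$. The pairwise condition says $\max_{k\in B}P_k(q_k^*)\le \min_{j\in A}P_j(q_j^*)$, so I pick $p^*$ between these two values. This $p^*$ satisfies the three conditions of Definition~\ref{def:eq-shadow}:
\begin{itemize}
\item an interior index lies in both $A$ and $B$, so $P_i(q_i^*)=p^*$;
\item a full index ($q_i^*=\bar q_i$) may lie in $A$ but not in $B$, so $P_i(q_i^*)\ge p^*$ provided $p^*$ is at most the minimum over $A$;
\item a starved index ($q_i^*=0$) lies in $B$, so $P_i(0)\le p^*$.
\end{itemize}

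The requirement $p^*\ge\bar p$ needs separate care. Because $D(\bar p)>\bar Q$, some $k$ has $q_k^*<\bar q_k$. For that $k$ we have $P_k(q_k^*)\ge \bar p$, since $q_k^*<D_k(\bar p)$ and demand is nonincreasing. So $\max_{k\in B}P_k(q_k^*)\ge\bar p$, and I take $p^*=\max_{k\in B}P_k(q_k^*)$, which is then at least $\bar p$. If $A$ is empty, then $\bar Q=0$; the conditions are then trivially met with this same choice of $p^*$.

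The main obstacle is this last step: making sure $p^*\ge\bar p$ holds, and handling flat segments and inverse-demand conventions where $P_i(\bar q_i)$ could exceed $\bar p$. I would treat these carefully using the definition $\bar q_i=D_i(\bar p)$ together with monotonicity.
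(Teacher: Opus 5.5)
Your proposal is correct and takes the same route as the paper's sketch: a pairwise $\varepsilon$-transfer argument for necessity and the concavity (supergradient) inequality for sufficiency. You also supply details the paper leaves implicit, namely the explicit choice $p^*=\max_{k\in B}P_k(q_k^*)$ and the check $p^*\ge\bar p$ via $\sum_i\bar q_i=D(\bar p)>\bar Q$ and monotonicity. The one case neither argument covers is a market with $\bar q_i=0$. There Definition~\ref{def:eq-shadow} demands both $P_i(0)\ge p^*$ and $P_i(0)\le p^*$, which is a degeneracy of the statement (such markets can simply be dropped), not a flaw in your reasoning.
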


\noindent The proof is standard: if shadow prices differ, shifting a unit from the low-shadow-price market to the high-shadow-price market increases total surplus by the price gap, contradicting optimality. Conversely, since $W$ is concave on the convex set $\mathcal{F}$, any allocation satisfying these first-order conditions is a global maximizer.% and appears in Appendix~\ref{prop:equalize:proof}.

 Letting $ q^*\in\mathcal F$ denote an equal-shadow-price allocation from Proposition~\ref{prop:equalize}, we now note the following decomposition.

\begin{definition}\label{def:misalloc}
Let $q^* \in \mathcal{F}$ denote an equal-shadow-price allocation from Proposition~\ref{prop:equalize}. The \emph{misallocation deadweight loss} of a feasible allocation $q \in \mathcal{F}$ is the welfare shortfall relative to $q^*$:
\[
\mathcal{L}_{Mis}(q)
\coloneqq W(q^*)-W(q)
= \sum_{i=1}^n \left[W_i(q_i^*)-W_i(q_i)\right].
\]
\end{definition}

This definition isolates the welfare loss from \emph{how} the fixed total $\bar Q$ is distributed across submarkets. It holds constant the aggregate quantity reduction (the source of the Harberger triangle) and asks how much additional surplus is destroyed by allocating them inefficiently?

To build intuition, consider the expression in integral form:
\[
\mathcal{L}_{Mis}(q) = \sum_{i=1}^n \int_{q_i}^{ q_i^*} \left[P_i(x)-\bar p\right] dx
= \sum_{i=1}^n \int_{q_i}^{ q_i^*} P_i(x) dx,
\]
where the second equality uses $\sum_i(q_i^*-q_i)=0$ to cancel the $\bar p$ terms. Each integral measures the value of the units that submarket $i$ receives under the efficient allocation but not under $q$ (if $q_i < q_i^*$), or the value of excess units that $i$ receives under $q$ but not under the efficient allocation (if $q_i > q_i^*$). Markets that are under-served lose high-value units; markets that are over-served gain low-value units. The difference is pure waste.

The misallocation deadweight loss has an important economic interpretation: it represents the value that would be created if arbitrageurs could freely reallocate the fixed supply $\bar Q$ across markets. Under normal market conditions, price differences incentivize such reallocation. But when prices are frozen at $\bar p$, these incentives vanish. The welfare loss is precisely the forgone gains from trade. They are the dollars left on the table when goods do not flow to their highest-value uses.

%%We now show that $\mathcal{L}_{Mis}(q)$ is exactly the \emph{gain} from reassigning units until shadow prices equalize; i.e., the value created by arbitrage when prices cannot do the job. 

\subsection{Worst-Case Allocations}

We have established that equalizing shadow prices across submarkets minimizes deadweight loss for any fixed aggregate quantity $\bar Q$. We now characterize the opposite extreme. Which allocation \emph{maximizes} deadweight loss?

Recall that
\[V(q) \equiv \mathcal{L}_{Mis}( q)=\underbrace{\sum_i\int_0^{q_i^*}P_i(x) dx}_{\text{constant in }q}-\sum_i\int_0^{q_i}P_i(x) dx=W(q^*)-W(q)\text{.}
\]
As we noted earlier, maximizing misallocation $V(q)$ is equivalent to minimizing the gross consumer surplus $\widetilde W(q) \coloneqq \sum_i\int_0^{q_i}P_i(x) dx$ over the feasible set $\mathcal{F}$.

Intuitively, the worst case concentrates supply in markets where it generates the least value. Markets with high choke prices receive nothing; markets with low shadow prices are flooded.

\begin{theorem}%[Worst-case allocation structure]
\label{thm:worst-vertex}
Assume each $P_i$ is continuous and weakly decreasing on $[0,\bar q_i]$ and $\mathcal{F} \neq \emptyset$. Then,
\begin{enumerate}[label=(\roman*)]
\item\label{worstitem1} $V$ is maximized at an extreme point of $\mathcal{F}$, where $q_i\in\{0,\bar q_i\}$ for all but at most one \(i\).

\item\label{worstitem2} For any worst-case allocation \(q^w\in \mathcal{F}\), there exists a cutoff
\(\lambda_{\mathrm{worst}}\in[0,\infty)\) such that for every market \(i\),
\[
\begin{aligned}
&\text{if } \quad 0<q_i^w<\bar q_i, \quad \quad &&\text{then } P_i(q_i^w) = \lambda_{\mathrm{worst}},\\
&\text{if } \quad q_i^w = 0,  \quad \quad     &&\text{then } P_i(0) \ge \lambda_{\mathrm{worst}},\\
&\text{if } \quad q_i^w = \bar q_i, \quad \quad &&\text{then } P_i(\bar q_i) \leq \lambda_{\mathrm{worst}}.
\end{aligned}
\]
\item\label{worstitem3} If $\bar Q\le\bar q_j$ for some $j$, then among single-market vertices $(\bar Q,0,\dots,0)$ the worst case is any
\[
j\in\operatorname*{arg\,min}_i\int_0^{\bar{Q}} P_i(x) dx ,
\]
that is, the market with the smallest average value up to $\bar Q$ (not necessarily the smallest marginal price $P_i(\bar Q)$).
\end{enumerate}
\end{theorem}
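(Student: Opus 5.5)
Since $V(q)=W(q^*)-W(q)$ and $W(q)=\widetilde W(q)-\bar p\bar Q$ on $\mathcal F$, the plan is to work throughout with the equivalent problem of minimizing gross surplus $\widetilde W(q)=\sum_i\widetilde W_i(q_i)$ over $\mathcal F$. The key structural fact is that each $\widetilde W_i(q_i)=\int_0^{q_i}P_i(x)\,dx$ is concave on $[0,\bar q_i]$: $P_i$ is continuous and weakly decreasing, so $\widetilde W_i$ is $C^1$ with nonincreasing derivative $P_i$. Hence $\widetilde W$ is a separable concave function on the compact convex polytope $\mathcal F$.

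For part \ref{worstitem1}, first note that a minimum exists because $\widetilde W$ is continuous and $\mathcal F\neq\emptyset$ is compact. A concave function on a polytope attains its minimum at an extreme point. To see this, write any minimizer $q$ as a convex combination $\sum_k\theta_k v^k$ of vertices. Concavity gives $\widetilde W(q)\ge\sum_k\theta_k\widetilde W(v^k)\ge\min_k\widetilde W(v^k)$, so some vertex is also a minimizer. It then remains to identify the vertices of $\mathcal F=\{q:0\le q_i\le\bar q_i,\ \sum q_i=\bar Q\}$. Suppose two coordinates $i\neq j$ are strictly interior. Then $q\pm\varepsilon(e_i-e_j)\in\mathcal F$ for small $\varepsilon>0$, so $q$ is not extreme. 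Hence every vertex has all but at most one coordinate in $\{0,\bar q_i\}$.

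For part \ref{worstitem2}, take any minimizer $q^w$ and derive first-order conditions from two-market exchanges. For markets $i,j$, moving $\varepsilon$ from $j$ to $i$ is feasible whenever $q_i^w<\bar q_i$ and $q_j^w>0$. Its first-order effect on $\widetilde W$ is $\varepsilon[P_i(q_i^w)-P_j(q_j^w)]$, with one-sided derivatives justified by continuity of $P_i$. Minimality therefore forces
\[
P_i(q_i^w)\ge P_j(q_j^w)\quad\text{whenever } q_i^w<\bar q_i \text{ and } q_j^w>0 .
\]
Now define $\lambda_{\mathrm{worst}}$ to lie between $\max\{P_j(q_j^w):q_j^w>0\}$ and $\min\{P_i(q_i^w):q_i^w<\bar q_i\}$. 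The displayed inequality makes this interval nonempty, and I would take $\lambda_{\mathrm{worst}}=\max_{j:\,q_j^w>0}P_j(q_j^w)$, or $0$ if that set is empty. Checking the three cases is then immediate. A market with $q_i^w=0$ lies in the second index set, so $P_i(0)\ge\lambda_{\mathrm{worst}}$. A market with $q_i^w=\bar q_i$ lies in the first, so $P_i(\bar q_i)\le\lambda_{\mathrm{worst}}$. An interior market lies in both sets, which pins $P_i(q_i^w)=\lambda_{\mathrm{worst}}$.

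The main obstacle I expect in part \ref{worstitem2} is the edge case where $q^w$ is not a vertex but has more than one interior market. The exchange argument already handles this, since every interior market has $P_i(q_i^w)$ both $\ge$ and $\le$ the cutoff, so a common value is forced. Another edge case is when the second index set is empty; then the lower-bound check for starved markets is vacuous.

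For part \ref{worstitem3}, restrict attention to single-market allocations $\bar Q e_j$, which are feasible exactly when $\bar Q\le\bar q_j$. Their gross surplus is $\widetilde W(\bar Q e_j)=\int_0^{\bar Q}P_j(x)\,dx$ because $\widetilde W_i(0)=0$ for the other markets. Minimizing $\widetilde W$, and hence maximizing $V$, over these allocations therefore selects the $\arg\min$ of this integral. To show this differs from minimizing $P_j(\bar Q)$, I would give a brief two-market example. Market $1$ has high $P_1$ near $0$ but a low value at $\bar Q$; market $2$ is flat at an intermediate level. Then market $1$ has the smaller marginal price but the larger integral.
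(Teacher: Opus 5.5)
Your proof is correct. Parts (i) and (iii) follow the paper's route. That route is concavity of $\widetilde W$, the fact that a concave function on a polytope is minimized at an extreme point, and direct evaluation of $\widetilde W(\bar Q e_j)=\int_0^{\bar Q}P_j(x)\,dx$. You also fill in what the paper leaves implicit: existence by compactness, the convex-combination argument, the perturbation characterization of vertices, and restricting the $\arg\min$ in (iii) to those $j$ with $\bar Q\le\bar q_j$.

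The real difference is in (ii). The paper invokes the KKT theorem for a $C^1$ objective under linear constraints, sets $\lambda_{\mathrm{worst}}=-\lambda$, and reads the three cases off stationarity and complementary slackness. That argument is short and would carry over unchanged to any polyhedral feasible set. Your pairwise-transfer argument instead exploits the special structure of $\mathcal F$, a box cut by one hyperplane. Its feasible directions at $q^w$ are generated by the transfers $e_i-e_j$ with $q_i^w<\bar q_i$ and $q_j^w>0$. So you prove the multiplier conditions from scratch, with no constraint qualification. You also exhibit the admissible cutoffs explicitly as the interval $\bigl[\max_{j:\,q_j^w>0}P_j(q_j^w),\ \min_{i:\,q_i^w<\bar q_i}P_i(q_i^w)\bigr]$. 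Your argument does not even need continuity: monotonicity alone gives $\int_{q_i^w}^{q_i^w+\varepsilon}P_i\le\varepsilon P_i(q_i^w)$ and $\int_{q_j^w-\varepsilon}^{q_j^w}P_j\ge\varepsilon P_j(q_j^w)$.

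Two small additions are worth making. First, say explicitly that your choice satisfies $\lambda_{\mathrm{worst}}\ge 0$ because $P_j\ge 0$, as the statement requires. Second, note that your case checks, like the paper's, tacitly assume $\bar q_i>0$. A market with $\bar q_i=0$ falls under both boundary cases at once and would have to satisfy $P_i(0)=\lambda_{\mathrm{worst}}$, so such markets should be discarded at the outset.
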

We defer the formal proof of this theorem to Appendix \ref{app:vertex-proof}. Part~\ref{worstitem2} provides necessary KKT-type conditions for any worst-case allocation; identifying the global worst among candidate vertices requires comparing average values as in part~\ref{worstitem3}. Overall, the worst outcome is a form of all-or-nothing. Markets with choke prices above the threshold receive nothing; those below it receive everything. This corner structure maximizes the shadow price gap, hence the welfare loss.

\section{The Chaos of Price Controls}
\label{sec:chaos}

The previous section characterized the range of possible welfare outcomes: from the best case (equalizing shadow prices) to the worst case (concentrating supply in low-value markets). But this leaves open a crucial question: \emph{which} allocation should we expect to observe in practice?

Formally, we show small perturbations in underlying parameters can cause discontinuous jumps in allocations and welfare under price controls. Once price variation is removed, equilibrium allocations are determined by arbitrarily small differences in costs or capacities.

\subsection{Setup}

Let $\theta\in \Theta\subset\mathbb{R}^m$ be a compact parameter space governing submarket inverse demands $P_i(x;\theta)$, unit costs $c_i(\theta)$, and total quantity $\bar{Q}(\theta)>0$. We assume (i) \textit{Demand regularity}: for each $i$ and every $\theta$, the map $x\mapsto P_i(x;\theta)$ is continuous and strictly decreasing; (ii) \textit{Parameter continuity}: for each $x$, the map $\theta\mapsto P_i(x;\theta)$ is continuous. We also assume (iii) \textit{Cost continuity}: submarket unit costs $c_i(\theta)$ and the total quantity $\bar{Q}(\theta)>0$ are continuous in $\theta$.

Without price controls, the unique welfare-maximizing allocation $q^*(\theta)$ varies continuously with $\theta$ by Berge's Maximum Theorem. Small changes in parameters produce small changes in allocations and welfare.

\subsection{Discontinuity Under Price Controls}

Now impose a binding ceiling $\bar p$. Define $\bar q_i(\theta) \coloneqq D_i(\bar p;\theta)$ and $\bar{Q}\left(\theta\right) \coloneqq S(\bar p;\theta)$. Under price controls, profit-maximizing suppliers minimize delivery costs since all units sell at the same price $\bar{p}$. Thus, the allocation is determined by cost minimization:
\[
\min_{q\in\mathcal{F}(\theta)}\ c(\theta)\cdot q,
\]
where $\mathcal{F}(\theta)=\{q\in\mathbb{R}^n_+ \colon \ \sum_{i=1}^n q_i=\bar{Q}\left(\theta\right),\ 0 \leq q_i \leq \bar q_i(\theta)\}$.

%Under generic conditions (distinct costs across markets, and total quantity not exactly equal to any subset sum of capacities), the optimizer has a stark structure: markets are filled in order of increasing cost, with at most one market partially filled. Crucially, when two markets have nearly equal costs, an infinitesimal parameter change can flip the entire allocation between them.

Our next proposition records the generic structure of the controlled allocation away from cost ties and boundaries. That is, for our next result, we make the following genericity and nondegeneracy assumptions:
\begin{assumption}\label{ass:52}
    \begin{enumerate}
\item \label{ass1a} $0<\bar{Q}\left(\theta\right)<\sum_{i=1}^n \bar q_i(\theta)$;
\item \label{ass1b} $\bar{Q}\left(\theta\right)\neq \sum_{i\in S}\bar q_i(\theta)$ for every $S\subset\{1,\dots,n\}$;
\item $c_i(\theta)\neq c_j(\theta)$ for $i\neq j$.
\end{enumerate}
\end{assumption}
We order costs as $c_{(1)}(\theta)<\cdots<c_{(n)}(\theta)$ and let $k$ denote the unique index with $\sum_{s=1}^{k-1} \bar q_{(s)}(\theta) <\bar{Q}\left(\theta\right)< \sum_{s=1}^k \bar q_{(s)}(\theta)$. We say that an allocation, \(q\), \textit{fills markets in increasing order of cost, with one partially served market} if 
\[
q_{\left(r\right)}=
\begin{cases}
\bar q_{\left(r\right)}(\theta),&r<k,\\
\bar Q(\theta)-\sum_{s=1}^{k-1}\bar q_{\left(s\right)}(\theta),&r=k,\\
0,&r>k.
\end{cases}
\]
\begin{proposition}\label{prop:greedy-allocation}
Let \(\theta\in\Theta\) be such that Assumption \ref{ass:52} holds. Then, the price-control optimizing allocation \(q\) is unique and fills markets in increasing order of cost, with one partially served market.
\end{proposition}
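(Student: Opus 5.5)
The approach is an exchange argument on the linear program $\min_{q\in\mathcal F(\theta)} c(\theta)\cdot q$, followed by a check that the greedy allocation is the only feasible point that survives every profitable exchange. First I would record that the greedy allocation $q^g$ defined in the statement is feasible. By Assumption~\ref{ass:52}\ref{ass1a} and \ref{ass1b}, the index $k$ exists and is unique, since the partial sums $\sum_{s\le r}\bar q_{(s)}$ are strictly increasing in $r$ whenever each $\bar q_{(s)}>0$.

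I should be careful about markets with $\bar q_i=0$, because then partial sums can tie. The clean move is to note that such a market forces $q_i=0$ in every feasible allocation, so I can drop it without loss. After dropping these markets, the strict inequalities defining $k$ give $0<q^g_{(k)}<\bar q_{(k)}$. Nonnegativity and the adding-up constraint $\sum_i q^g_i=\bar Q$ are then immediate. Existence of a minimizer follows because $\mathcal F(\theta)$ is compact and nonempty and the objective is linear.

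The core step is the exchange. Take any optimal $q$ and suppose there are ranks $r<t$ (so $c_{(r)}<c_{(t)}$, using distinct costs) with $q_{(r)}<\bar q_{(r)}$ and $q_{(t)}>0$. Moving $\varepsilon=\min\{\bar q_{(r)}-q_{(r)},\,q_{(t)}\}>0$ units from market $(t)$ to market $(r)$ preserves feasibility. It lowers the cost by $\varepsilon\,(c_{(t)}-c_{(r)})>0$, which contradicts optimality. Hence every optimal $q$ has the following threshold property: there is no pair $r<t$ with $(r)$ unsaturated and $(t)$ positive.

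It remains to show that the threshold property pins down $q^g$ uniquely. Let $m$ be the smallest rank with $q_{(m)}<\bar q_{(m)}$. The threshold property forces $q_{(t)}=0$ for all $t>m$, so that
\[
q_{(r)}=\bar q_{(r)} \text{ for } r<m,\qquad q_{(t)}=0 \text{ for } t>m,\qquad q_{(m)}=\bar Q-\sum_{s<m}\bar q_{(s)}\in[0,\bar q_{(m)}).
\]
This gives $\sum_{s<m}\bar q_{(s)}\le \bar Q<\sum_{s\le m}\bar q_{(s)}$. Assumption~\ref{ass:52}\ref{ass1b} excludes the equality case, so the left inequality is strict, and therefore $m=k$ by uniqueness of $k$. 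Hence $q=q^g$.

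The only delicate point I expect is this bookkeeping: keeping the inequalities strict, and handling zero capacities or the degenerate case where no market is unsaturated, which \ref{ass1a} rules out. The exchange argument itself is routine. Alternatively, one could cite the fact that a linear objective over a polytope with a unique optimum attains it at a vertex, but the direct exchange argument already gives uniqueness, so no such appeal is needed.
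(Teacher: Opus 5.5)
Your proof is correct, but it reaches the result by a different route from the paper's. The paper casts the problem as a linear program and argues through its dual. For a primal optimizer $q$ and a dual optimizer $(\lambda,\mu)$, complementary slackness forces $q_i=\bar q_i$ whenever $c_i<\lambda$ and $q_i=0$ whenever $c_i>\lambda$. Distinct costs leave at most one index with $c_i=\lambda$. The subset-sum condition (part \ref{ass1b} of Assumption~\ref{ass:52}) then shows that such an index exists and is strictly partial, so $\lambda=c_{(k)}$ and $q$ is the greedy allocation.

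Your pairwise exchange does the same work as that complementary-slackness step, but directly in the primal, and it buys self-containment: there is no appeal to strong duality, and since the exchange applies to any optimizer, uniqueness is explicit. It also shows more clearly what each hypothesis does. Your minimal-unsaturated-rank step already pins down $q$ using only distinct costs and part \ref{ass1a}, because $m$ is forced to be the smallest rank with $\sum_{s\le m}\bar q_{(s)}>\bar Q$. Part \ref{ass1b} is then needed only to make the cutoff market strictly partial (to exclude $q_{(m)}=0$). In the dual argument, by contrast, it is what fixes the multiplier at $\lambda=c_{(k)}$.

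The paper's route buys that multiplier itself: the cost of the marginal market as the shadow cost of the adding-up constraint. This parallels the KKT characterization in Theorem~\ref{thm:worst-vertex} and the normal-cone description of optimal vertices used for the Chaos Theorem.

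Your detour of discarding zero-capacity markets is harmless but unnecessary. Under parts \ref{ass1a} and \ref{ass1b}, $k$ is unique even with zero capacities, being the smallest $r$ with $\sum_{s\le r}\bar q_{(s)}>\bar Q$.
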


\subsection{Chaos Result}
\label{sec:Chaos Result}

We continue to maintain the first two parts (\ref{ass1a} and \ref{ass1b}) of Assumption \ref{ass:52}, but jettison the third in order to cleanly state our main result on the jumps in allocations that can be produced by price controls. Let two feasible allocations $v$ and $w$ differ only in which markets receive supply--$v$ fills certain low-cost markets while $w$ fills others. Specifically, suppose allocation $w$ gives market $s$ more units than $v$ does ($w_s > v_s$), while market $r$ receives fewer units ($w_r < v_r$), with the same total quantity. Define the welfare jump--which, here, equals the difference in gross surplus--between them as
\[
\Delta W = \int_{v_s}^{w_s} P_s(x) dx - \int_{w_r}^{v_r} P_r(x) dx,
\]
where $r$ and $s$ are the markets whose allocations differ.

The theorem formalizes the knife-edge intuition. When costs are nearly equal, infinitesimal parameter changes flip which market is served and which is starved, producing discrete welfare jumps from continuous parameter paths. 

Suppose that there is some parameter value \(\theta^* \in \Theta\) such that two adjacent vertex allocations (vertices of $\mathcal{F}$ differing in exactly two coordinates) $v$ and $w$ are both optimal. Let \(r\) and \(s\) be the two markets on which they differ so that \(w_i = v_i\) for all \(i \notin \left\{r,s\right\}\).% and \(\Delta \coloneqq w_s-v_s = v_r-w_r > 0\).
\begin{definition}\label{ass:local-cost-only}
    We say that there is a local cost-only perturbation at \(\theta^* \in \Theta\) if there exists an open neighborhood \(O\subset\mathbb{R}^n\) of \(c\left(\theta^*\right)\) and a smooth function \(\phi \colon O\to\Theta\) such that \(\phi\left(c\left(\theta^*\right)\right)=\theta^*\) and for every \(c' \in O\), the parameter value \(\theta \coloneqq \phi(c')\) satisfies \(c\left(\theta\right)=c'\), \(\mathcal{F}\left(\theta\right)=\mathcal{F}\left(\theta^*\right)\), \(P_r\left(\cdot;\theta\right)=P_r\left(\cdot;\theta^*\right)\) and \(P_s\left(\cdot;\theta\right)=P_s\left(\cdot;\theta^*\right)\).
\end{definition}
That is, near \(\theta^*\), nearby cost vectors can be generated by nearby parameter changes without changing the feasible set, and without changing the two demand curves that enter \(\Delta W\).
\begin{theorem}[Chaos]\label{thm:chaos-main}
Suppose costs are such that two adjacent vertex allocations \(v\) and \(w\) are both optimal at some parameter \(\theta^*\), and suppose there is a local cost-only perturbation at \(\theta^*\) for \(v\) and \(w\).\footnote{Our cost-only perturbation assumption is much stronger than necessary. For this theorem, it would suffice that, in every neighborhood of \(\theta^\ast\), there exists a smooth local parameter change that leaves the feasible set and the two inverse-demand curves entering \(\Delta W\) unchanged, and along which \(v\) is the unique optimizer on one side while \(w\) is the unique optimizer on the other. We impose the stronger assumption only for expositional convenience.} If \(\Delta W\neq 0\) (a generic condition),\footnote{Here, ``generic'' is conditional on both vertex allocations \(w\) and \(v\) being optimal. Given this, the case we exclude, \(\Delta W = 0\), defines a closed and nowhere-dense subset of the space of continuous and strictly decreasing pairs \((P_r, P_s)\), in the sup-norm topology.} then for every neighborhood \(U\) of \(\theta^*\) there exist smooth parameter paths through \(U\) along which
\begin{enumerate}
\item The optimal allocation jumps discontinuously between \(v\) and \(w\);
\item Welfare jumps by \(+\left|\Delta W\right|\) along one path and \(-\left|\Delta W\right|\) along another.
\end{enumerate}
In particular, welfare is not locally monotone in parameters; arbitrarily small perturbations can move welfare up or down by discrete amounts.
\end{theorem}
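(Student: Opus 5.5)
The plan is to work entirely inside the local cost-only perturbation of Definition~\ref{ass:local-cost-only}. Fix a neighborhood $U$ of $\theta^*$. Since $\phi$ is smooth (hence continuous) and $\phi(c(\theta^*))=\theta^*$, there is a ball $B\subset O$ around $c^*\coloneqq c(\theta^*)$ with $\phi(B)\subset U$. For $c'\in B$, the parameter $\phi(c')$ leaves the feasible set unchanged, $\mathcal F(\phi(c'))=\mathcal F(\theta^*)\eqqcolon\mathcal F$. The supplier's problem therefore becomes the linear program $\min_{q\in\mathcal F} c'\cdot q$ over a \emph{fixed} polytope, with $c'$ free to vary in $B$.

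\textbf{Step 1: separate $v$ and $w$ by a cost direction.} Because $v$ and $w$ differ only in coordinates $r,s$, with $w_s-v_s=v_r-w_r\eqqcolon\Delta>0$, we have
\[
c'\cdot(w-v)=\Delta\,(c'_s-c'_r).
\]
Optimality of both at $c^*$ gives $c^*_s=c^*_r$. The set of optimal vertices at $c^*$ is the face $\mathcal F^*$ of $\mathcal F$ minimizing $c^*$, and it contains the edge $[v,w]$. The goal is a direction $d$ such that $c^*+td$ makes $v$ the unique optimizer for small $t>0$ and $w$ the unique optimizer for small $t<0$. Take $d=e_s-e_r+\varepsilon g$, where $g$ is a linear functional that is exposed by the edge, i.e.\ one chosen so that $[v,w]$ is the unique minimizing face of $g$ over $\mathcal F^*$. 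Such a $g$ exists because edges of a polytope are exposed faces. For small $t>0$, minimizing $c^*+td$ over $\mathcal F$ reduces, by the standard lexicographic/perturbation argument for LPs, to minimizing $d$ over $\mathcal F^*$. Minimizing $g$ over $\mathcal F^*$ selects the edge $[v,w]$, and the $e_s-e_r$ term then selects $v$. This requires $\varepsilon$ small relative to the gaps of $g$ between the edge and the other vertices of $\mathcal F^*$, together with the fact that $(e_s-e_r)$ is constant on the edge up to this tie-break. More simply, for $t$ small relative to $\varepsilon$, the $g$ term dominates on the face and $e_s-e_r$ breaks the tie on the edge. Reversing the sign of the $e_s-e_r$ component selects $w$. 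Uniqueness follows because, along each path, a single vertex strictly minimizes. This step is the main obstacle: other vertices of $\mathcal F^*$ may tie at $c^*$, so a naive perturbation of $c_s-c_r$ alone might select a third vertex. If the exposed-functional construction proves awkward, I would instead perturb all costs outside $\{r,s\}$ so that $v$ and $w$ remain tied and every other vertex becomes strictly worse, and only then move $c_s-c_r$.

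\textbf{Step 2: build the paths and conclude.} Define $\gamma_\pm(t)\coloneqq\phi(c^*+t d_\pm)$ for $t\in(-\delta,\delta)$, where $d_+$ and $d_-$ are the two directions from Step 1 (or a single path through $\theta^*$ with $v$ on one side and $w$ on the other). These paths are smooth, lie in $U$, and pass through $\theta^*$. The optimal allocation is constant on each side and switches between $v$ and $w$ at $t=0$, which gives part 1. For part 2, welfare along the path is $W(q(\gamma(t));\gamma(t))$. Since $\mathcal F$, $P_r$ and $P_s$ are unchanged and $v$, $w$ agree off $\{r,s\}$, the difference in welfare between the two sides equals the gross-surplus difference, $\int_{v_s}^{w_s}P_s-\int_{w_r}^{v_r}P_r=\Delta W$. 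The $\bar p\bar Q$ term is constant because $\bar Q$ is fixed. If other markets' demands move with $\theta$, they contribute identically to both allocations and continuously in $t$, so the jump at $t=0$ is exactly $\pm\Delta W$.

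Traversing the path in the direction from $v$ to $w$ yields a jump of $\Delta W$, and traversing it in the reverse direction yields $-\Delta W$. One of these is $+|\Delta W|$ and the other $-|\Delta W|$, which is nonzero by hypothesis. Since $U$ was arbitrary, welfare is not locally monotone in parameters.
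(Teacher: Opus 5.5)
Your reduction to a linear program over the fixed polytope $\mathcal F$ and your Step~2 welfare accounting, including the sign reversal that gives $\pm\lvert\Delta W\rvert$, are correct and match the paper. The gap is in how you construct the switching paths.

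\textbf{The weights are reversed.} With $d=e_s-e_r+\varepsilon g$ and $\varepsilon$ small, the $e_s-e_r$ term dominates on $\mathcal F^*$, so the minimizer can be a third vertex. It is $g$ that is constant on the edge; $q_s-q_r$ changes by $2\Delta$ along it, which is what makes it usable as a tie-breaker.

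\textbf{Straight lines through $c^*$ fail.} Even with the weights corrected to $d=g+\varepsilon(e_s-e_r)$, the line $c^*+td$ selects $\operatorname{arg\,min}_{\mathcal F^*}d$ for small $t>0$ but $\operatorname{arg\,max}_{\mathcal F^*}d$ for small $t<0$. The maximizer of $d\approx g$ lies off the edge whenever $\mathcal F^*\neq[v,w]$. So your paths $\gamma_\pm$ switch between $v$ (or $w$) and a vertex off the edge, not between $v$ and $w$. In general no straight line through $\theta^*$ can work.

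Here is a concrete case. Take $n=3$, $\bar q=(1,1,1)$, $\bar Q=\tfrac32$ (the first two parts of Assumption~\ref{ass:52} hold), and $c^*=(1,1,1)$. Then $\mathcal F^*=\mathcal F$ is the hexagon whose vertices are the permutations of $(1,\tfrac12,0)$. The vertices $v=(1,\tfrac12,0)$ and $w=(1,0,\tfrac12)$ are adjacent, with $r=2$ and $s=3$. Because $\mathcal F$ is symmetric under $x\mapsto(1,1,1)-x$, any direction uniquely minimized at $v$ is uniquely maximized at $(0,\tfrac12,1)\neq w$. Your literal $d$ also fails here, since $q_3-q_2$ is minimized at $(\tfrac12,1,0)$.

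\textbf{The repair is your own fallback, and it is the paper's route.} The theorem only asks for paths inside $U$, not through $\theta^*$.

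Let $h$ expose the edge $[v,w]$ within $\mathcal F^*$. Then $h\cdot(w-v)=0$ forces $h_r=h_s$. Adding multiples of $(1,\dots,1)$ does not change the program on $\mathcal F$, so you may take $h$ supported off $\{r,s\}$; this justifies your fallback.

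For small $\tau>0$, $\hat c\coloneqq c^*+\tau h$ makes exactly $v$ and $w$ optimal, so $\hat c\in\operatorname{relint}\mathcal H_{vw}$ is a simple tie. Along $\hat c+t(e_s-e_r)$ the unique optimizer is $v$ for small $t>0$ and $w$ for small $t<0$. Composing with $\phi$ gives the required path in $U$.

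The paper reaches such an $\hat c$ by density of $\operatorname{relint}\mathcal H_{vw}$ in $\mathcal H_{vw}$ (Corollary~\ref{cor:trans-app}) and then crosses it linearly (Lemma~\ref{lem:trans-app}). If you want a path through $\theta^*$ itself, use a curved one such as $c^*+t^2h+t^3(e_s-e_r)$. For small $t\neq0$ this equals $c^*+t^2\left(h+t(e_s-e_r)\right)$, so your lexicographic argument applies, with $h$ dominating and $t(e_s-e_r)$ breaking the tie. With either repair, your Step~2 completes the proof.
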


\begin{corollary}
\label{cor:chaos-allocation}
For every neighborhood \(U\) of \(\theta^\ast\), there exist parameters \(\theta^-,\theta^+\in U\) such that \(v\) is an optimal allocation at \(\theta^-\) and \(w\) is an optimal allocation at \(\theta^+\).
\end{corollary}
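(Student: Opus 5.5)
We construct the two parameters directly from the local cost-only perturbation $\phi$ of Definition~\ref{ass:local-cost-only}; everything takes place on the fixed polytope $\mathcal{F}\coloneqq\mathcal{F}(\theta^*)$, and only the cost vector moves. Let $c^*\coloneqq c(\theta^*)$. Since $v$ and $w$ are both optimal at $\theta^*$ and $\sum_i v_i=\sum_i w_i$ with $w_i=v_i$ off $\{r,s\}$, we have
\[
0=c^*\cdot(w-v)=(c^*_s-c^*_r)\,\Delta,\qquad \Delta\coloneqq w_s-v_s=v_r-w_r>0 .
\]
Hence $c^*_r=c^*_s$.

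Define the perturbed cost vectors $c^{\pm}(\varepsilon)\coloneqq c^*\pm\varepsilon(e_s-e_r)$ and set $\theta^{\pm}(\varepsilon)\coloneqq\phi(c^{\pm}(\varepsilon))$. Given a neighborhood $U$ of $\theta^*$, continuity of $\phi$ at $c^*$ and openness of $O$ let us choose $\varepsilon>0$ small enough that $c^{\pm}(\varepsilon)\in O$ and $\theta^{\pm}(\varepsilon)\in U$. By the definition of a cost-only perturbation, $c(\theta^{\pm})=c^{\pm}$ and $\mathcal{F}(\theta^{\pm})=\mathcal{F}$.

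Next we check optimality. For any $q\in\mathcal{F}$ the costs decompose as
\[
c^{+}\cdot q=c^*\cdot q+\varepsilon(q_s-q_r).
\]
The first term is minimized over $\mathcal{F}$ by both $v$ and $w$. Among all $c^*$-optimal points, the perturbation term $\varepsilon(q_s-q_r)$ is smaller at $v$ than at $w$, because $v_s-v_r=(w_s-w_r)-2\Delta$. That comparison alone, however, only ranks $v$ against $w$; it does not show that $v$ is optimal for $c^{+}$ against all of $\mathcal{F}$. This is the main obstacle, since other vertices might also be $c^*$-optimal and might beat $v$ on the perturbation term.

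The plan for handling it uses the linear-programming fact that the set of $c^*$-optimal points is a face $G$ of $\mathcal{F}$, together with the observation that $v$ and $w$ are adjacent vertices of $G$. The minimizer of $c^{+}$ over $\mathcal{F}$, for $\varepsilon$ small, is the minimizer of $e_s-e_r$ over $G$; this holds because the vertices outside $G$ have strictly larger $c^*$-cost and there are finitely many vertices. If $G$ is exactly the edge $[v,w]$, we are done. If not, we weaken the target to what the corollary actually requires: we need $v$ to be \emph{an} optimal allocation at $\theta^-$ and $w$ to be one at $\theta^+$, not unique optima.

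To get that weaker conclusion, replace the direction $e_s-e_r$ by a cost direction $d$ chosen so that $v$ minimizes $d\cdot q$ over $G$. One can take $d$ in the normal cone of $v$ within $G$, and the opposite choice $-d'$ works for $w$. Since $O$ is open, all of $\mathbb{R}^n$ is available as perturbation directions, so such a $d$ can always be used. The simplest fallback is to invoke the footnote's weaker hypothesis directly: along a suitable direction, $v$ is optimal on one side and $w$ on the other.

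With that in hand, taking $\theta^-=\phi(c^*+\varepsilon d)$ and $\theta^+=\phi(c^*+\varepsilon d')$ gives the required pair of parameters in $U$. Finally, note that the same construction with $\varepsilon$ replaced by a parameter $t$ running through $0$ produces the smooth paths in Theorem~\ref{thm:chaos-main}, of which this corollary is the pointwise version.
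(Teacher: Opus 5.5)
Your argument is correct, but it takes a different and more roundabout route than the paper. The paper reads the corollary off Theorem~\ref{thm:chaos-main}. That theorem's proof, via Corollary~\ref{cor:trans-app}, first moves from $c(\theta^*)$ to a nearby simple tie $\hat c\in\operatorname{relint}\mathcal H_{vw}$, where the optimal face is exactly the edge $[v,w]$. It then crosses that tie along $\eta$ with $(w-v)\cdot\eta<0$, and $\theta^\pm$ are taken on opposite sides of the crossing, where $v$ and $w$ are in fact \emph{unique} optimizers.

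You instead perturb directly at $c^*$, so you must deal with the possibly larger optimal face $G$. Your worry about the direction $e_s-e_r$ is justified: with $c^*=0$, $\bar q=(1,1,1)$, $\bar Q=1.5$, $v=(1,0.5,0)$ and $w=(0.5,1,0)$, the point $(1,0,0.5)$ beats $v$. Your normal-cone direction repairs this. Two observations streamline what you did:
\begin{itemize}
\item Once you settle for non-unique optimality, the statement as written already holds with $\theta^-=\theta^+=\theta^*$, since $v$ and $w$ are optimal there by hypothesis. Your construction with $d=0$ is exactly this.
\item Your route gives the stronger unique-optimizer version without any reference to $G$ or an LP-perturbation argument. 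Take $d\in\operatorname{int}\mathcal N_v$, which is nonempty since $v$ is a vertex. Because $\mathcal N_v$ is a convex cone containing $c^*$, $c^*+\varepsilon d\in\operatorname{int}\mathcal N_v$ for every $\varepsilon>0$, so $v$ is the unique optimizer at $\phi(c^*+\varepsilon d)$. Small $\varepsilon$ is needed only to keep this point in $U$. The same works for $w$.
\end{itemize}
What the paper's simple-tie step buys is not this pointwise statement. It buys a single smooth path on which the optimizer switches directly from $v$ to $w$, and that is what pins the welfare jump at $\Delta W$.

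Two of your side remarks should be dropped:
\begin{itemize}
\item Invoking the footnote's hypothesis ``directly'' is circular. That hypothesis is not assumed, and deriving it from Definition~\ref{ass:local-cost-only} is exactly what Lemma~\ref{lem:trans-app} and Corollary~\ref{cor:trans-app} do.
\item Your closing claim, that running $t$ through $0$ in your construction produces the paths of Theorem~\ref{thm:chaos-main}, fails when $G\supsetneq[v,w]$. Along $c^*+td$, the optimizers for $t<0$ are the maximizers of $d$ over $G$, and these need not be $w$ or be unique.
\end{itemize}
Also, your ``$-d'$'' should read $d'$. None of this affects the corollary itself.
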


Equivalently, the allocation changes by reallocating \(w_s-v_s=v_r-w_r\) units from market \(r\) to market \(s\), with all other market quantities unchanged. Appendix~\ref{app:chaos-proofs} provides the proof, showing how arbitrarily small perturbations can switch the optimum from one vertex allocation to the other.

The Chaos Theorem goes beyond standard LP sensitivity. Yes, linear programs with non-unique optima exhibit discontinuous selections, but the economic content here is \emph{why} the supplier's problem becomes linear. Under market clearing, welfare maximization has a strictly concave objective (diminishing marginal valuations), guaranteeing continuous allocations via Berge's Maximum Theorem. Price controls eliminate the price variation that creates this curvature: with revenue equalized, profit becomes linear in quantity. The degeneracy is the generic outcome when the arbitrage mechanism is shut down.

This result explains the patchwork pattern of Figure~\ref{fig:rationing_map}. Under price controls, which markets receive supply depends on cost parameters that may shift unpredictably with weather, refinery maintenance, or regulatory discretion. A small change--say, a pipeline repair that marginally reduces delivery costs to New Jersey--can discontinuously redirect supply away from New York, even though both states' underlying demands are nearly identical.\footnote{The Chaos Theorem assumes agents are rational cost-minimizers, which pushes allocations to corner solutions. In practice, long-term contracts, sticky relationships, or regulatory inertia may prevent immediate jumps to vertices. However, over time, competitive pressures tend to push allocations toward these extreme points, making the chaos result increasingly relevant as the control regime persists. In discussions of the crisis in the Congressional Record it is clear that allocations often change dramatically as contract terms end.}

The theorem also explains why which markets are starved is often volatile over time. Week-to-week fluctuations in transportation costs, inventory positions, or bureaucratic priorities could flip allocations between equilibria. Contemporary reports document exactly this volatility.

More broadly, the result shows that price controls introduce a fundamental unpredictability into markets. When prices cannot signal value, allocations become hypersensitive to ''nuisance parameters'' that would be irrelevant under market clearing. This hypersensitivity, chaos in the economic sense, is not a market failure but a direct consequence of suppressing the price mechanism. Ironically, the chaos of price controls often fuels demands for even tighter controls and increasingly politicized and bureaucratized allocations.

How much smoothing would be required to bring misallocation toward Harberger magnitudes? If half of supply were locked into historical patterns via long-term contracts or regulatory inertia, the remaining half would still face corner incentives, roughly halving misallocation losses but still leaving them several times the triangle. Only near-complete insulation from cost-based reallocation would eliminate the mechanism entirely. The Chaos Theorem is a limiting case, but the qualitative prediction that misallocation losses dominate quantity-reduction losses is robust to substantial smoothing.

To see why, suppose delivering $q$ costs $c(\theta)\cdot q + \frac{\kappa}{2}\|q\|^2$ for some $\kappa > 0$. The quadratic term makes the supplier's objective strictly convex, so the optimum moves continuously in parameters. But continuously does not mean gently. A cost-only perturbation leaves the feasible set unchanged, and $\|q^\kappa(\theta) - q^\kappa(\theta')\| \leq \frac{1}{\kappa}\|c(\theta) - c(\theta')\|$. When $\kappa$ is small, a small cost wedge still generates a large reallocation. Theorem~\ref{thm:chaos-main} is the sharp discontinuous limit of a broader sensitivity that persists whenever cost curvature is modest.

How likely are near-ties in practice? In large markets, they are typical. Suppose submarket costs are drawn i.i.d. from a continuous distribution. Under price controls, the greedy cost-minimization rule serves markets in order of increasing cost and stops when supply runs out. The relevant margin is the cutoff between the last served market and the first excluded one. As the number of submarkets grows, order statistics compress and the cost gap at this cutoff shrinks. The economy sits close to a tie, even though exact ties occur with probability zero. Thus, even modest shocks are likely to move the economy across the cutoff and produce large reallocations.

\subsection{Simulation}
\label{sec:simulation}

We illustrate the Chaos Theorem with simulations of 100 gasoline markets arranged on a $10\times 10$ grid. Each city has a per-unit delivery cost drawn uniformly from $[0, 0.1]$. We use a demand specification with a finite choke price, so welfare is well-defined even when markets receive zero.\footnote{We use a Hill (sigmoidal) demand function.} Aggregate supply is fixed at $Q = 150$ units, and we impose a price control at 80\% of the market-clearing price.

Under price controls, the allocation follows a greedy rule: fill the lowest-cost city to capacity, then the next-lowest, until supply is exhausted. This produces a vertex allocation where served cities receive their full demand while roughly 30 cities receive nothing.

Figure~\ref{fig:HeatMap1} shows allocations under three random cost draws. The top row shows free-market allocations; the bottom row shows price-control allocations for the same costs. Three patterns emerge. First, moving across the two rows we see that free-market allocations are nearly identical across scenarios--small cost differences produce small allocation differences. Second, moving down a column we see that the shift to price controls creates a very large, discontinuous shift in the allocation with many markets receiving zero (yellow)--note this discontinuous shift is a direct consequence of vertex allocation under price controls. Third, reading across the bottom row we see that different cost draws produce radically different allocations. This is the Chaos Theorem: small changes in nuisance parameters dramatically change the final allocation.
\begin{figure}[ht!]
    \centering
    \includegraphics[width=1\linewidth]{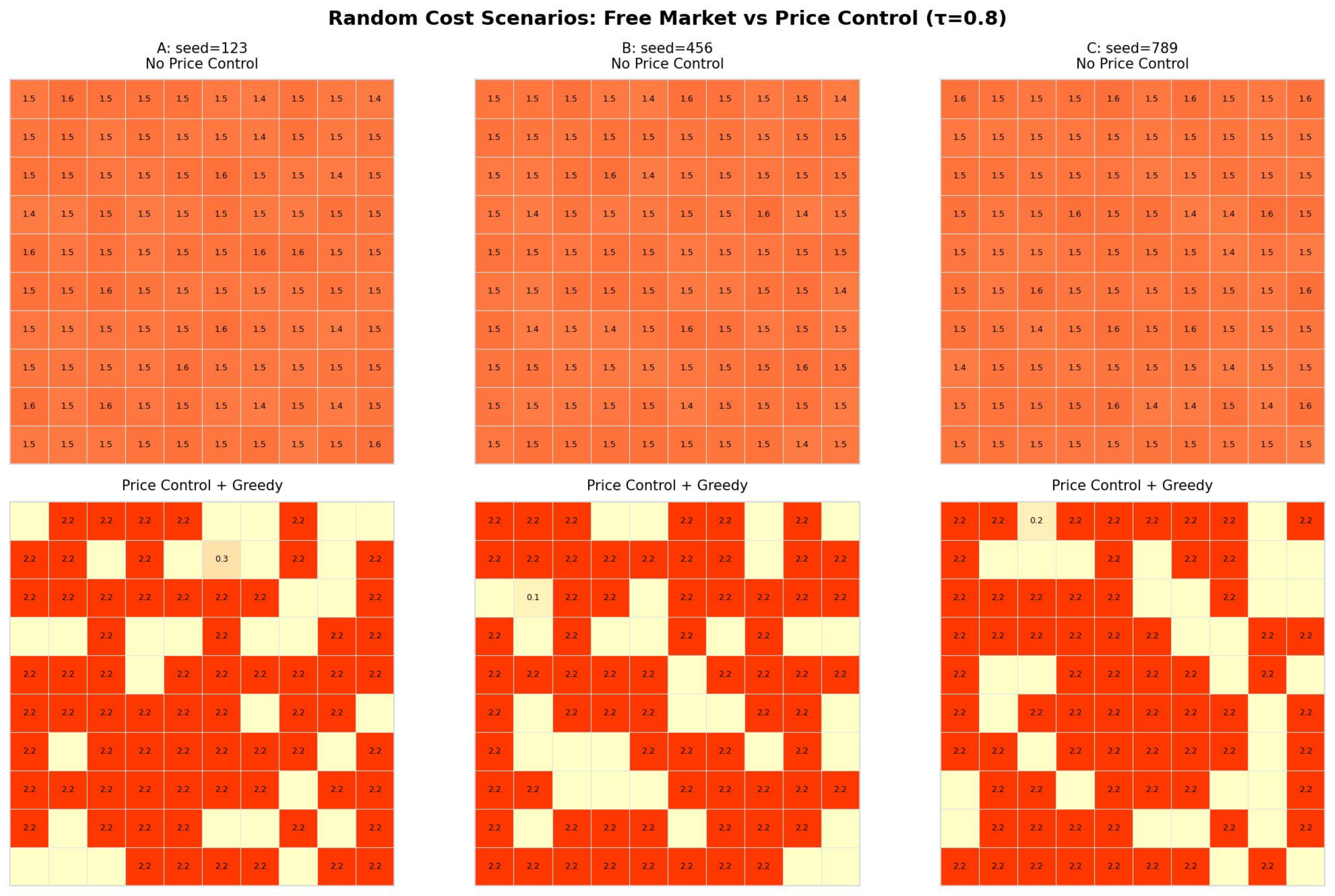}
    \caption{Allocation of a fixed supply ($\bar{Q} = 150$) across 100 cities under free markets (top) versus price controls (bottom). Each column is a different random draw of delivery costs. Under free markets, quantities vary smoothly. Under price controls, low-cost cities fill to capacity while approximately 30 cities go unserved.}
    \label{fig:HeatMap1}
\end{figure}

Cost structures may also have systematic components; locations close to refineries may have lower costs. These produce different spatial patterns but the same corner-solution logic.

\section{Robust Bounds on Misallocation}
\label{sec:robust-bounds}

The Chaos Theorem tells us that under price controls we should  expect corner allocations, far from market equilibria. Elasticity estimates identify demand locally near equilibrium, not at these depressed allocations. Rather than extrapolating via functional form assumption, we develop a partial identification approach. Given only the observed allocation, the ceiling price, bounds on local demand elasticities, and an anchored point on each demand curve (or a bounded interval for that anchor), we derive sharp bounds on misallocation. This is not sensitivity analysis within a parametric family. We do not commit to any functional form. Instead, we ask: across all inverse demand functions consistent with slope bounds, what are the largest and smallest possible welfare losses? The output is an identified interval, not a point estimate, and the interval covers welfare under all robustly possible allocations.

In market $i$, the unknown inverse demand $P_i\colon \left[0, q_i^{\max}\right]\to \mathbb{R}_+$ is continuous and nonincreasing in quantity $q$, where each bound $q_i^{\max}\in(0,\infty)$ is an exogenous upper bound on feasible quantities (representing physical capacity, infrastructure limits, or maximum historical consumption), distinct from the demand-derived $\bar{q}_i = D_i(\bar{p})$ used elsewhere. A binding ceiling $\bar{p} > 0$ is imposed. We observe the delivered allocation
$q^{\mathrm{obs}}=(q^{\mathrm{obs}}_1,\dots,q^{\mathrm{obs}}_n)$, where $$\bar Q \coloneqq \sum_{i=1}^n q^{\mathrm{obs}}_i \in \left[0,\sum_i q_i^{\max}\right]$$.
\begin{assumption}
    For all $i$, $P_i$ is nonincreasing and satisfies    
\begin{enumerate}
    \item \label{ass:a1} $P_i(q_i^{\mathrm{obs}})=p_{0,i}=\bar p-b_i \quad\text{with}\quad b_i\in[\underline b_i,\bar b_i].$
  \item \label{ass:a2} There exist $g_{i,L} < g_{i,U}<0$ with $g_{i,L}\le P'_i(q)\le g_{i,U}$ for a.e. $q\in\left[0, q_i^{\max}\right]$.
  \item \label{ass:a3} $P_i(0)\le M_i<\infty$
  \item \label{ass:a4} $P_i\in W^{1,\infty}\left(\left[0, q_i^{\max}\right]\right)$, the space of bounded Lipschitz-continuous functions on $\left[0, q_i^{\max}\right]$.
\end{enumerate}
\end{assumption}

The first three assumptions are the economically meaningful ones, whereas the last one is technical, guaranteeing that the various objects we construct are well-defined. Let $\mathcal P_i$ be the set of $P_i$ satisfying Assumption \ref{mainass} and $\mathcal P \coloneqq \times_{i=1}^n\mathcal P_i$. We assume $\mathcal P\neq\varnothing$.

Let \(q_i(\cdot)\) denote the left-continuous generalized inverse of inverse demand \(P_i(\cdot)\):
\[
q_i(p)\coloneqq
\begin{cases}
\inf\left\{x\in\left[0,q_i^{\max}\right]\colon P_i(x)\le p\right\},&\text{if the set is nonempty},\\
q_i^{\max},&\text{otherwise}.
\end{cases}
\]
Let \(\Phi(P)\) denote the misallocation loss:
\[
\Phi(P)\coloneqq\sum_{i=1}^n\int_{q_i^{\mathrm{obs}}}^{q_i^{*}(P)}\left(P_i(x)-p^{*}(P)\right)d x,
\]
where \(q^*(P)\) is the equal-shadow-price split of total quantity \(\bar{Q}\) and \(p^*(P)\) is its Lagrange multiplier. Define
\[
\overline{\Phi} \coloneqq \sup_{P\in\mathcal{P}} \Phi(P)
\qquad\text{and}\qquad
\underline{\Phi} \coloneqq \inf_{P\in\mathcal{P}} \Phi(P).
\]
We sketch the result here, as the formal result is quite notation heavy. The fully detailed statement and its proof are given in the appendix.

Within each market, the logic follows \citet{kang_robustness_2025}: extremal welfare is achieved by demand curves at the slope bounds. Our contribution is recognizing that efficient allocation equalizes shadow prices across markets, so all markets share a common $p^*$. This condition, combined with the adding-up constraint, converts the infinite-dimensional problem over demand functions into a one-dimensional search over the scalar $p^*$.

\begin{proposition}[Informal]\label{prop:reductioninformal}Posit an interiority assumption. Then there exist extremizers \(P^\ast\in\mathcal{P}\) attaining \(\underline{\Phi}\) and \(\overline{\Phi}\) such that each \(P_i^\ast\) is continuous and piecewise affine, with \(P_i^{\ast\prime}(q)\in\left\{g_{i,L},g_{i,U}\right\}\) for a.e. \(q\). Moreover, on the quantity range that matters for the bound (between \(q_i^{\mathrm{obs}}\) and the equal-shadow allocation), each market can be taken to have at most one kink.

Computing \(\underline{\Phi}\) and \(\overline{\Phi}\) reduces to a one-dimensional search over a candidate common shadow price \(p\). For each fixed \(p\), the problem decomposes across markets except for the single adding-up constraint \(\sum_{i=1}^n q_i(p)=\bar{Q}\).
\end{proposition}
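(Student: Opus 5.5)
The plan is to fix the candidate common shadow price first and optimize over demand curves second. By Proposition~\ref{prop:equalize}, for any $P\in\mathcal P$ the equal-shadow-price allocation $q^*(P)$ is characterized by a scalar $p^*(P)$ with $q_i^*(P)=q_i(p^*(P))$ (interiority lets us drop the corner cases) and $\sum_i q_i(p^*)=\bar Q$. So I write
\[
\overline\Phi=\sup_{p}\ \sup\Bigl\{\Phi(P)\colon P\in\mathcal P,\ \textstyle\sum_i q_i(p)=\bar Q\Bigr\},
\]
and similarly for $\underline\Phi$. For fixed $p$ the objective separates across markets, since $\Phi(P)=\sum_i\phi_i(P_i;p)$ with $\phi_i(P_i;p)=\int_{q_i^{\mathrm{obs}}}^{q_i(p)}(P_i(x)-p)\,dx$. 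The only coupling is the adding-up constraint. I would parametrize each market by its target quantity $y_i=q_i(p)$ and let
\[
h_i(y_i;p)=\sup\bigl\{\phi_i(P_i;p)\colon P_i\in\mathcal P_i,\ P_i(y_i)=p\bigr\}.
\]
The problem then becomes $\sup_p\sup\{\sum_i h_i(y_i;p)\colon \sum_i y_i=\bar Q\}$. This is the claimed one-dimensional search in $p$ with a separable inner problem.

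The within-market step follows \citet{kang_robustness_2025}. Take $y_i>q_i^{\mathrm{obs}}$, so the market is underserved and $p<p_{0,i}$. The integrand is $P_i(x)-p$ on $[q_i^{\mathrm{obs}},y_i]$, and $P_i$ must pass through $(q_i^{\mathrm{obs}},p_{0,i})$ and $(y_i,p)$ with slopes in $[g_{i,L},g_{i,U}]$.
\begin{itemize}
\item To maximize the area, $P_i$ should stay as high as possible. It descends at the flattest slope $g_{i,U}$ from the anchor, then at the steepest slope $g_{i,L}$ to hit $(y_i,p)$.
\item To minimize the area, the order reverses: steep first, then flat.
\end{itemize}
Either way the extremal curve has exactly one kink on $[q_i^{\mathrm{obs}},y_i]$. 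The kink's location is pinned down by the two endpoint constraints, which is solvable precisely when $g_{i,L}(y_i-q_i^{\mathrm{obs}})\le p-p_{0,i}\le g_{i,U}(y_i-q_i^{\mathrm{obs}})$. The overserved case $y_i<q_i^{\mathrm{obs}}$ is symmetric.

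Outside this interval, I extend $P_i$ affinely with slope $g_{i,U}$ or $g_{i,L}$ to $[0,q_i^{\max}]$. The extension has to respect $P_i(0)\le M_i$ and $P_i\ge0$, and the interiority assumption is what guarantees such an extension exists. The extension does not affect $\phi_i$. The anchor $b_i$ ranges over an interval, so I would also optimize $b_i$ inside $h_i$. Since $h_i$ is monotone in the anchor height, this picks an endpoint $\underline b_i$ or $\bar b_i$.

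To show that this pointwise construction attains the bound, I argue pointwise: any admissible $P_i$ lies between the two one-kink envelopes, because the Lipschitz slope bounds integrate from both endpoints. Its integral is therefore sandwiched between theirs. Existence of extremizers then follows because the reduced problem is finite-dimensional. For fixed $p$ we optimize continuous $h_i$ over the compact feasible box for $y$ subject to $\sum_i y_i=\bar Q$, and $p$ ranges over a compact interval bounded by $\max_i M_i$. Alternatively, Arzelà–Ascoli applies directly on the equi-Lipschitz, uniformly bounded class $\mathcal P$, together with continuity of $\Phi$ in sup-norm under interiority.

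I expect the main obstacle to be the coupling between $p^*(P)$ and $P$. Reshaping $P_i$ to enlarge market $i$'s loss also changes $q_i(p)$ and hence $p^*$ itself. Conditioning on $p$ and on the $y_i$ is exactly what removes this circularity, but it requires checking that every feasible $(p,y)$ is realized by some $P\in\mathcal P$ whose equal-shadow price is exactly $p$. Strictly decreasing demand ($g_{i,U}<0$) gives uniqueness of $p^*$, which should make this check go through. If that verification is delicate, the first thing I would try is to show that the map $P\mapsto p^*(P)$ is continuous and that the set of pairs $(P,p^*(P))$ is closed.
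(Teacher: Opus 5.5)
Your proof is correct, and its architecture matches the paper's proof of Proposition~\ref{prop:reduction}:
\begin{itemize}
\item condition on the common shadow price $p$;
\item note that the problem then separates across markets except for $\sum_i q_i(p)=\bar Q$;
\item show that each market's extremal uses only the slopes $g_{i,L},g_{i,U}$, with one kink on the relevant range;
\item finish with a compact search over $p\in\mathcal I$.
\end{itemize}
For the last step you still need joint continuity of $h_i$ in $(y_i,p)$ and continuity of the constraint set in $p$, so that Berge gives continuity of $p\mapsto\sup_y\sum_i h_i(y_i;p)$. The paper does the same for $\Psi^\pm$.

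The real difference is the within-market step. You stay in quantity space and pin both endpoints $(q_i^{\mathrm{obs}},p_{0,i})$ and $(y_i,p)$. You then integrate the slope bounds from each end to get a pointwise two-sided envelope. This is elementary and gives attainment at once, but it leaves $h_i$ implicit.

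The paper instead uses the identity \eqref{eq:id_sum}, $\Phi(P)=\bar Q p^*-\sum_i q_i^{\mathrm{obs}}p_{0,i}-\sum_i\int_{p_{0,i}}^{p^*}q_i(s)\,ds$. For fixed $p$ this makes the objective linear in the inverse-quantity functions. The bathtub Lemma~\ref{lem:4} then prices a deviation of the endpoint $q_i(p)$ from the relevant envelope value $\ell_i(p)$ or $u_i(p)$ at exactly $\delta_i^2/(2d_i)$. In your notation, for the upper bound $h_i(\cdot;p)$ is a concave quadratic in $y_i$ with curvature $1/d_i$, maximized at that envelope endpoint; the lower bound is symmetric. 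This closed form is what turns the coupling step into an explicit weighted quadratic projection and delivers \eqref{eq:Phi_sup_corr}--\eqref{eq:Phi_inf_corr}. Your route reaches the same place only once $h_i$ is computed.

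Three smaller remarks.

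\begin{itemize}
\item The step you flag as the main obstacle is immediate. Interiority puts every $y_i\in[\ell_i(p),u_i(p)]$ in $(0,q_i^{\max})$. So $P_i(y_i)=p$ for all $i$, together with $\sum_i y_i=\bar Q$, is exactly the equal-shadow-price condition. The sufficiency half of Proposition~\ref{prop:equalize} plus strict monotonicity then gives $p^*(P)=p$. No continuity or closed-graph argument for $P\mapsto p^*(P)$ is needed.

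\item Your anchor aside is not right as stated, and it is not needed, since Proposition~\ref{prop:reduction} holds $p_{0,i}$ fixed. For fixed $(y_i,p)$, the slopes force $p_{0,i}$ into a window, e.g.\ $[p+|g_{i,U}|(y_i-q_i^{\mathrm{obs}}),\,p+|g_{i,L}|(y_i-q_i^{\mathrm{obs}})]$ when $y_i>q_i^{\mathrm{obs}}$. That window can cut $[\bar p-\bar b_i,\bar p-\underline b_i]$, so the optimal anchor can be interior. This is why Corollary~\ref{prop:anchors} carries its no-binding proviso.

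\item Interiority secures only the choke side of your extension: extend leftward with slope $g_{i,U}$, and the last inequality of Assumption~\ref{ass:pathA} gives $P_i(0)<M_i$. It does not by itself keep $P_i\ge 0$ out to $q_i^{\max}$. The paper's extension step in Theorem~\ref{thm:extremals} passes over this point as well.
\end{itemize}
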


Proposition~\ref{prop:reductioninformal} is the heart of this robust bounds approach. It turns the bound-finding problem over the huge class \(\mathcal{P}\) into a one-dimensional program over \(p \in \mathcal{I}\), and shows that the optimizer hits the slope endpoints everywhere except at kinks. The key simplification is that, conditional on \(p\), the objective is \textit{linear} in the inverse-quantity functions \(q_i(\cdot)\), so the optimum must lie at the pointwise boundary of the admissible set.

\begin{theorem}[Informal]\label{thm:extremalsinformal}
    Posit an interiority assumption. Then, both \(\overline{\Phi}\) and \(\underline{\Phi}\) are attained by some \(P^*\in\mathcal{P}\). Moreover, each \(P_i^*\) may be chosen continuous and piecewise affine on \(\left[0,q_i^{\max}\right]\), with a.e. derivative \(P_i^{* \prime}(q)\in\left\{g_{i,L},g_{i,U}\right\}\), and satisfying \(P_i^*(0)\le M_i\).
\end{theorem}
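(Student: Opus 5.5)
The approach is to prove attainment by compactness and then obtain the piecewise-affine form by replacing any optimizer with a bang-bang extremal that has the same objective value. Proposition~\ref{prop:reductioninformal} is the main input for the second half.

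\textbf{Attainment.} The class $\mathcal{P}_i$ consists of functions on $[0,q_i^{\max}]$ with the following properties:
\begin{itemize}
\item they are Lipschitz with constant $|g_{i,L}|$;
\item they satisfy $P_i(0)\le M_i$;
\item they are pinned at $P_i(q_i^{\mathrm{obs}})\in[\bar p-\bar b_i,\bar p-\underline b_i]$;
\item they satisfy the a.e.\ slope constraint $g_{i,L}\le P_i'\le g_{i,U}$.
\end{itemize}
These functions are uniformly bounded and equicontinuous, so Arzel\`a--Ascoli gives sup-norm compactness of the class without the slope constraint. The slope constraint is equivalent to $g_{i,L}(y-x)\le P_i(y)-P_i(x)\le g_{i,U}(y-x)$ for all $x<y$. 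This is preserved under uniform limits, so $\mathcal{P}$ is compact.

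I then show that $\Phi$ is continuous on $\mathcal{P}$. Since $g_{i,U}<0$, every $P_i$ is strictly decreasing, so $q_i(p)$ is continuous in $(P,p)$ and strictly decreasing in $p$ on the relevant range. The interiority assumption is what guarantees that the adding-up equation $\sum_i q_i(p)=\bar Q$ has a unique solution $p^*(P)$ in the interior of the admissible price range. By the implicit-function/monotonicity argument, $p^*(P)$ and hence $q^*(P)$ depend continuously on $P$. Then $\Phi(P)$ is a finite sum of integrals of continuous integrands over continuously varying limits, so it is continuous. Weierstrass then gives maximizers and minimizers.

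\textbf{Shape of the extremals.} Take an optimizer $P^\circ$ with shadow price $p^\circ=p^*(P^\circ)$. Following Proposition~\ref{prop:reductioninformal}, fix $p=p^\circ$. Each market's contribution is the area between $P_i$ and the level $p$ from $q_i^{\mathrm{obs}}$ to $q_i(p)$. Rewritten in price space, this is an integral of $q_i(\cdot)$ (relative to $q_i^{\mathrm{obs}}$) over prices between $p_{0,i}$ and $p$. It is therefore linear in the inverse function $q_i(\cdot)$, with the endpoint $q_i(p)$ tied to the others only through adding-up.

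Holding the values $q_i(p^\circ)$ fixed preserves adding-up, so each market can be optimized separately. The task in each market is to extremize the area subject to three conditions: the curve passes through $(q_i^{\mathrm{obs}},p_{0,i})$, it passes through the fixed point $(q_i(p^\circ),p^\circ)$, and its slopes lie in $[g_{i,L},g_{i,U}]$. This is a Kang--Vondr\'ak-type problem in the style of \citet{kang_robustness_2025}. The pointwise extremal curve between two fixed points with slope bounds is the upper or lower envelope formed from lines of slopes $g_{i,L}$ and $g_{i,U}$ through the two endpoints. That envelope is piecewise affine with one kink, and feasibility of $P^\circ$ guarantees the envelope connects the points.

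Because the curve passes through the same $(q_i(p^\circ),p^\circ)$, the equal-shadow allocation and $p^*$ are unchanged, and the objective weakly improves. Optimality of $P^\circ$ then forces equality, so the modified profile is also optimal.

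\textbf{Extension off the relevant range.} Outside the interval between $q_i^{\mathrm{obs}}$ and $q_i^*$, the values of $P_i$ do not enter $\Phi$. I extend using slope $g_{i,U}$ (the flattest slope) to the left of the interval and any endpoint slope to the right. Going left with the flattest slope minimizes $P_i(0)$, and feasibility of $P^\circ$ ensures that $P_i(0)$ under this extension is at most $P_i^\circ(0)\le M_i$. The resulting $P_i^*$ is continuous, piecewise affine, lies in $\mathcal{P}_i$, and has derivative in $\{g_{i,L},g_{i,U}\}$ a.e.

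\textbf{Main obstacle.} I expect the hardest step to be the continuity of $p^*(P)$, and hence of $\Phi$, together with verifying that the envelope replacement respects both the $M_i$ cap and the pin interval. This is exactly where the interiority assumption must be invoked, so that the solution $p^*$ stays away from the endpoints of the price range where $q_i(\cdot)$ saturates at $0$ or $q_i^{\max}$.
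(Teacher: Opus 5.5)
Your proof is correct, but it takes a different route from the paper's. The paper never argues compactness of $\mathcal{P}$ directly. It first proves Proposition~\ref{prop:reduction} by solving the inner problem at each fixed $p\in\mathcal{I}$ in price space. That step uses Lemma~\ref{lem:4}'s triangle penalty, the quadratic program $\Psi^{\pm}(p)$ that spreads the adding-up slack at $p$ over the active markets, and Claim~\ref{clm:active_set_deviations}. The paper then gets existence from continuity of the one-dimensional objective in \eqref{eq:Phi_sup_corr}--\eqref{eq:Phi_inf_corr} on the compact set $\mathcal{I}$ (via Lemma~\ref{lem:a}), and builds the bang-bang extremals explicitly from the optimal $\delta_i^*$.

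You instead get existence by the direct method: Arzel\`a--Ascoli on $\mathcal{P}$ plus continuity of $p^*(P)$. Interiority is what makes this work, since it forces every root of $\sum_i q_i(p)=\bar Q$ into $\mathcal{I}$, where all $q_i$ are interior and strictly decreasing. You then purify an arbitrary optimizer. Freezing both endpoints $(q_i^{\mathrm{obs}},p_{0,i})$ and $(q_i^*,p^\circ)$ leaves the equal-shadow split unchanged. The pointwise envelope of the slope-$g_{i,L}$ and slope-$g_{i,U}$ lines through those endpoints is itself admissible, so it weakly improves a per-market term that is monotone in $P_i$.

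Your route is shorter. It bypasses $\Psi^{\pm}$, the active sets, and the Berge argument for $\Psi^{\pm}$ entirely. Your left extension with slope $g_{i,U}$ also gives a cleaner check of $P_i^*(0)\le P_i^\circ(0)\le M_i$ than the paper's ``any monotone continuation.''

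What your argument does not deliver is the constructive content. It is silent on which $p$ is optimal and how the endpoint quantities are distributed (through the weights $d_i$). That is precisely the reduction the paper needs to compute the empirical bounds.

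One small fix: to the right of the relevant range, use slope $g_{i,U}$ rather than ``any endpoint slope.'' Then $P_i^*\ge P_i^\circ\ge 0$ there, and nonnegativity of inverse demand is preserved.
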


\noindent The proof appears in Appendix~\ref{app:extremals-proofs}.

\begin{corollary}\label{prop:anchors}
Suppose we also optimize over $p_{0,i}\in\left[\bar p-\bar b_i, \bar p-\underline b_i\right]$. If the candidate shadow price is at least as large as the price implied by the observed quantity in market $i$; \textit{viz.}, $p \geq P_i(q_i^{\mathrm{obs}})$ and none of the upper or lower bounds on quantities bind over the relevant price range, the upper bound is produced by $b_i = \bar{b}_i$ and the lower bound by $b_i = \underline{b}_i$.
\end{corollary}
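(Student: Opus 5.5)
The plan is to fix the candidate common shadow price $p$ and study how the misallocation objective for market $i$ depends on the anchor $p_{0,i}=\bar p-b_i$, holding the other markets' contributions fixed. By Proposition~\ref{prop:reductioninformal} and Theorem~\ref{thm:extremalsinformal}, for fixed $p$ the objective separates across markets apart from the adding-up constraint. Market $i$ contributes
\[
\Phi_i(p)=\int_{q_i^{\mathrm{obs}}}^{q_i(p)}\bigl(P_i(x)-p\bigr)\,dx .
\]
When $p\ge P_i(q_i^{\mathrm{obs}})=p_{0,i}$, the efficient quantity satisfies $q_i(p)\le q_i^{\mathrm{obs}}$, so market $i$ is over-served. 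We can then rewrite the contribution as
\[
\Phi_i(p)=\int_{q_i(p)}^{q_i^{\mathrm{obs}}}\bigl(p-P_i(x)\bigr)\,dx
=\int_{p_{0,i}}^{p}\bigl(q_i^{\mathrm{obs}}-q_i(y)\bigr)\,dy ,
\]
which is the area between the price level $p$ and the inverse demand curve over the over-served units. Because the quantity bounds do not bind on the relevant price range, $q_i(\cdot)$ on $[p_{0,i},p]$ is governed only by the slope constraints and the anchor.

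The key step is a monotone comparative-statics argument in $b_i$, carried out with $p$ fixed. Lowering $p_{0,i}$ (raising $b_i$) has two effects on the curve through $(q_i^{\mathrm{obs}},p_{0,i})$ with any admissible slope profile. First, it enlarges the integration interval $[p_{0,i},p]$. Second, it shifts the curve down, so at each $y$ the gap $q_i^{\mathrm{obs}}-q_i(y)$ weakly increases: reaching price $y$ from a lower anchor requires moving further left at slopes bounded in $[g_{i,L},g_{i,U}]$. Both effects push the integrand and the region up, so $\Phi_i$ is nondecreasing in $b_i$ for every admissible curve. Concretely, I would couple the curves by translating the anchored curve vertically by $-\delta$, which preserves the slope constraints when the bounds are slack. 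This gives $\Phi_i(p;b_i+\delta)\ge\Phi_i(p;b_i)$ pointwise over the admissible class, so it also holds for the sup and for the inf over curves. Taking the sup over curves and over $p$ (respectively the inf) then puts the upper bound at $b_i=\bar b_i$ and the lower bound at $b_i=\underline b_i$.

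The main obstacle is that $p$ is not free. It is pinned down by $\sum_j q_j(p)=\bar Q$, and changing $b_i$ shifts $q_i(\cdot)$ and hence the feasible $p$. I would handle this with the one-dimensional reduction of Proposition~\ref{prop:reductioninformal}: treat the bound as an outer optimization over $p$ together with the curves, where for each $p$ the curves in the other markets are adjusted to satisfy adding-up. The hypothesis that no quantity bound binds is what keeps those adjustments feasible after the vertical shift in market $i$. The argument would then show that the joint optimum is monotone in $b_i$ for each fixed admissible $p$, which suffices.

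I expect the delicate point to be the lower bound. There, lowering $b_i$ could in principle also relax adding-up in a way that lowers the infimum elsewhere. I would check this by verifying that, under the corollary's interiority hypothesis, the set of attainable $p$ does not depend on $b_i$: the compensating curves in the other markets can absorb any such change.
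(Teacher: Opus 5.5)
Your vertical-translation coupling is the right device, but the step that holds the candidate price $p$ fixed does not go through. Translating $P_i$ down by $\delta$ replaces $q_i(\cdot)$ by $q_i(\cdot+\delta)$. At the same $p$, market $i$ then absorbs only $q_i(p+\delta)<q_i(p)$, so $\sum_j q_j(p)=\bar Q$ fails. Your inequality $\Phi_i(p;b_i+\delta)\ge\Phi_i(p;b_i)$ therefore compares a configuration that clears at $p$ with one that does not.

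Restoring adding-up at that $p$ is not free. In \eqref{eq:Phi_sup_corr} the cost is the penalty $\Psi^+(p)$, which moves with $p_{0,i}$ through $\Delta^+(p)$ and through market $i$'s own box $u_i(p)-\ell_i(p)$. In fact the fixed-$p$ monotonicity you need is false in general:
\begin{itemize}
\item The explicit market-$i$ terms of \eqref{eq:Phi_sup_corr} equal $-q_i^{\mathrm{obs}}p+(p-p_{0,i})^2/(2|g_{i,U}|)$, and these do decrease in $p_{0,i}$.
\item Now suppose $\Delta^+(p)>0$ and market $i$'s deviation is saturated, so $q_i(p)=u_i(p)$. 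Let $h_i\coloneqq p-p_{0,i}$, and let $\lambda$ be the multiplier on the sum constraint in $\Psi^+$, i.e.\ the common value of $\delta_j/d_j$ among unsaturated markets. Saturation gives $\lambda\ge h_i$.
\item The derivative of the whole bracket in $h_i$ is then $(h_i-\lambda)/|g_{i,L}|\le 0$.
\end{itemize}
For a concrete case, take three markets with $g_{i,L}=-5$, $g_{i,U}=-2.5$, $q^{\mathrm{obs}}=(1,1,1)$, $p_0=(0.9,1.35,0.5)$ and $p=1$. Lowering $p_{0,1}$ to $0.89$ lowers the bracket from $0.0595$ to $0.0589$, even though $p\ge p_{0,1}$. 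Your closing claim also fails: $\mathcal I$ depends on $b_i$, because $\ell_i$ and $u_i$ do.

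The fix is to hold quantities fixed, not the price. Because $\sum_j(q_j^*-q_j^{\mathrm{obs}})=0$ and $q^*(P)$ maximizes gross surplus under adding-up,
\[
\Phi(P)=\max\Bigl\{\sum_j\int_{q_j^{\mathrm{obs}}}^{q_j}P_j(x)\,dx\colon \sum_j q_j=\bar Q,\ 0\le q_j\le q_j^{\max}\Bigr\}.
\]
Apply your translation $\tilde P_i=P_i-\delta$ and evaluate at the old maximizer $q^*(P)$, which still satisfies adding-up. This gives $\Phi(\tilde P)\ge\Phi(P)+\delta\bigl(q_i^{\mathrm{obs}}-q_i^*(P)\bigr)\ge\Phi(P)$, since $q_i^*(P)\le q_i^{\mathrm{obs}}$ exactly when $p^*(P)\ge p_{0,i}$. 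No compensating adjustment elsewhere is needed; the efficient split and its shadow price re-optimize on their own.

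Equivalently, $t\mapsto\Phi(P_i+t,P_{-i})$ is convex with derivative $q_i^*(t)-q_i^{\mathrm{obs}}$. This derivative is nondecreasing in $t$ and nonpositive while $i$ is over-served. That yields the upper bound at $\bar b_i$. It yields the lower bound at $\underline b_i$ provided $i$ is over-served at the higher anchor $\bar p-\underline b_i$; this is the precise form of the delicate point you flagged. The slack-bounds hypothesis is needed only to make vertical translation map admissible curves at one anchor onto admissible curves at another.
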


In all empirical robust-bound results below, we implement the interval-anchor extension of Corollary~\ref{prop:anchors}: $p_{0,i}$ is allowed to vary over its admissible interval.

Putting these results together, the computation proceeds as follows:
\begin{enumerate}
    \item For each market $i$, the slope bounds and anchor $(q_i^{\mathrm{obs}},  p_{0,i})$ pin inverse demand within a cone: any admissible $P_i$ must lie between the steepest and flattest lines through the anchor. Inverting this cone yields the quantity envelopes $\ell_i(p)$ and $u_i(p)$: the smallest and largest quantities market $i$ could absorb at shadow price $p$ under any admissible demand. When a choke cap $P_i(0) \le M_i$ binds, the upper envelope is trimmed further (Theorem~\ref{thm:extremalsinformal}).
    \item The set of candidate common shadow prices is $\mathcal{I} = \{p : \sum_i \ell_i(p) \le \bar{Q} \le \sum_i u_i(p)\}$, i.e.\ the prices at which per-market quantity intervals are jointly compatible with the aggregate constraint.
    \item For each candidate $p \in \mathcal{I}$, the conditional problem pushes each market to a boundary envelope, but the relevant endpoint depends on both the bound direction (upper versus lower) and whether $p$ lies above or below $p_{0,i}$; equivalently, each market has a boundary target $c_i(p)\in\{\ell_i(p),u_i(p)\}$.
    \item These boundary targets need not satisfy adding-up exactly, i.e.\ $\sum_i c_i(p)$ need not equal $\bar{Q}$. Feasibility is restored by solving a convex quadratic projection onto $\{x:\sum_i x_i=\bar Q,\ \ell_i(p)\le x_i\le u_i(p)\}$ that penalizes deviations from $c_i(p)$, with weights $\kappa_i = 1/g_{i,L} - 1/g_{i,U}$. Markets with wider slope uncertainty (larger $\kappa_i$) absorb more of this adjustment.
    \item The resulting value $\Phi(p)$ is optimized over $p \in \mathcal{I}$ to obtain $\overline{\Phi}$ (or $\underline{\Phi}$). When anchor prices are interval-valued, this becomes a joint optimization over $p$ and the anchor vector $p_0$ on the admissible hyper-rectangle \(\prod_i [p_{0,i}^{L},p_{0,i}^{U}]\) (Corollary~\ref{prop:anchors}).
\end{enumerate}

Theorem~\ref{thm:extremalsinformal} establishes that the bounds $[\underline\Phi, \overline\Phi]$ are \textit{sharp}: they are attained by demand functions in the permitted class. This sharpness distinguishes partial identification from sensitivity analysis. Sensitivity analysis with linear demand would vary the slope $b$ and report how misallocation changes, yielding the same numerical interval when the choke-price constraint is non-binding. With a binding $P(0) \le M$, the robust bounds can differ because the extremal piecewise-affine demands can kink to satisfy the choke while steeper linear extrapolations cannot. But sensitivity analysis cannot rule out the possibility that some nonlinear demand (convex, concave, or oscillating within the slope bounds) produces misallocation outside that interval.

The theorem closes this gap. By the fundamental theorem of calculus, any demand function with $P'(q) \in [g_L, g_U]$ must lie within a cone emanating from the observed point. Since misallocation is an integral of the demand curve, it inherits these bounds. The theorem confirms that the extreme misallocation is achieved at the boundary of this cone, by piecewise affine demand with slopes in $\{g_L, g_U\}$, not by some exotic function in the interior. The bounds from linear demand are, therefore, not conservative approximations, but are exact over the infinite-dimensional class of all Lipschitz functions satisfying the slope constraints.

\section{Illustrating Chaos and Misallocation: The 1973--74 Gasoline Crisis}
\label{sec:empirical}

In August of 1971, President Nixon froze all wages and prices in the United States. Many prices were unfrozen after the 90-day scheduled period, but petroleum remained subject to controls with various modifications until August of 1973 when Special Rule No. 1 established mandatory controls for petroleum specifically \footnote{See \citet{bradley_oil_1997} for a definitive account of the crisis.}. 

Price controls on oil and the sporadic shortages were thus already in place well before the Yom-Kippur war and the subsequent Arab Oil Embargo which began in October of 1973. The embargo, however, dramatically increased the market price of oil above the controlled price, transforming what had been moderate distortions into severe shortages. From October of 1973 to January of 1974 the price of oil tripled. Over the first shock, quantity consumed fell by about 9\% relative to trend \citep{yergin_prize_1991}.

Quickly following the embargo, the Emergency Petroleum Allocation Act of 1973 (EPAA) ratified President Nixon's earlier executive orders with legislated price controls and an allocation system based pro-rata on 1972 levels. If total supply fell to 90\% of 1972's volume, for example, each buyer would receive 90\% of their 1972 allocation. This basic system was then adjusted with numerous exceptions, prioritizing industries such as national defense, essential services, agriculture, and independent refiners, all under an overarching "fair and equitable" guideline \citep{bradley_oil_1997}.

We begin with a formal analysis of the most visible portion of the petroleum allocation under price controls. Namely, the allocation of gasoline across geographies. We then turn to an informal discussion of misallocation across time, sector and other segmentations.

\subsection{AAA Data}

We digitize station-level survey data collected by the AAA during February 1974. Figure \ref{fig:rationing_map} displays the cross-sectional variation in rationing (out-of-fuel closures + purchase limits). The mean out-of-fuel rate by state was 8.1\%, but this average masks striking heterogeneity. The survey reports that 10.1\% of stations were closed (out of fuel), 27.6\% were limiting purchases, and 62.3\% were operating normally. We combine this station-level classification with 1972 gasoline station counts and sales data from the U.S. Census of Retail Trade. Our sample includes 48 states: we exclude Alabama, Alaska, and DC.\footnote{Alabama and Alaska lacked AAA survey coverage.}

The AAA data are corroborated by extensive historical accounts of the misallocation chaos. In Maine, for example, heating oil was in such short supply that the state Office of Civil Defense directed each community to prepare emergency shelters \citep{turkel_2023}. In Maryland and Connecticut, lines for gasoline stretched up to five miles long. Yet, \textit{at the same time}, Texas, the Deep South, and the Great Plains were "virtually awash with gasoline" \citep{time_shortages_1974}. Disparities this stark do not occur in an integrated market.\footnote{We focus on the 1972-1974 crisis but the same patterns were in evidence in 1979 when an even smaller national quantity reduction ${\sim}3.5\%$ led at its height in June of 1979 to the closing of almost every gasoline station in New Jersey, Connecticut and New York City \citep{gupte_fuel_1979}.}

\subsection{Welfare Measures}

The Harberger deadweight loss measures the welfare cost of the aggregate quantity reduction, assuming the reduced supply is allocated efficiently:
\[\mathcal{L}_{Harb} = \widetilde{W}(q^{base}) - \widetilde{W}(q^*) - p^{base} \cdot (Q^{base} - \bar{Q})\]
Here $\widetilde{W}(q) \coloneqq \sum_i \int_0^{q_i} P_i(x)  dx$ denotes gross surplus (the area under demand), in contrast to the net surplus $W$ defined earlier. The final term subtracts baseline expenditure on the missing quantity at $p^{base}$, yielding the standard Harberger triangle benchmark. We express these welfare losses as percentages of \textit{baseline expenditure}: total consumer spending before the crisis, $p^{base} \times Q^{base}$. With our normalization ($p^{base} = 1$, $Q^{base} = 1$), baseline expenditure equals unity. This metric provides economic interpretation: a misallocation loss of 3.6\% means consumers lost welfare equivalent to 3.6\% of their pre-crisis gasoline spending. For our calibration, with a 9\% aggregate shortage ($\bar{Q} = 0.91$) and $\bar{p} = 0.8$, the Harberger triangle is approximately $\mathcal{L}_{Harb} \approx 2.0\%$ of baseline spending.

The misallocation loss $\mathcal{L}_{Mis} = W(q^*) - W(q^{obs})$ captures the additional welfare destruction from inefficient distribution of the reduced supply, as defined in Section~\ref{def:misalloc}. Our key statistic is the \textit{Misallocation Ratio}:
\[\mathcal{R} = \frac{\mathcal{L}_{Mis}}{\mathcal{L}_{Harb}}\]
A ratio $\mathcal{R} = 1$ means misallocation doubles the welfare cost relative to a standard Harberger analysis; $\mathcal{R} > 1$ means misallocation is the dominant source of loss. With efficient rationing, $\mathcal{R} =0.$

\subsection{Allocation}
\label{sec:allocation}

We aggregate stations into two markets: open (62.3\%) and closed/limiting (37.7\%).  ''Out of fuel'' is a point-in-time status from the AAA survey, not a period-average of $q=0$. Our two-market calibration aggregates closed/limiting stations into a single partially-served submarket. The key economic mechanism is that the controlled price $\bar{p}$ lies below the market-clearing price. At $\bar{p}$, quantity demanded exceeds quantity supplied. Open stations satisfy their customers; closed and limiting stations cannot.

Under a vertex allocation, open stations consume their full demand at the controlled price while closed stations receive the residual. We assume $\bar{p} = 0.8$, consistent with \citep{frech_welfare_1987}. With elasticity bounds $\varepsilon \in [0.2,0.4]$ \citep{hughes_evidence_2008}, baseline-anchored linear demand implies
\[
q_O \in [1.04, 1.08].
\]
Closed/limiting stations receive the residual
\[
q_C = \frac{\bar{Q} - (1-s)\cdot q_O}{s}
\quad\Rightarrow\quad
q_C \in [0.63, 0.70].
\]
For the baseline calibration, we use $\varepsilon=0.3$, which produces $q_O=1.06$ and $q_C\approx0.66$. This allocation satisfies the aggregate constraint and exhibits exactly the corner-solution structure predicted by the Chaos Theorem.

\subsection{Station-Level Robust Bounds}
\label{sec:station-robust}

Two features of our setting make partial identification essential. First, extreme allocations are where functional form does most of the work. Our welfare calculations integrate inverse demand over quantity ranges extending to $q \approx 0.66$ for closed/limiting stations. Elasticity estimates come from observed price-quantity variation around market equilibrium, identifying the slope of demand locally near $q = 1$, not at $q = 0.66$. To compute welfare at these allocations, we must extrapolate beyond where demand is identified. Linear extrapolation from an elasticity of 0.2 implies $P(0) = 6$; from 0.4, it implies $P(0) = 3.5$. CES demand diverges entirely as $q \to 0$, requiring an arbitrary choke-price normalization. In our calibration, this extrapolation uncertainty moves misallocation estimates by many percentage points of baseline spending (from roughly 2.3\% to 18.6\% in the pooled no-choke benchmark).

The theory predicts allocation dispersion, the data show dispersion, and dispersion is where functional form assumptions bite hardest. We do not want our welfare estimates hostage to how we extrapolate demand into regions no estimation has ever reached. The robust bounds approach sidesteps this problem. We impose only local, transparent restrictions (slope bounds from elasticity estimates and a finite choke price) and optimize over all inverse demands consistent with these assumptions.

The elasticity bounds $\varepsilon \in [0.2, 0.4]$ imply slope bounds $g \in [-5, -2.5]$.\footnote{Throughout, ''steep'' and ''flat'' refer to the inverse demand curve $P(Q)$. A steep inverse demand (large $|g|$, low elasticity) means price falls rapidly as quantity rises. This corresponds to a \emph{flat} demand curve $Q(P)$, where quantity responds little to price changes. The figures plot $P$ against $Q$, so visual steepness matches $|g|$.} To visualize how these bounds translate into welfare uncertainty, we aggregate stations into a pooled two-market benchmark (open versus closed/limiting) where the extremal demand curves can be drawn directly. Figure~\ref{fig:station_joint_extremals_nochoke} shows the demand curves consistent with the slope bounds and observed allocation when no choke constraint is imposed. Solid segments are pinned down by the data and slope restrictions; dashed segments are admissible extensions.

\begin{figure}[htbp]
    \centering
    \includegraphics[width=\textwidth]{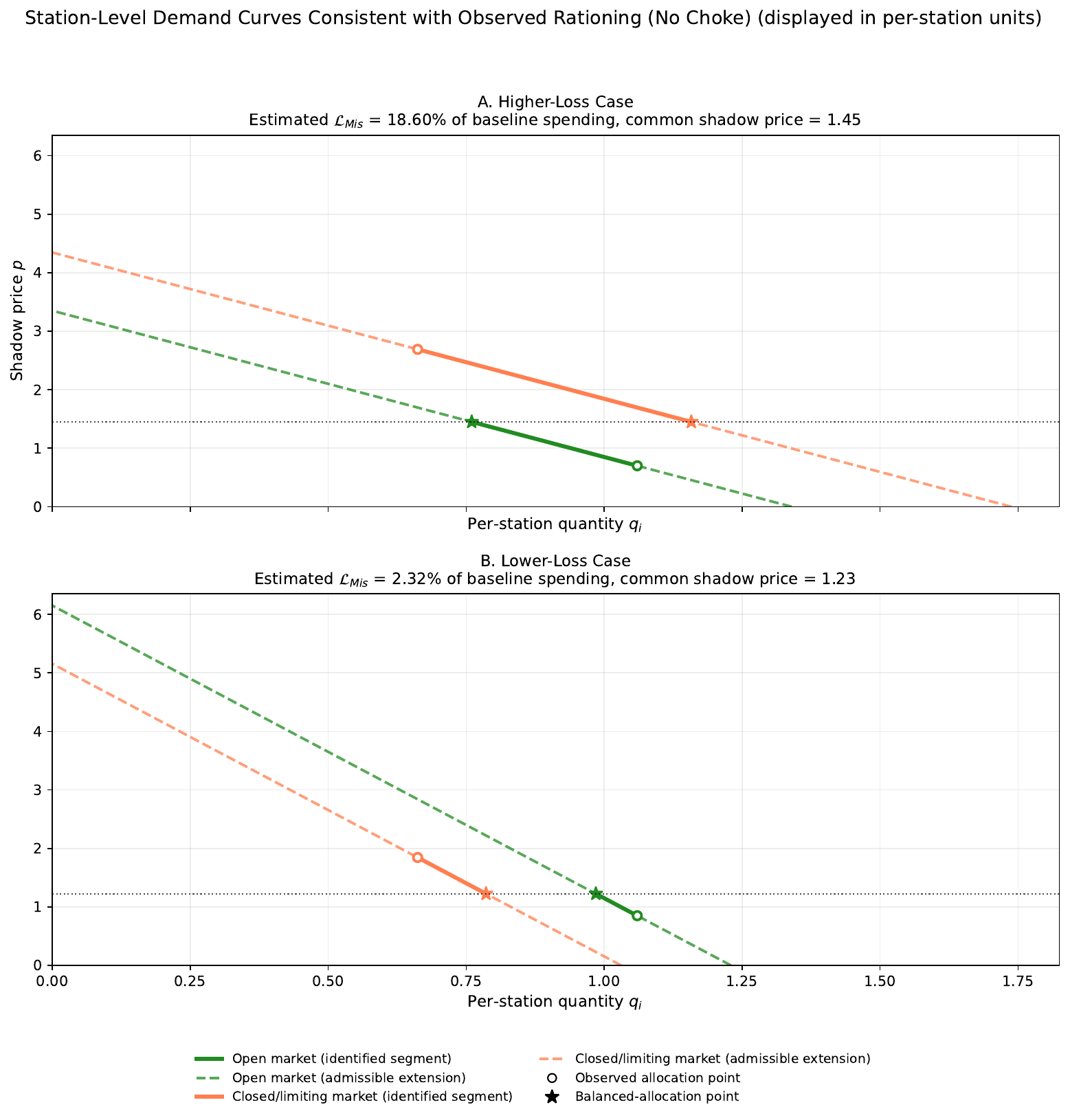}
    \caption{\textbf{Station-Level Demand Curves Without Choke.} The two panels show high-loss and low-loss demand configurations consistent with observed open and closed/limiting station quantities when no choke cap is imposed, using the baseline calibration $q_O=1.06$ and $q_C=0.66$. Solid segments are identified by the observed allocations and slope bounds; dashed segments are admissible extensions.}
    \label{fig:station_joint_extremals_nochoke}
\end{figure}

Figure~\ref{fig:station_joint_extremals_choke} adds a choke constraint $P(0)\le M=4$. The choke operates through two channels. First, a geometric kink: in the lower-loss panel, the steep branch would extrapolate to $P(0)=6>4$, so the admissible curve must bend to hit the choke cap. Second, an anchor-price restriction: even in the higher-loss panel, where both markets use the flat slope and extrapolate to $P(0)=3.5<4$ (no visible kink), the requirement $P(0)\le M$ narrows the set of admissible anchor prices $p_{0,i}$, trimming the closed/limiting upper shadow-price bound at observed quantity (about $2.60\rightarrow 2.34$) and lowering the upper welfare endpoint. We, therefore, present no-choke results as the headline interval and use the choke case as a disciplined robustness check.

\begin{figure}[htbp]
    \centering
    \includegraphics[width=\textwidth]{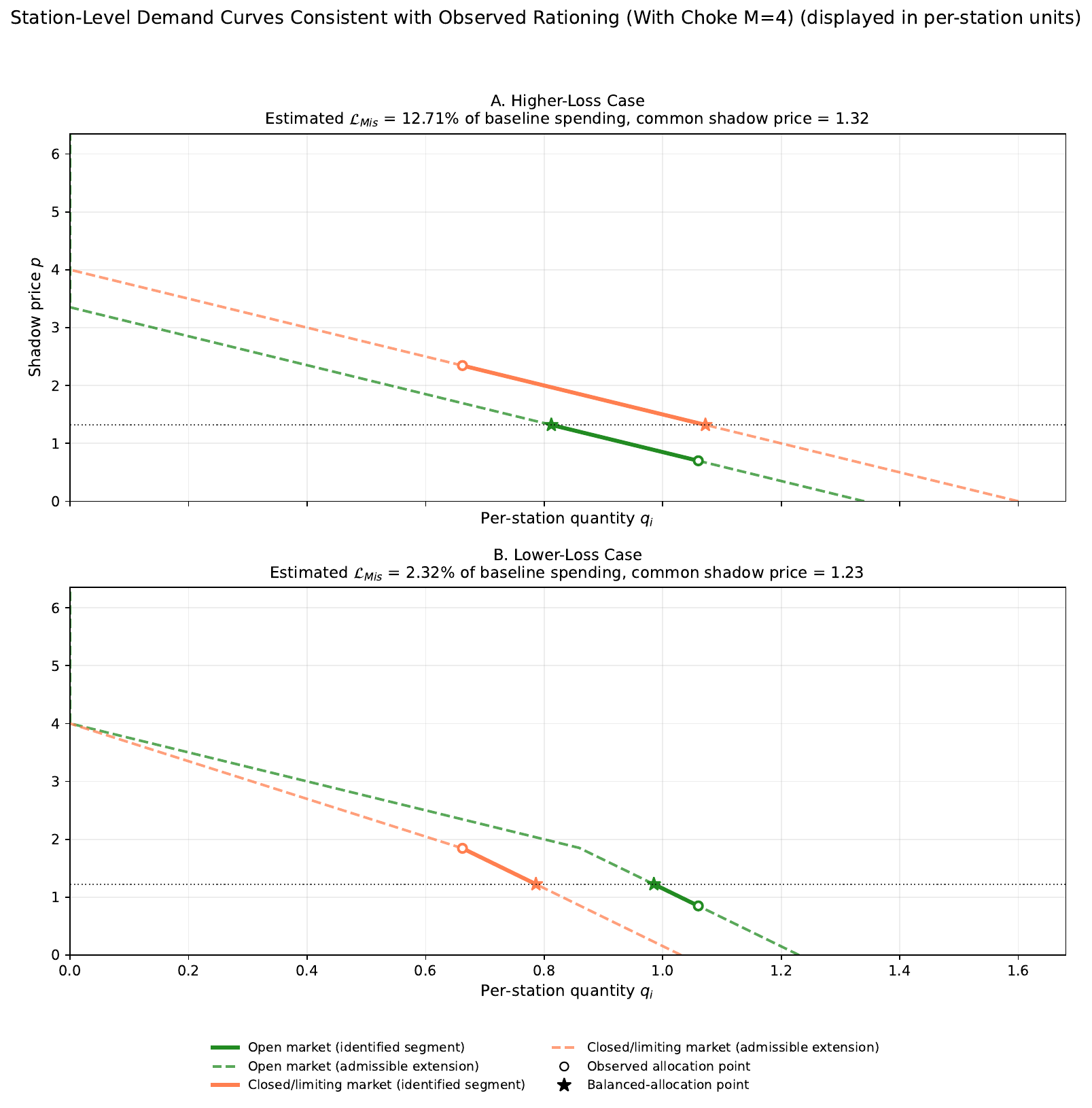}
    \caption{\textbf{Station-Level Demand Curves With Choke.} Same construction as Figure~\ref{fig:station_joint_extremals_nochoke}, using the same baseline calibration and now imposing $P(0)\le M=4$. The choke cap narrows admissible shadow-price ranges, which tightens the upper bound on misallocation losses.}
    \label{fig:station_joint_extremals_choke}
\end{figure}

The pooled two-market benchmark visualized above implies no-choke bounds of $\mathcal{R}\in[1.15, 9.18]$. This is a useful diagnostic object because the extremal demand curves can be visualized directly. Our main empirical specification is different: it disaggregates to the state-by-status level, treating open and non-open stations within each state as separate markets and imposing adding-up jointly across those cells. Each state's weight is proportional to its gasoline consumption (gallons), not its station count, so high-consumption states carry more weight than their station share alone would imply.

\subsection{State-Level Shadow Prices}
\label{sec:state-robust}

Figure~\ref{fig:rationing_map} showed substantial geographic variation in rationing severity. We can translate this variation into state-level shadow prices using the state-by-status robust model. For each state $i$ with rationing share $r_i$ (limiting plus out-of-fuel), the state-average shadow price is
\[
\bar p_{0,i}=(1-r_i) p_{0,i}^{open}+r_i p_{0,i}^{nonopen}.
\]
Figure~\ref{fig:shadow_price_map} maps these state averages for the upper-bound (top) and lower-bound (bottom) configurations. States with high rationing shares (Connecticut, Massachusetts) load heavily on the non-open shadow price and appear with higher state-average shadow prices; low-rationing states (Idaho, Montana, Wyoming) load mostly on the open shadow price and appear lower.

\begin{figure}[htbp]
    \centering
    \includegraphics[width=\textwidth]{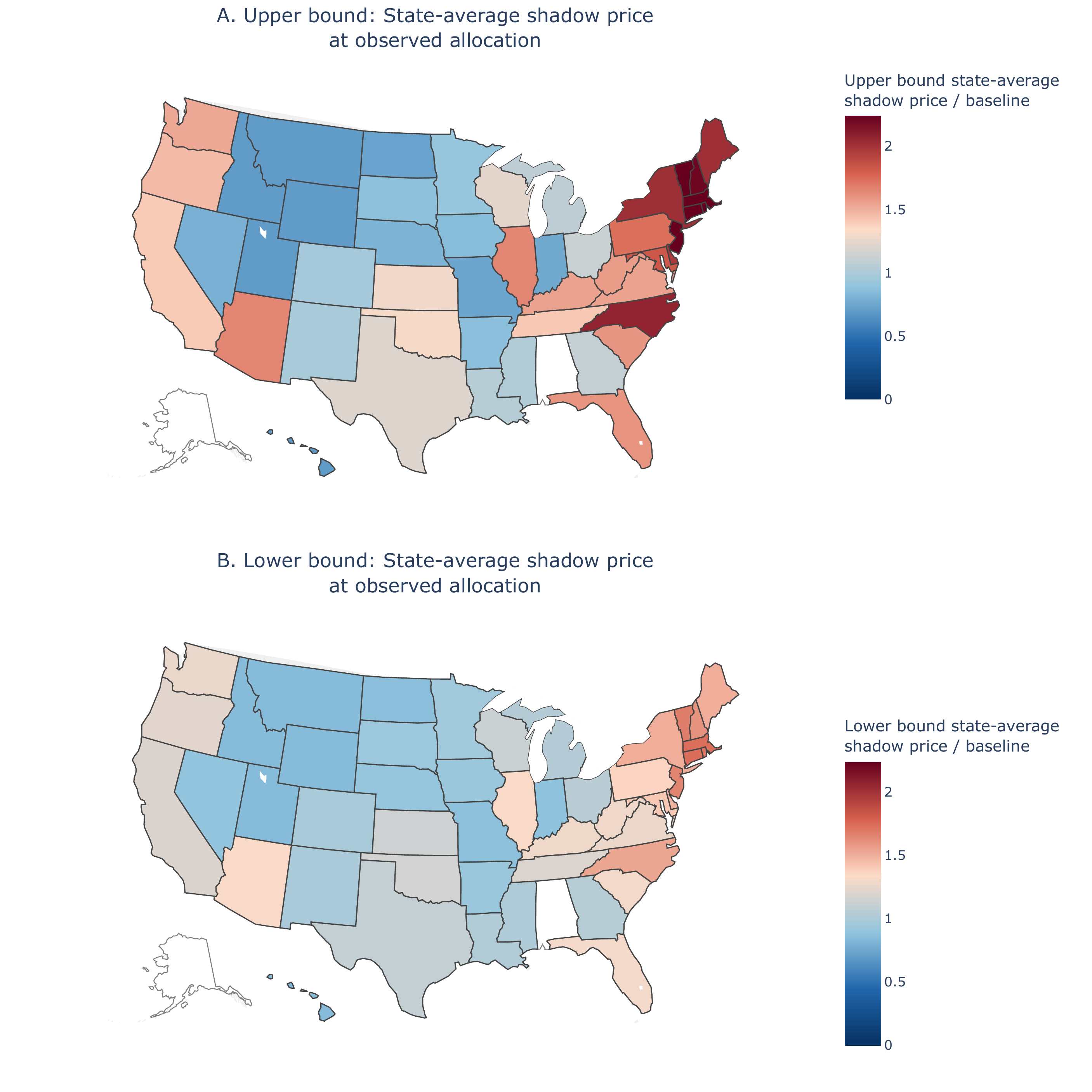}
    \caption{\textbf{State-Average Shadow Prices.} Top panel: upper bound. Bottom panel: lower bound. Colors show each state's average shadow price, computed as the rationing-share-weighted average of open and non-open shadow prices from the state-by-status robust bounds.}
    \label{fig:shadow_price_map}
\end{figure}

The geographic pattern mirrors Figure~\ref{fig:rationing_map}. Where rationing was severe, shadow prices are higher; where rationing was limited, shadow prices are lower. This is the open/non-open mechanism visualized in geographic form.

The economic meaning of these shadow prices is direct. In the upper-bound configuration, the state-average shadow price in Connecticut is 2.5 times the baseline price versus 0.7 times baseline in Montana, where no stations reported shortages. Connecticut consumers valued gasoline at 3.5 times what Montana consumers were paying. Shipping a barrel from Montana to Connecticut would have more than tripled your money. The arbitrage opportunity that price controls created but prevented.

This state-by-status specification uses 91 active state-status markets rather than the pooled two-market benchmark. For each active cell, the anchor price $p_{0,i}$ is allowed to vary over its admissible interval (implied by baseline anchors and slope bounds), and we numerically search jointly over these intervals and the common shadow-price condition. Relative to pooled two-market bounds, this additional structure slightly tightens both endpoints:
\[
\mathcal{L}_{Mis}/\mathcal{L}_{Harb}\in[1.09, 8.75].
\]
Since $\mathcal{L}_{Harb}\approx2.0\%$ of baseline spending, this ratio implies $\mathcal{L}_{Mis}\in[2.2\%, 17.7\%]$. Even at the lower endpoint, misallocation losses exceed the Harberger triangle; at the upper endpoint, misallocation remains close to an order of magnitude larger.

This slight tightening is reasonable but not mechanical. The state-by-status model adds many cell-level feasibility restrictions that must all be satisfied jointly with a common shadow price and adding-up, which can reduce the set of feasible extremal reallocations relative to a pooled two-market benchmark. At the same time, interval anchor uncertainty in $p_{0,i}$ adds flexibility, so monotonic tightening is not guaranteed ex ante. In our calibration, the state-status primitives are fairly similar across states, so the difference from pooled bounds is modest rather than dramatic.

Table~\ref{tab:assumption_decomposition} decomposes the identified interval for $\Phi$ by assumption. Quantities are calibrated once: $q_{open}=1.06$ and $q_{non\text{-}open}=0.67$ follow from $\varepsilon_{open}=0.3$ and the 9\% shortage, and are held fixed throughout. What varies is the \emph{welfare elasticity}, the demand slope used to compute the surplus integral.

Row~1 assigns every market the same welfare elasticity. Sweeping it over $[0.2,\,0.4]$ changes $\Phi$ by a factor of two but leaves $\mathcal{R}$ invariant at 4.4 because the same slope scales both misallocation loss and the Harberger triangle. Row~2 allows each market its own slope within the same range, which modestly widens the identified interval for $\Phi$. Allowing interval-anchor uncertainty (row~3) is the main widening step. Adding a choke cap at $M=4$ (row~4) leaves the lower endpoint unchanged but trims the upper endpoint, because the lower-loss extremal already satisfies the cap while the high-loss extremal is where the choke binds. The identified object in the table is always $\Phi$; the $\mathcal{R}$ column is a normalization used to put these losses on Harberger units. In row~1 that normalization is matched to the same common elasticity, while in rows~2--4 it is reported relative to the conservative Harberger benchmark with $\varepsilon=0.2$.

\begin{table}[htbp]
    \centering
    \small
    \begin{tabular}{lccc}
        \toprule
        Assumptions & $\underline{\Phi}$ (\%) & $\overline{\Phi}$ (\%) & $\mathcal{R}$ \\
        \midrule
        Common $\varepsilon\in[0.2,\,0.4]$ (fixed anchors) & 4.43 & 8.86 & 4.37 \\
        Heterogeneous $\varepsilon_i\in[0.2,\,0.4]$ (fixed anchors) & 4.98 & 9.97 & [2.46, 4.92] \\
        + Anchor uncertainty ($p_{0,i}$ intervals) & 2.21 & 17.72 & [1.09, 8.75] \\
        + Choke constraint ($M=4$) & 2.21 & 12.40 & [1.09, 6.12] \\
        \bottomrule
    \end{tabular}
    \caption{\textbf{Assumption-to-Interval Decomposition.} Quantities calibrated at $\varepsilon_{open}=0.3$ ($q_{open}=1.06$) and held fixed; welfare elasticity (demand slope for surplus computation) varies as indicated. $\Phi$ is the identified object and is reported as misallocation loss relative to baseline spending. $\mathcal{R}\equiv\mathcal{L}_{Mis}/\mathcal{L}_{Harb}$ is shown only as a normalization: row~1 uses the same common $\varepsilon$ in numerator and denominator, while rows~2--4 divide by the conservative Harberger benchmark with $\varepsilon=0.2$.}
    \label{tab:assumption_decomposition}
\end{table}

These magnitudes are qualitatively consistent with the queuing equilibria documented by \citet{deacon_rationing_1985}. When money prices cannot ration demand, time prices must rise to clear the market. At the height of the crisis, hour-long waits were routine at hard-hit stations, with multi-hour waits common at peak times. \citet{deacon_rationing_1985} estimate the value of waiting time at roughly the after-tax wage rate. For a typical motorist, a two-hour wait represents a time cost comparable to the fill-up itself, effectively doubling or tripling the price paid. The upper-bound shadow prices in our framework, where closed-station consumers face valuations well above the controlled price, correspond to exactly such queuing equilibria.

\subsection{Misallocation beyond Geography}

Misallocation during the energy crisis went far beyond geographic misallocation there was misallocation across product mix, time, input-output segments and sector. We give some examples to illustrate the generality of "segment" and "transport".

A barrel of crude can be refined into a slate of outputs: propane, butane, gasoline, jet fuel, heating oil, petrochemical feedstocks, and others. On the margin, tradeoffs exist: a given quantity of crude oil can yield more gasoline at the expense of less fuel oil, or vice versa. When prices are controlled product-by-product, these tradeoffs interact with small ''nuisance'' wedges (refining costs, accounting conventions, transportation frictions, and priority categories) to generate large shifts in the realized fuel mix (recall the quadratic cost function discussion in Section \ref{sec:Chaos Result}). During the controls, scarcity, therefore, appeared not only across space but across products. Propane shortages, diesel shortages, and heating-oil shortages rarely arrived as a single, stationary ''energy shortage;'' instead the binding constraint rotated across fuels as demand and costs moved, as the chaos theorem predicts.

Heating oil illustrates both product-mix and temporal misallocation. Heating-oil prices were initially frozen at (low-demand) summer 1971 levels, weakening incentives to carry inventories into winter and to direct marginal refinery yield toward distillate. Shortages emerged in the winters of 1972--1973 and 1973--1974, contributing to school closures. Freezing prices led to freezing people \citep{poole_fuel_1973,verleger_us_1979, deacon_price_1980}.

Input–output misallocation also arose outside fuels whenever controlled output prices met uncontrolled input costs (or vice versa); \citet{mulligan_supply_chain_2025} formalizes this supply-chain mechanism. In the summer of 1973, chicken farmers gassed, drowned, and suffocated roughly a million baby chicks \citep{time_controls_1973,new_york_times_baby_1973,associated_press_chicks_1973}. Retail chicken prices were controlled, but feed costs were not. That did more than reduce poultry output: it blocked the price system from allocating supply across production stages. Think of the chick stage as market 1, transport as an intertemporal production stage and chickens as market 2. Chicks fed now becomes broilers weeks later. With revenues capped along that grow-out path, relatively small increases in feed costs pushed producers to a corner: abandon the “raise chicks into chickens” segment and liquidate immediately. As Madison Clements of Waco put it, ``It's cheaper to drown em than to put em down and raise em," \citep{associated_press_chicks_1973}.

The same temporal segmentation appeared in livestock where the markets segments were even more well-defined: dairy farmers slaughtered cows and hog farmers culled breeding stock, temporarily increasing meat supply while reducing future flows of milk and pork. In each case, controls flattened relative returns across time and uses, so small cost wedges selected a vertex allocation, with all supply routed to the "now" segment and essentially zero to the "later" segment.

Finally, supply-chain complexity amplified inefficiencies in the output mix. Allocation rules designated priority end uses, but policymakers underestimated input-output linkages. Oil production was prioritized, yet some upstream inputs were not. Propane, for instance, was essential for producing plastic piping used in oil extraction, but the plastics industry initially lacked priority designation. The resulting pipe shortages disrupted the very oil production that allocation policy sought to protect \citep{federal_energy_office_review_1974}.

\section{From Price Controls to Quantity Controls}\label{sec:pricetoquantity}

Allocation by fiat is a natural consequence of price controls because, as the Chaos Theorem indicates, once prices are prevented from doing their work, “the market” no longer selects a coherent allocation. Instead, it delivers corner outcomes with no welfare ordering. Price controls, therefore, tend to metastasize into quantity controls. In that sense, the price control is the ultimate cause of the misallocation; administrative quantity allocation is the proximate response.

The debates during the oil crisis make this progression visible. What triggered the demand for mandatory allocation was not scarcity or shortages alone, but the visibly uneven and disruptive pattern of goods with zero allocations that emerged once price-guided adjustment was suppressed. Walter Mondale, for example, testified to the Joint Economic Committee in 1973 \citep{jec1973gasoline}:
\begin{quote}  
[T]he voluntary plan simply is not working. In our state, over 150 independent service stations have closed, and the number that have been able to reopen since the voluntary plan went into effect has been small. [\ldots]  Major oil suppliers, including Gulf and Sun Oil have indicated their desire to pull out of the Minnesota market [\ldots] Just yesterday, we began debate in the Senate on legislation to impose a mandatory allocation system. I believe that the time to plan for this type of system has come, and that only strong measures will work to ensure that farmers, governments and other consumers have adequate supplies over the coming months and that prices for fuel products are kept down...We need this action now at the Federal level--and I am confident that we will get it.
\end{quote}
Even more directly, a report presented to the Federal Energy Administration in 1974 by the New England Fuel Institute \citep{zotero-item-8347} argued against lifting the Mandatory Allocation Program without also lifting price controls, precisely because price controls without quantity allocation would produce utter "chaos" and "confusion." 

\begin{quote}
The main reason for this chaotic and confused supply situation that was depriving many  distributors of any normal or reasonable supply, or any supply at all, was due to the fact that price controls were in force, but there was no Mandatory Allocation Program.
\end{quote}

The Mandatory Allocation Program was an attempt to solve a problem that price controls had created. The misallocation we measure under the EPAA is a downstream consequence of the Chaos Theorem, not a separate phenomenon.

\section{Conclusion}

In markets without price controls, arbitrage pushes goods toward their highest-value uses, equalizing shadow prices across space, time, and sector. A binding price ceiling breaks this. Once money prices are frozen, the incentive to move goods to higher-value segments vanishes, and the efficient allocation becomes a knife-edge point inside a much larger feasible set.

The Chaos Theorem formalizes the consequence. Equilibrium generically lands on corners where some markets are fully served and others receive nothing. Infinitesimal changes in nuisance parameters such as transportation costs can flip the economy between corner allocations, and discontinuous welfare jumps follow from continuous parameter shifts. The patchwork of the 1973–74 gasoline crisis is not an accident but the generic outcome when prices cannot signal scarcity.

Our robust bounds analysis quantifies these welfare costs without parametric demand assumptions. Using only observed allocations, the ceiling price, and empirically plausible slope bounds, we find misallocation losses on the order of 1 to 9 times the Harberger triangle; the quantity reduction accounts for a minority of total welfare loss.

The same logic extends beyond geography. Heating-oil prices frozen at summer levels eliminated incentives to store winter inventories; controlled chicken prices with uncontrolled feed costs drove farmers to a corner on the input–output, intertemporal margin, leading them to destroy over a million chicks. Whenever controls suppress price variation across segments, seller indifference turns small cost wedges into large misallocations. The resulting chaos generates demand for direct quantity controls.

The broader lesson is that whenever a ceiling fragments an integrated market--gasoline, rental housing, agriculture, medical care--the main cost is not the familiar triangle, but the hidden misallocation behind it.

\appendix
\onehalfspacing
\section{Omitted Proofs}\label{omittedproofs}

\subsection{Theorem \ref{thm:worst-vertex} Proof}\label{app:vertex-proof}
\begin{proof}[Proof of Theorem \ref{thm:worst-vertex}]
    Because $\widetilde W_i(q_i)=\int_0^{q_i}P_i(x) dx$ is concave (as an integral of a nonincreasing function), the sum $\widetilde W=\sum_i \widetilde W_i$ is concave on the convex polytope $\mathcal{F}$. A concave function attains its minimum at an extreme point, proving \ref{worstitem1}.

    For \ref{worstitem2}, fix a minimizer \(q^w\in \mathcal{F}\) of \(\widetilde{W}\). Since each \(P_i\) is continuous
on \(\left[0,\bar{q}_i\right]\), \(\widetilde{W}\)
is continuously differentiable on  \(\mathcal{F}\), with
\(\frac{\partial \widetilde{W}}{\partial q_i}(q) =
P_i(q_i)\) for all \(q\in \mathcal{F}\). The standard Karush-Kuhn-Tucker conditions apply: at \(q^w\), there exist
\(\lambda,\{\mu_i\},\{\nu_i\}\) such that
\begin{align}
\text{(stationarity)}\qquad
& P_i(q_i^w) + \lambda - \mu_i + \nu_i = 0
&&\text{for all } i,
\label{eq:KKT-stationarity}\tag{\(A1\)}
\\
\text{(complementary slackness)}\qquad
& \mu_i q_i^w = 0,
\qquad
\nu_i (q_i^w - \bar q_i) = 0
&&\text{for all } i,
\label{eq:KKT-slackness}\tag{\(A2\)}
\\
\text{(primal feasibility)}\qquad
& \sum_i q_i^w = \bar Q,
\quad
0 \le q_i^w \le \bar q_i
&&\text{for all } i,
\label{eq:KKT-primal}\tag{\(A3\)}
\\
\text{(dual feasibility)}\qquad
& \mu_i,\nu_i \ge 0
&&\text{for all } i.
\label{eq:KKT-dual}\tag{\(A4\)}
\end{align}
We define \(\lambda_{\mathrm{worst}} \coloneqq -\lambda\) and observe that there are three possible cases for each market \(i\).

We go through them one-by-one.

\medskip

\noindent\textbf{Case 1: $0 < q_i^w < \bar q_i$.}
Both constraints $-q_i\le 0$ and $q_i - \bar q_i \le 0$ are slack at $q^w$,
so by complementary slackness \eqref{eq:KKT-slackness} we must have
\(\mu_i = \nu_i = 0\). We substitute these into \eqref{eq:KKT-stationarity} to deduce \(P_i(q_i^w) = \lambda_{\mathrm{worst}}\).

\medskip

\noindent\textbf{Case 2: $q_i^w = 0$.}
Here the constraint $q_i - \bar q_i \le 0$ is slack (since $q_i^w - \bar q_i < 0$),
so $\nu_i = 0$ by \eqref{eq:KKT-slackness}, while $\mu_i\ge 0$ is free.
Equation \eqref{eq:KKT-stationarity} becomes \(P_i(0) = \lambda_{\mathrm{worst}} + \mu_i \geq \lambda_{\mathrm{worst}}\).

\medskip

\noindent\textbf{Case 3: $q_i^w = \bar q_i$.}
Now the constraint $-q_i \le 0$ is slack (since $-q_i^w = -\bar q_i < 0$),
so $\mu_i = 0$ by \eqref{eq:KKT-slackness}, while $\nu_i\ge 0$ is free.
Equation \eqref{eq:KKT-stationarity} becomes
\(P_i(\bar q_i) = \lambda_{\mathrm{worst}} - \nu_i \leq \lambda_{\mathrm{worst}}\).

As each $P_i(\cdot)$ is weakly decreasing and nonnegative, necessarily $\lambda_{\mathrm{worst}}\geq 0$.

Part \ref{worstitem3} follows immediately: for a single-market allocation with $q_j=\bar{Q}$ and $q_{k\ne j}=0$, we have $\widetilde W=\int_0^{\bar{Q}} P_j(x) dx$, so minimizing across $j$ yields the stated criterion.
\end{proof}

\subsection{Chaos Theorem}
\label{app:chaos-proofs}

In this appendix, we develop the technical machinery for Theorem~\ref{thm:chaos-main}. We first discuss the continuity of the allocation in the parameter, when there are no price controls. Next, we introduce a binding ceiling then prove Proposition \ref{prop:greedy-allocation}, Theorem \ref{thm:chaos-main} and Corollary \ref{cor:chaos-allocation}, in sequence.

Let $\theta\in \Theta\subset\mathbb{R}^m$ be a compact, nonempty parameter space.
There are submarkets $i=1,\dots,n$ with inverse demands $P_i(x;\theta)$.
We assume:
\begin{enumerate}
    \item For each $i$ and every $\theta$, inverse demand $P_i(x;\theta)$ is continuous and strictly decreasing in \(x\).
    \item For each \(i\) and every $x$, inverse demand $P_i(x;\theta)$ is continuous in \(\theta\).
    \item For each \(i\), the submarket unit cost $c_i(\theta)$ is continuous in $\theta$.
    \item The total quantity $\bar{Q}\left(\theta\right)>0$ is continuous in $\theta$.
\end{enumerate}
Note that the separate continuity in \(\theta\) and monotonicity in \(x\) deliver the joint continuity of each \(P_i\).

The vector of allocated quantities $q=(q_1,\dots,q_n)$ must satisfy $\sum_{i=1}^n q_i=\bar{Q}\left(\theta\right)$ and $q_i \geq 0$. We term such a vector \textit{feasible} and define
\[
\widetilde{W}(q;\theta)=\sum_{i=1}^n \int_0^{q_i} P_i(x;\theta) dx.
\]

Consider
\[
\max_{q \geq 0} \sum_{i=1}^n \int_0^{q_i} P_i(x;\theta)dx - \sum_{i=1}^n c_i(\theta) q_i
\quad\text{s.t.}\quad \sum_{i=1}^n q_i=\bar{Q}\left(\theta\right).
\]
As each $P_i(\cdot;\theta)$ is strictly decreasing, the objective is strictly concave in $q$, and so the maximizer $q^*(\theta)$ is unique.
The feasible correspondence \[\Gamma(\theta) \coloneqq \left\{q\in\mathbb{R}^n_+ \colon \ \sum_i q_i=\bar{Q}\left(\theta\right)\right\}\text{,}\] has compact values and a closed graph by the continuity of $\bar Q$. \(\Gamma\) is also continuous: if \(\theta_n\to\theta\) and \(q\in\Gamma(\theta)\), then \(q_n\coloneqq \frac{\bar Q(\theta_n)}{\bar Q(\theta)}q\in\Gamma(\theta_n)\) and \(q_n\to q\).
The objective is continuous in $(q,\theta)$ (by the joint continuity of $P_i$ and continuity of $c_i$).
Therefore, by Berge’s Maximum Theorem, both the unique maximizer $\theta\mapsto q^*(\theta)$ and the value function
\[\theta \mapsto\sum_{i=1}^n \int_0^{q_i^*(\theta)} P_i(x;\theta) dx - \sum_{i=1}^n c_i(\theta) q_i^*(\theta)\text{,}
\]
are continuous in $\theta$. On any region where the optimizer is interior, the multiplier $\theta \mapsto \lambda(\theta)$ is also continuous, as $\lambda(\theta)=P_i(q_i^*(\theta);\theta)-c_i(\theta)$.

Now suppose we have a binding ceiling $\bar p$.
Define $\bar q_i(\theta) \coloneqq D_i(\bar p;\theta)$ and aggregate quantity $\bar{Q}\left(\theta\right) \coloneqq S(\bar p;\theta)$.
The feasible set is
\[
\mathcal{F}(\theta)=\left\{q\in\mathbb{R}^n_+ \colon \ \sum_{i=1}^n q_i=\bar{Q}\left(\theta\right),\ 0 \leq q_i \leq \bar q_i(\theta) \ \forall i\right\}\text{.}
\]
We make the nondegeneracy assumptions
\begin{enumerate}
\item \label{assa} $0<\bar{Q}\left(\theta\right)<\sum_{i=1}^n \bar q_i(\theta)$;
\item \label{assb} $\bar{Q}\left(\theta\right)\neq \sum_{i\in S}\bar q_i(\theta)$ for every $S\subset\{1,\dots,n\}$.
%\item $c_i(\theta)\neq c_j(\theta)$ for $i\neq j$.
\end{enumerate}
\iffalse
We solve
\(\min_{q\in\mathcal{F}(\theta)}\ c(\theta)\cdot q\)
and order costs as $c_{(1)}(\theta)<\cdots<c_{(n)}(\theta)$ and let $k$ denote the unique index with $\sum_{s=1}^{k-1} \bar q_{(s)}(\theta) <\bar{Q}\left(\theta\right)< \sum_{s=1}^k \bar q_{(s)}(\theta)$.
Then the unique optimizer is
\[
q^*_{(r)}(\theta)=
\begin{cases}
\bar q_{(r)}(\theta), & r<k,\\
\bar{Q}\left(\theta\right)-\sum_{s=1}^{k-1} \bar q_{(s)}(\theta), & r=k,\\
0, & r>k\text{,}
\end{cases}
\]
i.e., we “fill in increasing $c$” with at most one partial coordinate. %Changing $\theta$ that breaks a cost tie ($c_i(\theta) = c_j(\theta)$) that changes which markets are filled causes a discrete jump in allocations.
\fi

Suppressing the dependence on the parameter $\theta$, the optimization problem is
\[\tag{$\star$}\label{lp} \min\left\{c \cdot q \colon \ q\in \mathcal{F}\right\}\text{,}
\]
For Proposition~\ref{prop:greedy-allocation}, we also assume $c_i(\theta)\neq c_j(\theta)$ for $i\neq j$. For its proof, recall also the notation: $c_{(1)}(\theta)<\cdots<c_{(n)}(\theta)$ and let $k$ denote the unique index with $\sum_{s=1}^{k-1} \bar q_{(s)}(\theta) <\bar{Q}\left(\theta\right)< \sum_{s=1}^k \bar q_{(s)}(\theta)$
\begin{proof}[Proof of Proposition~\ref{prop:greedy-allocation}]
We continue to suppress \(\theta\) for notational ease, and write \(\bar Q\coloneqq\bar Q(\theta)\), \(\bar q_i\coloneqq\bar q_i(\theta)\), and \(c_i\coloneqq c_i(\theta)\). The price-control optimizing problem is a linear program
\[
\min\left\{\sum_{i=1}^n c_ix_i \colon \  \sum_{i=1}^n x_i= \bar Q,\ 0\le x_i\le \bar q_i\text{ for }i=1,\dots,n\right\}.
\]
Because \(0< \bar Q<\sum_{i=1}^n\bar q_i\), the feasible set is nonempty. It is also compact, so an optimal solution exists. The dual problem is
\[
\max\left\{\lambda \bar Q-\sum_{i=1}^n\mu_i\bar q_i\colon \ \lambda-\mu_i\le c_i\text{ for }i=1,\dots,n,\ \mu_i\ge0\text{ for }i=1,\dots,n\right\}.
\]
Let \(q\) be a primal optimizer and \(\left(\lambda,\mu\right)\) a dual optimizer. By strong duality and complementary slackness, \(\mu_i\left(q_i-\bar q_i\right)=0\) and \(q_i\left(c_i-\lambda+\mu_i\right)=0\)
for each \(i\), while dual feasibility means we must have \(c_i-\lambda+\mu_i\ge0\).

If \(c_i<\lambda\), then dual feasibility implies \(\mu_i\ge\lambda-c_i>0\), which implies \(q_i=\bar q_i\).
If \(c_i>\lambda\), then \(c_i-\lambda+\mu_i>0\), and so \(q_i=0\). Therefore, any coordinate with \(0<q_i<\bar q_i\) must satisfy \(c_i=\lambda\).

As the \(c_i\) are pairwise distinct, there is at most one index \(i\) with \(c_i=\lambda\). If there were no such index, then \(q_i\in\left\{0,\bar q_i\right\}\) for every \(i\), so
\(\bar Q=\sum_{i=1}^n q_i=\sum_{i\in S}\bar q_i\) for \(S\coloneqq\left\{i\colon \ q_i=\bar q_i\right\}\), contradicting the hypothesis that \(\bar Q\) is not a subset sum of the \(\bar q_i\). Consequently, there is a unique index \(k\) with \(c_{\left(k\right)}=\lambda\). Moreover, \(q_{\left(k\right)}\notin\left\{0,\bar q_{\left(k\right)}\right\}\), for otherwise again \(\bar Q\) would be a subset sum of the \(\bar q_i\). Therefore,
\(0<q_{\left(k\right)}<\bar q_{\left(k\right)}\).

For \(r<k\) we have \(c_{\left(r\right)}<\lambda\), so \(q_{\left(r\right)}=\bar q_{\left(r\right)}\). For \(r>k\) we have \(c_{\left(r\right)}>\lambda\), so \(q_{\left(r\right)}=0\). The equality constraint then delivers
\[
q_{\left(k\right)}=\bar Q-\sum_{s=1}^{k-1}\bar q_{\left(s\right)} \qquad \Longrightarrow \qquad
\sum_{s=1}^{k-1}\bar q_{\left(s\right)}<\bar Q<\sum_{s=1}^{k}\bar q_{\left(s\right)},
\]
as \(0<q_{\left(k\right)}<\bar q_{\left(k\right)}\). Thus, \(k\) is exactly the unique index in the statement, and \(q\) has the asserted form.
\end{proof}

We now finish developing the apparatus needed to prove the theorem. By our assumption that there exists a local cost-only perturbation at \(\theta^*\) for \(v\) and \(w\), we fix the feasible polytope $\mathcal F= \mathcal F(\theta^*)$ and without loss of generality consider perturbations that act only through the cost vector $c(\theta)$. Thus, we treat $c$ as the effective parameter in \eqref{lp}. By assumption, after shrinking the cost neighborhood if necessary, every local cost perturbation considered below is induced by a smooth parameter perturbation in an arbitrary neighborhood of \(\theta^*\), and so all local conclusions in cost space translate directly to local conclusions in parameter space. 

%We maintain Assumptions \ref{assa} and \ref{assb} 
We note that Assumption \ref{assb} means that every vertex $v$ of $\mathcal{F}$ has exactly one coordinate strictly between its bounds (the partial coordinate), and all other coordinates are at $0$ or at $\bar q_i$. To elaborate, a point \(v \in \mathcal F\) is a vertex if all but one coordinates are at their bounds \(0\) or \(\bar{q}_i\), with the remaining coordinate then pinned down by \(\sum_i q_i = \bar{Q}\). Recall also that two distinct vertices $v$ and $w$ are \textit{adjacent} if they share an edge: along the edge exactly the pair
$(r,s)$ of coordinates moves (with all others fixed at bounds) and
\[\label{edge-diff}\tag{$A5$}
w - v = \Delta (e_s - e_r)\qquad \text{for some } r\neq s \text{ and } \Delta > 0,\]
where we label $r$ as the coordinate that decreases ($w_r < v_r$) and $s$ as the coordinate that increases ($w_s > v_s$). Along the edge from $v$ to $w$, only coordinates $r$ and $s$ move, with all other coordinates fixed at bounds. Since each vertex has exactly one partial coordinate, at each endpoint exactly one of $r$ or $s$ is partial and the other is at a bound. Note moreover, that the identity of the partial coordinate need not change between endpoints. %At $v$, $r$ is the unique partial index (so $0<v_r<\bar q_r$); at $w$, one of $r$ or $s$ is the new partial and the other hits a bound.  
Equivalently, edges are exactly the one-dimensional faces on which two coordinates are free (subject to the sum constraint), and the others are fixed at bounds.

For any $v\in \mathcal{F}$, the (outer) normal cone is
\[
\mathcal{N}_v = \left\{c\in\R^n \colon \ c\cdot(x-v)\ge 0 \ \forall x\in \mathcal F\right\}\text{.}
\]
The following is standard:
\begin{remark}\label{opt-nc}
    $v$ solves \eqref{lp} if and only if $c \in \mathcal{N}_v$.
\end{remark}
Moreover,
\begin{remark}\label{tie-orth}
    If a given $c$ makes exactly two distinct vertices $v,w$ optimal, then $v,w$ are adjacent and $c \cdot \left(w-v\right) = 0$.
\end{remark}
Denote the set of such common-optimal costs by
\[
\mathcal{H}_{vw} \coloneqq \mathcal{N}_v\cap \mathcal{N}_w\cap \left\{c \colon \ c\cdot(w-v)=0\right\}.
\]
We say that $c^\star$ is a \emph{simple tie} for $v,w$ if
\[
c^\star\in \operatorname{relint}(\mathcal{H}_{vw}) \quad \text{and} \quad c^\star\notin \mathcal{N}_u \text{ for any other vertex } u.
\]
We denote the open ball of radius \(\rho > 0\) around \(c^\star\) by
\[
B\left(c^\star,\rho\right)\coloneqq\left\{c\in\mathbb R^n\colon \ \left\|c-c^\star\right\|<\rho\right\}.
\]
The proofs of the next two results (Lemma \ref{lem:trans-app} and Corollary \ref{cor:trans-app}) lie in the supplementary appendix.

%\begin{remark}\label{rem:simple-tie-app} If \(c\) makes \(v\) and \(w\) optimal but the tie is not simple, then every neighborhood of \(c\) contains some \(c'\) that is a simple tie for \(v,w\). Hence, Lemma~\ref{lem:trans-app} applies arbitrarily close to any \((v,w)\) tie. \end{remark}

\begin{lemma}\label{lem:trans-app}
Let $v\neq w$ be adjacent vertices and $c^\star$ a simple tie for $v,w$.
Then for any neighborhood $U\subset\R^n$ of $c^\star$ there exists $\varepsilon > 0$ and a smooth function
$\gamma \colon \left[-\varepsilon,\varepsilon\right] \to U$ such that \(\gamma(0)=c^\star\), and the unique optimizer of \eqref{lp} equals $v$ for $t<0$ and $w$ for $t>0$,
while at $t=0$ both $v$ and $w$ are optimal.
\end{lemma}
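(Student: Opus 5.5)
The plan is to build the path as a straight line through $c^\star$ in the direction $d \coloneqq w-v$. Concretely, set $\gamma(t) \coloneqq c^\star + t\,d$. This is affine, hence smooth, and $\gamma(0)=c^\star$. For every vertex $u$ of $\mathcal F$,
\[
\gamma(t)\cdot u = c^\star\cdot u + t\,d\cdot u ,
\]
so the objective of \eqref{lp} along the path is affine in $t$ vertex by vertex. The argument then splits into two steps.

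\textbf{Step 1: the pair $v,w$.} Since $c^\star\in\mathcal H_{vw}$, we have $c^\star\cdot v=c^\star\cdot w$. Hence
\[
\gamma(t)\cdot w-\gamma(t)\cdot v = t\,\|d\|^2 = 2t\Delta^2 .
\]
This is negative for $t>0$ and positive for $t<0$. So $w$ strictly beats $v$ when $t>0$, and $v$ strictly beats $w$ when $t<0$. At $t=0$ both are optimal by Remark~\ref{opt-nc}, because $c^\star\in\mathcal N_v\cap\mathcal N_w$.

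\textbf{Step 2: all other vertices.} This is the main point. Because $c^\star$ is a simple tie, $c^\star\notin\mathcal N_u$ for every other vertex $u$. By Remark~\ref{opt-nc}, each such $u$ is strictly suboptimal at $c^\star$:
\[
c^\star\cdot u > c^\star\cdot v = m^\star .
\]
Vertices of a polytope are finitely many, so the gap
\[
\eta \coloneqq \min_{u\neq v,w}\left(c^\star\cdot u - m^\star\right)
\]
is strictly positive. Let $R\coloneqq\max_{u}|d\cdot u|$, a maximum over the same finite set. If $|t|<\eta/(2R)$ (any $t$ works if $R=0$), then every other vertex still has strictly larger cost than $\min\{\gamma(t)\cdot v,\gamma(t)\cdot w\}$.

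To finish, choose $\varepsilon>0$ smaller than both $\eta/(2R)$ and $\rho/\|d\|$, where $B(c^\star,\rho)\subset U$. Then $\gamma$ maps $[-\varepsilon,\varepsilon]$ into $U$. For $t\neq 0$ in this interval, a unique vertex minimizes the cost: $v$ if $t<0$ and $w$ if $t>0$. It remains to go from "unique optimal vertex" to "unique optimizer." Any minimizer of a linear function over a polytope lies in the face spanned by the optimal vertices. Here that face is a single point, so the optimizer is unique and equals that vertex.

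I expect essentially no obstacle. The only subtlety is relying on finiteness of the vertex set to get a uniform gap $\eta$; that is exactly what the hypothesis that $c^\star$ is a simple tie supplies. Note also that the relint condition is not needed for this argument.
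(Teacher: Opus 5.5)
There is a sign error in Step~1 that reverses the conclusion. With $d=w-v$ and $c^\star\cdot(w-v)=0$, you get $\gamma(t)\cdot w-\gamma(t)\cdot v=t\|d\|^2=2t\Delta^2$. This is \emph{positive} for $t>0$, not negative. Since \eqref{lp} is a minimization, $v$ is then strictly cheaper than $w$ for $t>0$, and $w$ is strictly cheaper for $t<0$. That is the opposite of what the lemma asserts. The fix is to move against $w-v$: take $\gamma(t)\coloneqq c^\star+t(v-w)$, so that $\gamma(t)\cdot(w-v)=-t\|w-v\|^2$. This matches the paper's requirement that the direction of the line have negative inner product with $w-v$. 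Step~2 needs no change, since it only uses $|d\cdot u|$.

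With that correction your proof is right, and it is essentially the paper's argument: a straight line through $c^\star$ that crosses the tie hyperplane $\{c\colon c\cdot(w-v)=0\}$ with a sign change. The paper asserts that a small ball around $c^\star$ splits into $\operatorname{int}\mathcal N_v$, $\mathcal H_{vw}$, and $\operatorname{int}\mathcal N_w$, and it reads off the optimizer from the piece that $\gamma(t)$ lies in. Two of your steps are exactly what justify that decomposition:
\begin{itemize}
\item the uniform positive gap between the optimal value and the cost of every other vertex, which uses finiteness of the vertex set;
\item the face argument that a unique optimal vertex gives a unique optimizer.
\end{itemize}
So your version is more self-contained. You are also correct that membership in $\operatorname{relint}(\mathcal H_{vw})$ plays no role. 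The argument uses only $c^\star\in\mathcal N_v\cap\mathcal N_w$ and $c^\star\notin\mathcal N_u$ for every other vertex $u$.
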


\begin{corollary}\label{cor:trans-app}
Let \(v\neq w\) be adjacent vertices, and let \(c^\star\) be any cost vector for which both \(v\) and \(w\) are optimal for \eqref{lp}. Then for every open neighborhood \(U\subset\mathbb{R}^n\) of \(c^\star\) there exist \(\hat c\in U\), \(\varepsilon>0\), and a smooth function
\(\gamma\colon\left[-\varepsilon,\varepsilon\right]\to U\) such that \(\gamma(0)=\hat c\), both \(v\) and \(w\) are optimal at \(t=0\), the unique optimizer of \eqref{lp} equals \(v\) for \(t<0\), and the unique optimizer equals \(w\) for \(t>0\).
\end{corollary}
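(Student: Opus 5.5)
The plan is to reduce Corollary~\ref{cor:trans-app} to Lemma~\ref{lem:trans-app}. I will show that every neighborhood of an arbitrary $(v,w)$ tie $c^\star$ contains a \emph{simple} tie $\hat c$ for $v,w$, and then apply the lemma at $\hat c$ with a neighborhood small enough to stay inside $U$.

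\textbf{Step 1: describe the normal cones.} By Remark~\ref{opt-nc}, $v$ and $w$ are both optimal at $c^\star$ exactly when $c^\star\in\mathcal N_v\cap\mathcal N_w$. Since $w-v=\Delta(e_s-e_r)$ by \eqref{edge-diff}, optimality of both forces $c^\star\cdot(w-v)=0$, that is, $c^\star_r=c^\star_s$. Hence $c^\star\in\mathcal H_{vw}$. The set $\mathcal H_{vw}=\mathcal N_v\cap\mathcal N_w$ is the normal cone of the edge $[v,w]$, which is a face of $\mathcal F$. It is therefore a closed convex polyhedral cone.

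\textbf{Step 2: approximate $c^\star$ by a simple tie.} Pick any point $c^\circ\in\operatorname{relint}(\mathcal H_{vw})$. Set $c_\eta\coloneqq(1-\eta)c^\star+\eta c^\circ$. For every $\eta\in(0,1]$ this point lies in $\operatorname{relint}(\mathcal H_{vw})$, because a segment from a point of a convex set to a relative-interior point stays in the relative interior except possibly at its start.

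Next I must rule out other optimal vertices $u$. For a convex polytope, the normal cone of a face $G$ contains $\operatorname{relint}$ of that cone only for costs whose optimal face is exactly $G$. So a cost in $\operatorname{relint}(\mathcal N_{[v,w]})$ has optimal face $[v,w]$, and the only optimal vertices are $v$ and $w$. Thus $c_\eta\notin\mathcal N_u$ for any other vertex $u$, and $c_\eta$ is a simple tie.

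The main obstacle is justifying this face/normal-cone correspondence cleanly, since the paper defines simple ties by hand. My first attempt will be a direct argument. Let $c\in\operatorname{relint}(\mathcal N_{[v,w]})$ and suppose another vertex $u\notin[v,w]$ were optimal. Then $\mathcal N_{[v,w]}\cap\mathcal N_u$ is the normal cone of a strictly larger face. That cone is a proper face of $\mathcal N_{[v,w]}$, which uses the fact that the normal fan is a complete fan whose face poset reverses face inclusion. A proper face of a cone cannot meet its relative interior, so this contradicts the choice of $c$. If citing fan structure feels too heavy, I will instead give a hands-on argument. The constraint set involves only box bounds and one equality, and by Proposition~\ref{prop:greedy-allocation}-style reasoning an optimal vertex is characterized by cost-order conditions. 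Optimality of a third vertex $u$ would then impose an extra equality $c_i=c_j$ that fails on an open dense subset of $\mathcal H_{vw}$. Perturbing along $c^\circ$ generic avoids it.

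\textbf{Step 3: apply the lemma.} Choose $\eta$ small enough that $\hat c\coloneqq c_\eta\in U$. Then choose an open neighborhood $U'\subset U$ of $\hat c$. Lemma~\ref{lem:trans-app} gives $\varepsilon>0$ and a smooth $\gamma\colon[-\varepsilon,\varepsilon]\to U'\subset U$ with $\gamma(0)=\hat c$. By the lemma, both $v$ and $w$ are optimal at $t=0$, the unique optimizer is $v$ for $t<0$, and it is $w$ for $t>0$. This is exactly the claim of Corollary~\ref{cor:trans-app}.
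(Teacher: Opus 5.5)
Your proposal is correct and follows the paper's proof. Both arguments observe that $\mathcal H_{vw}$ is the normal cone of the edge $[v,w]$, that every point of $\operatorname{relint}(\mathcal H_{vw})$ exposes exactly that edge and is therefore a simple tie, that such points are dense in $\mathcal H_{vw}$ so one lies in $U$, and then invoke Lemma~\ref{lem:trans-app} at that point; your line-segment construction of $c_\eta$ and your normal-fan (or complementary-slackness) argument just spell out two facts the paper asserts without proof.
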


Let \(\gamma(t)\) be a smooth function as in Corollary~\ref{cor:trans-app}. The next lemma shows that the welfare jump at \(t\) passes through \(0\) equals \(\Delta W\). The \(-c\cdot q\) contribution cancels in the limit, as \(\gamma(\cdot)\) is continuous at \(0\) and \(\gamma(0)\cdot(w-v)=0\).
\iffalse
\begin{lemma}\label{lem:jump-app}
We have
\[\tag{$A6$}\label{eq:jump}
\lim_{t\downarrow0}\left(\sum_i\int_0^{w_i}P_i(x) dx-\gamma(t)\cdot w\right)-\lim_{t\uparrow0}\left(\sum_i\int_0^{v_i}P_i(x) dx-\gamma(t)\cdot v\right)=\Delta W\text{.}
\]
\end{lemma}\fi

\begin{proof}[Proof of Theorem~\ref{thm:chaos-main}]
Fix a neighborhood \(U\) of \(\theta^*\). By our assumption that there is a local cost-only perturbation, after shrinking the cost neighborhood if necessary, every smooth cost variation with image in that neighborhood is induced by a smooth parameter perturbation taking values in \(U\). Since \(v\) and \(w\) are both optimal at \(\theta^*\), Corollary~\ref{cor:trans-app} yields a smooth cost variation near \(c(\theta^*)\) along which the unique optimizer is \(v\) for \(t<0\) and \(w\) for \(t>0\). The welfare jump along this cost variation equals
\[
\sum_{i=1}^n \int_0^{w_i} P_i(x) dx
-
\sum_{i=1}^n \int_0^{v_i} P_i(x) dx
=
\int_{v_s}^{w_s} P_s(x) dx
-
\int_{w_r}^{v_r} P_r(x) dx
=
\Delta W,
\]
which is nonzero by hypothesis. Moreover, our cost-only perturbation assumption guarantees that \(\mathcal{F}(\theta^\ast)\), \(P_r(\cdot;\theta^\ast)\), and \(P_s(\cdot;\theta^\ast)\) are unaffected by the induced parameter perturbation. Thus, as \(w_i=v_i\) for \(i\notin\{r,s\}\), the continuity of \(\theta\mapsto \int_0^{v_i} P_i(x;\theta) dx\) implies that the remaining integrated welfare terms have identical one-sided limits, so the welfare jump for the induced parameter perturbation is also \(\Delta W\).

Reversing the variation yields a smooth parameter perturbation along which the unique optimizer is \(w\) for \(t<0\) and \(v\) for \(t>0\), and the corresponding welfare jump is \(-\Delta W\). Hence, every neighborhood of \(\theta^*\) contains smooth parameter perturbations along which welfare jumps by \(+\left|\Delta W\right|\) and \(-\left|\Delta W\right|\), so welfare is not locally monotone at \(\theta^*\).
\end{proof}

\begin{proof}[Proof of Corollary~\ref{cor:chaos-allocation}]
Fix a neighborhood \(U\) of \(\theta^\ast\). By part I of Theorem \ref{thm:chaos-main}, there exists a smooth parameter perturbation taking values in \(U\) along which the optimal allocation jumps between \(v\) and \(w\). Choose \(\theta^-,\theta^+\in U\) on opposite sides of the jump such that \(v\) is optimal at \(\theta^-\) and \(w\) is optimal at \(\theta^+\). Since \(v\) and \(w\) are adjacent, we have
\(w-v=\Delta\left(e_s-e_r\right)\) and so the allocation changes by reallocating \(\Delta\) units from market \(r\) to market \(s\), with all other market quantities unchanged.
\end{proof}

\section{Proposition \ref{prop:reductioninformal} and Theorem \ref{thm:extremalsinformal} Formal Statements and Proofs}\label{app:extremals-proofs}
Recall our assumption:
{
\renewcommand{\theassumption}{2}
\begin{assumption}\label{mainass}
For all $i$, $P_i$ is nonincreasing and satisfies
\begin{enumerate}
    \item \label{ass:a1_appendix} $P_i(q_i^{\mathrm{obs}})=p_{0,i}=\bar p-b_i \quad\text{with}\quad b_i\in[\underline b_i,\bar b_i].$
  \item \label{ass:a2_appendix} There exist $g_{i,L} < g_{i,U}<0$ with $g_{i,L}\le P'_i(q)\le g_{i,U}$ for a.e. $q\in\left[0, q_i^{\max}\right]$.
  \item \label{ass:a3_appendix} $P_i(0)\le M_i<\infty$
  \item \label{ass:a4_appendix} $P_i\in W^{1,\infty}\left(\left[0, q_i^{\max}\right]\right)$, the space of bounded Lipschitz-continuous functions on $\left[0, q_i^{\max}\right]$.
\end{enumerate}
\end{assumption}
}
Recall also that \(q_i(\cdot)\) denotes the left-continuous generalized inverse of \(P_i(\cdot)\):
\[
q_i(p)\coloneqq
\begin{cases}
\inf\left\{x\in\left[0,q_i^{\max}\right]\colon P_i(x)\le p\right\},&\text{if the set is nonempty},\\
q_i^{\max},&\text{otherwise}.
\end{cases}
\]
We first compute easy bounds for \(q_i\). By the fundamental theorem of calculus,
\(P_i(q)=p_{0,i}+\int_{q_i^{\mathrm{obs}}}^{q}P_i'(s)d s\). By Assumption~\ref{mainass}\ref{ass:a2_appendix}, \(P_i(q)\) lies between the two affine functions \(p_{0,i}+g_{i,L}(q-q_i^{\mathrm{obs}})\) and \(p_{0,i}+g_{i,U}(q-q_i^{\mathrm{obs}})\). Rearranging these inequalities yields
\[\min\left\{q_i^{\mathrm{obs}}+\frac{p-p_{0,i}}{g_{i,U}},q_i^{\mathrm{obs}}+\frac{p-p_{0,i}}{g_{i,L}}\right\} \leq q_i(p) \leq \max\left\{q_i^{\mathrm{obs}}+\frac{p-p_{0,i}}{g_{i,U}},q_i^{\mathrm{obs}}+\frac{p-p_{0,i}}{g_{i,L}}\right\}.\]
To incorporate the choke-price restriction \(P_i(0)\le M_i\) (Assumption~\ref{mainass}\ref{ass:a3_appendix}), we explicitly enforce that demand is zero at prices \(p\ge M_i\) and rule out \((q,p)\) pairs that cannot be connected to \(q=0\) without exceeding \(M_i\) under the least-negative slope \(g_{i,U}\).

We summarize the bounds with two functions. For \(p \in \mathbb{R}\), we define
\[
\ell_i(p)\coloneqq
\begin{cases}
0,&p\ge M_i,\\[4pt]
\max\left\{0,\min\left\{q_i^{\mathrm{obs}}+\frac{p-p_{0,i}}{g_{i,U}},q_i^{\mathrm{obs}}+\frac{p-p_{0,i}}{g_{i,L}}\right\}\right\},&p<M_i,
\end{cases}
\]
and
\[
u_i(p)\coloneqq
\begin{cases}
0,&p\ge M_i,\\[6pt]
\min\left\{q_i^{\max},\max\left\{q_i^{\mathrm{obs}}+\frac{p-p_{0,i}}{g_{i,U}},q_i^{\mathrm{obs}}+\frac{p-p_{0,i}}{g_{i,L}}\right\},\frac{M_i-p}{-g_{i,U}}\right\},&p<M_i.
\end{cases}
\]

We also characterize set of candidate common shadow prices consistent with the aggregate quantity constraint as follows. Define \(L(p)\coloneqq\sum_i\ell_i(p)\), \(U(p)\coloneqq\sum_i u_i(p)\), and
\[
\mathcal{I}\coloneqq\left\{p\in\mathbb{R}\colon L(p)\le\bar{Q}\le U(p)\right\}.
\]
Then \(q_i(p)\in\left[\ell_i(p),u_i(p)\right]\) for all \(p\in\mathcal{I}\).

To summarize things: at price \(p\), \(\ell_i(p)\) and \(u_i(p)\) are the smallest and largest (respectively) quantities in market \(i\) that are consistent with \(q_i(\cdot)\) (i) passing through \(\left(q_i^{\mathrm{obs}},p_{0,i}\right)\), (ii) obeying the slope bounds, and (iii) satisfying the choke-price restriction \(P_i(0)\le M_i\).
The set \(\mathcal{I}\) is the set of candidate common shadow prices \(p\) for which the interval constraints across markets are compatible with \(\sum_i q_i(p)=\bar{Q}\).

Next, for \(P\in\mathcal{P}\), define
\[
\Phi(P)\coloneqq\sum_{i=1}^n\int_{q_i^{\mathrm{obs}}}^{q_i^{*}(P)}\left(P_i(x)-p^{*}(P)\right)d x,
\]
where \(q^*(P)\) is the equal-shadow-price split of total quantity \(\bar{Q}\) and \(p^*(P)\) is its Lagrange multiplier.
Note that \(\sum_i q_i^*(P)=\bar{Q}\) and \(\sum_i q_i^{\mathrm{obs}}=\bar{Q}\), so using \(p^*(P)\) (rather than \(\bar{p}\)) is harmless since \(\sum_i\left(q_i^*(P)-q_i^{\mathrm{obs}}\right)=0\).
Recall \(\overline{\Phi}\coloneqq\max_{P\in\mathcal{P}}\Phi(P)\), and \(\underline{\Phi}\coloneqq\min_{P\in\mathcal{P}}\Phi(P)\).

We make the following interiority assumption

\begin{assumption}\label{ass:pathA}
The set \(\mathcal{I}\) is nonempty and compact.
Moreover, for every \(i\) and every \(p\in\mathcal{I}\), and every \(s\in\left[\min\left\{p,p_{0,i}\right\},\max\left\{p,p_{0,i}\right\}\right]\),
\[
0<q_i^{\mathrm{obs}}+\frac{s-p_{0,i}}{g_{i,U}}<q_i^{\max},\qquad
0<q_i^{\mathrm{obs}}+\frac{s-p_{0,i}}{g_{i,L}}<q_i^{\max},\qquad
s<M_i,
\]
and
\[
\max\left\{q_i^{\mathrm{obs}}+\frac{s-p_{0,i}}{g_{i,U}},q_i^{\mathrm{obs}}+\frac{s-p_{0,i}}{g_{i,L}}\right\}<\frac{M_i-s}{-g_{i,U}}.
\]
\end{assumption}
For our purposes, this assumption means that \(\ell_i(\cdot)\) and \(u_i(\cdot)\) simplify to
\[\tag{\(B1\)}\label{eq:simpell}\ell_i(p) = \min\left\{q_i^{\mathrm{obs}}+\frac{p-p_{0,i}}{g_{i,U}},q_i^{\mathrm{obs}}+\frac{p-p_{0,i}}{g_{i,L}}\right\},\]
and
\[\tag{\(B2\)}\label{eq:simpu}u_i(p) = \max\left\{q_i^{\mathrm{obs}}+\frac{p-p_{0,i}}{g_{i,U}},q_i^{\mathrm{obs}}+\frac{p-p_{0,i}}{g_{i,L}}\right\},\]
and take values in \(\left(0,q_i^{\max}\right)\). Henceforth, we take these simplified expressions to be the definitions of \(\ell_i\) and \(u_i\). Moreover, on \(p \in \mathcal{I}\), \(q_i(p)\) simplifies to
\(q_i(p) = P^{-1}_i(p)\).

Next, we introduce several more objects: for each \(i\), define
\[
\alpha_i\coloneqq\frac{1}{g_{i,U}},\qquad
\beta_i\coloneqq\frac{1}{g_{i,L}},\quad \text{and} \quad
d_i\coloneqq\beta_i-\alpha_i=\frac{1}{g_{i,L}}-\frac{1}{g_{i,U}}>0.
\]
For each \(p\in\mathcal{I}\), define the baseline endpoint choices
\[
q_i^{+}(p)\coloneqq
\begin{cases}
\ell_i(p),&p\ge p_{0,i},\\
u_i(p),&p<p_{0,i},
\end{cases}
\quad \text{and} \quad
q_i^{-}(p)\coloneqq
\begin{cases}
u_i(p),&p\ge p_{0,i},\\
\ell_i(p),&p<p_{0,i}.
\end{cases}
\]
Let
\[
\Delta^+(p)\coloneqq\bar{Q}-\sum_{i=1}^n q_i^{+}(p),\quad \text{and} \quad
\Delta^-(p)\coloneqq\bar{Q}-\sum_{i=1}^n q_i^{-}(p).
\]
and define active sets
\[
\mathcal{A}^+(p)\coloneqq
\begin{cases}
\left\{i\colon p\ge p_{0,i}\right\},&\Delta^+(p)>0,\\
\left\{i\colon p<p_{0,i}\right\},&\Delta^+(p)<0,\\
\emptyset,&\Delta^+(p)=0,
\end{cases}
\quad \text{and} \quad
\mathcal{A}^-(p)\coloneqq
\begin{cases}
\left\{i\colon p<p_{0,i}\right\},&\Delta^-(p)>0,\\
\left\{i\colon p\ge p_{0,i}\right\},&\Delta^-(p)<0,\\
\emptyset,&\Delta^-(p)=0.
\end{cases}
\]
Define also the triangle-penalty value functions
\[
\Psi^+(p)\coloneqq
\min_{\left\{\delta_i\right\}_{i\in\mathcal{A}^+(p)}} \left\{ \sum_{i\in\mathcal{A}^+(p)}\frac{\delta_i^2}{2d_i}\right\}
\quad\text{s.t.}\quad
\sum_{i\in\mathcal{A}^+(p)}\delta_i=\left|\Delta^+(p)\right|,\qquad
0\le\delta_i\le u_i(p)-\ell_i(p),
\]
and
\[
\Psi^-(p)\coloneqq
\min_{\left\{\delta_i\right\}_{i\in\mathcal{A}^-(p)}} \left\{ \sum_{i\in\mathcal{A}^-(p)}\frac{\delta_i^2}{2d_i}\right\}
\quad\text{s.t.}\quad
\sum_{i\in\mathcal{A}^-(p)}\delta_i=\left|\Delta^-(p)\right|,\qquad
0\le\delta_i\le u_i(p)-\ell_i(p).
\]
For a fixed \(p \in \mathcal{I}\) we term the solution to \(\max_{\substack{P \in \mathcal{P}\\ p^*(P) = p}} \Phi(P)\) (in turn, \(\min_{\substack{P \in \mathcal{P}\\ p^*(P) = p}} \Phi(P)\)) an \textit{inner optimizer}.
% Formal version of Proposition 3
{
\renewcommand{\theproposition}{3*}
\begin{proposition}\label{prop:reduction}
Posit Assumptions~\ref{mainass} and~\ref{ass:pathA}.
Then
\[
\overline\Phi=\max_{p\in\mathcal{I}}\left\{\bar{Q}p-\sum_{i=1}^n q_i^{\mathrm{obs}}p_{0,i}
-\sum_{i\colon p\ge p_{0,i}}\int_{p_{0,i}}^{p}\ell_i(s)d s
+\sum_{i\colon p<p_{0,i}}\int_{p}^{p_{0,i}}u_i(s)d s
-\Psi^+(p)\right\},
\tag{$B3$}\label{eq:Phi_sup_corr}
\]
and
\[
\underline\Phi=\min_{p\in\mathcal{I}}\left\{\bar{Q}p-\sum_{i=1}^n q_i^{\mathrm{obs}}p_{0,i}
-\sum_{i\colon p\ge p_{0,i}}\int_{p_{0,i}}^{p}u_i(s)d s
+\sum_{i\colon p<p_{0,i}}\int_{p}^{p_{0,i}}\ell_i(s)d s
+\Psi^-(p)\right\}.
\tag{$B4$}\label{eq:Phi_inf_corr}
\]
Moreover, for any \(p\in\mathcal{I}\), the inner optimizer may be chosen so that each inverse-quantity function \(q_i(\cdot)\) is continuous and piecewise affine on \(\left[\min\left\{p,p_{0,i}\right\},\max\left\{p,p_{0,i}\right\}\right]\), has a.e.\ slope in \(\left\{\alpha_i,\beta_i\right\}\), and is piecewise affine with at most one kink.
\end{proposition}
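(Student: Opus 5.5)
The plan is to change variables from inverse demands \(P_i\) to the inverse-quantity functions \(q_i(\cdot)\). Once that is done, \(\Phi\) becomes a \emph{linear} functional, and the feasible set becomes a product of slope-constrained path sets coupled only through the single adding-up constraint at the common shadow price.

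\textbf{Step 1: rewrite \(\Phi\) at a fixed common shadow price.} Fix \(P\in\mathcal P\) and set \(p\coloneqq p^*(P)\). Under Assumption~\ref{ass:pathA}, \(q_i^*(P)=q_i(p)=P_i^{-1}(p)\) lies strictly inside \((0,q_i^{\max})\), so every market is interior. Integrating by parts for the monotone pair \((P_i,q_i)\) gives
\[
\int_{q_i^{\mathrm{obs}}}^{q_i(p)}P_i(x)\,dx=p\,q_i(p)-p_{0,i}q_i^{\mathrm{obs}}-\int_{p_{0,i}}^{p}q_i(s)\,ds .
\]
Summing over \(i\) and using \(\sum_i q_i(p)=\sum_i q_i^{\mathrm{obs}}=\bar Q\) yields
\[
\Phi(P)=\bar Qp-\sum_i q_i^{\mathrm{obs}}p_{0,i}-\sum_i\int_{p_{0,i}}^{p}q_i(s)\,ds .
\]
The slope bounds translate into three conditions on \(q_i(\cdot)\):
\begin{itemize}
\item \(q_i(p_{0,i})=q_i^{\mathrm{obs}}\);
\item \(q_i\) is Lipschitz with \(q_i'(s)\in[\alpha_i,\beta_i]\) a.e. (note \(\alpha_i<\beta_i<0\));
\item hence \(\ell_i\le q_i\le u_i\) on the relevant interval, by~\eqref{eq:simpell}--\eqref{eq:simpu}.
\end{itemize}
Conversely, any such \(q_i\) on \([\min\{p,p_{0,i}\},\max\{p,p_{0,i}\}]\) inverts to a piecewise-defined \(P_i\) with slopes in \([g_{i,L},g_{i,U}]\). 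This \(P_i\) can be extended affinely with slope \(g_{i,U}\) to all of \([0,q_i^{\max}]\). The strict inequalities in Assumption~\ref{ass:pathA}, including the condition involving \(M_i\), guarantee the extension stays nonnegative, stays below \(M_i\) at \(0\), and remains in \(\mathcal P_i\). Since the \(P_i\) are strictly decreasing and interior, the resulting profile has \(p^*(P)=p\) exactly when \(\sum_i q_i(p)=\bar Q\). Therefore
\[
\overline\Phi=\sup_{p\in\mathcal I}\ \sup\Big\{\cdots\Big\},
\]
where the inner problem is over admissible paths with \(\sum_i q_i(p)=\bar Q\), and similarly for \(\underline\Phi\).

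\textbf{Step 2: a one-market lemma with a prescribed terminal value.} Fix market \(i\), the shadow price \(p\), and a terminal value \(x_i\in[\ell_i(p),u_i(p)]\). I will show that the extremum of \(\int_{p_{0,i}}^p q_i\) over paths from \((p_{0,i},q_i^{\mathrm{obs}})\) to \((p,x_i)\) with slopes in \([\alpha_i,\beta_i]\) is attained by a path with one kink.

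Take the case \(p\ge p_{0,i}\) with the goal of minimizing \(\int q_i\). The path pointwise dominates the lower envelope, and among paths ending at \(x_i\) the best one is:
\begin{itemize}
\item follow \(\ell_i\) for as long as possible;
\item then switch to the other extreme slope to reach \(x_i\).
\end{itemize}
This follows by a pointwise comparison: any admissible path lies above the maximum of \(\ell_i\) and the backward cone from \((p,x_i)\), and that maximum is itself admissible. The excess area over \(\int\ell_i\) is a triangle with vertical side \(\delta_i=x_i-\ell_i(p)\) and slopes differing by \(d_i\), so it equals \(\delta_i^2/(2d_i)\).

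The other sign configurations are symmetric. For \(p<p_{0,i}\) the integral enters with a reversed orientation, which is why the baseline endpoint \(q_i^+\) is \(u_i\) in the maximization. For the minimization the roles of \(\ell_i\) and \(u_i\) swap, giving \(q_i^-\). This lemma also delivers the ``at most one kink, slopes in \(\{\alpha_i,\beta_i\}\)'' claim.

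\textbf{Step 3: restore adding-up and optimize over \(p\).} Substitute the lemma into the inner problem. Each market's deviation from its baseline endpoint \(q_i^{\pm}(p)\) costs \(\delta_i^2/(2d_i)\), and a market can move only toward the interior of \([\ell_i,u_i]\). If \(\Delta^+(p)>0\), only markets sitting at \(\ell_i\), namely those with \(p\ge p_{0,i}\), can absorb increases; this is exactly \(\mathcal A^+(p)\), and similarly for \(\Delta^+(p)<0\). Hence the inner optimum equals the bracket in~\eqref{eq:Phi_sup_corr} minus \(\Psi^+(p)\); the same reasoning gives~\eqref{eq:Phi_inf_corr}. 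Feasibility of the \(\Psi\)-programs follows from \(p\in\mathcal I\), since \(L(p)\le\bar Q\le U(p)\).

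The outer \(\sup\) and \(\inf\) are attained as follows. The objective is continuous in \(p\): the \(\ell_i,u_i\) are continuous, and \(\Psi^\pm\) is the value of a strictly convex program whose feasible set varies continuously, so Berge's theorem applies. Since \(\mathcal I\) is compact, the extrema are attained, which justifies replacing \(\sup\) with \(\max\).

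\textbf{Main obstacle.} I expect the hardest part to be the rigorous equivalence in Step~1 between the sup over \(\mathcal P\) and the sup over \(p\) together with paths. This requires handling the generalized inverse carefully, checking that the constructed \(P_i\) satisfy Assumption~\ref{mainass} globally including the choke cap, and checking that \(p^*(P)\) is exactly the chosen \(p\) rather than merely a point of \(\mathcal I\). I would first settle this using the strict inequalities of Assumption~\ref{ass:pathA}, then verify the pointwise-domination argument behind the triangle penalty in Step~2.
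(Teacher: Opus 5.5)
Your proposal is correct and is essentially the paper's proof. Step~1 is Lemma~\ref{lem:2} (the identity \eqref{eq:id_sum}) together with Lemma~\ref{lem:3}. Your envelope comparison in Step~2 is exactly the pointwise bound \(U\ge U^*\) that drives Lemma~\ref{lem:4} and its bang-bang equality case. Step~3 is the exchange argument of Claim~\ref{clm:active_set_deviations}, followed by continuity of \(\Psi^{\pm}\) (Lemma~\ref{lem:a}) and compactness of \(\mathcal I\).

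Two small cautions. First, because the index set \(\mathcal A^{\pm}(p)\) jumps with \(p\), Berge's theorem should be applied, as the paper does, to the equivalent program \(\widetilde\Psi^{+}\) over all \(n\) coordinates with continuously moving box bounds. Second, Assumption~\ref{ass:pathA} does give \(P_i(0)<M_i\) for your slope-\(g_{i,U}\) extension, but it does not by itself give \(P_i\ge 0\) all the way to \(q_i^{\max}\): a path ending at \(\ell_i(p)\) with \(p<p_{0,i}\) can force \(P_i(q_i^{\max})<0\). The paper's own construction also leaves this point unaddressed.
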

}
We first assemble several lemmas, deferring the proofs of these lemmas to the supplementary appendix.
\begin{lemma}\label{lem:a}
    Under Assumptions \ref{mainass} and \ref{ass:pathA}, the functions \(\Psi^+(\cdot)\) and \(\Psi^-(\cdot)\) are continuous on \(\mathcal I\).
\end{lemma}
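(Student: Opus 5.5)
The plan is to rewrite \(\Psi^+\) (and symmetrically \(\Psi^-\)) as the value of one parametric convex quadratic program over the fixed index set \(\{1,\dots,n\}\), with data continuous in \(p\), and then invoke Berge's Maximum Theorem. The only real difficulty is that the active set \(\mathcal A^+(p)\) jumps. It jumps when \(p\) crosses some \(p_{0,i}\), and when \(\Delta^+(p)\) changes sign. I would show that both kinds of jump are harmless.

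\emph{Step 1 (continuity of the primitives).} Under Assumption~\ref{ass:pathA}, \(\ell_i\) and \(u_i\) are given by \eqref{eq:simpell}--\eqref{eq:simpu}, so they are continuous and piecewise affine. At \(p=p_{0,i}\) both equal \(q_i^{\mathrm{obs}}\). Hence \(q_i^{+}\) and \(q_i^-\) are continuous even across \(p_{0,i}\), and so are \(\Delta^\pm\). The box widths are
\[
w_i(p)\coloneqq u_i(p)-\ell_i(p)=d_i\left|p-p_{0,i}\right|,
\]
which are continuous and vanish exactly at \(p_{0,i}\).

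\emph{Step 2 (fixed-index reformulation).} Define capacities \(\kappa^{\ge}_i(p)\coloneqq w_i(p)\mathbf 1\{p\ge p_{0,i}\}\) and \(\kappa^{<}_i(p)\coloneqq w_i(p)\mathbf 1\{p<p_{0,i}\}\). Both are continuous, because \(w_i(p_{0,i})=0\). Then
\[
\Psi^+(p)=\min\Big\{\textstyle\sum_{i=1}^n \delta_i^2/(2d_i)\colon \sum_i\delta_i=|\Delta^+(p)|,\ 0\le\delta_i\le\kappa_i(p)\Big\},
\]
where \(\kappa=\kappa^{\ge}\) if \(\Delta^+(p)>0\) and \(\kappa=\kappa^{<}\) if \(\Delta^+(p)<0\). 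Markets outside \(\mathcal A^+(p)\) are forced to \(\delta_i=0\), so this matches the original definition.

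Feasibility on \(\mathcal I\) follows from \(L(p)\le\bar Q\le U(p)\). For example, if \(\Delta^+(p)>0\), then
\[
|\Delta^+(p)|\le U(p)-\textstyle\sum_i q_i^+(p)=\sum_{i:p\ge p_{0,i}} w_i(p)=\sum_i\kappa^{\ge}_i(p).
\]
The case \(\Delta^+(p)<0\) is analogous, using \(L\).

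\emph{Step 3 (continuity on each sign region).} On any relatively open subset of \(\mathcal I\) where \(\Delta^+\) has a fixed strict sign, the data \((t,\kappa)=(|\Delta^+(p)|,\kappa(p))\) are continuous. The feasible correspondence \(\{\delta\colon \sum_i\delta_i=t,\ 0\le\delta\le\kappa\}\) is compact-valued and has closed graph. For lower hemicontinuity I would use the explicit selection \(\delta_i=t\kappa_i/\sum_j\kappa_j\), and then perturb a given feasible \(\delta\) toward it by convex combination. When \(\sum_j\kappa_j=0\) we have \(t=0\), and the set is the single point \(\{0\}\). As a fallback, one can use the water-filling form of the optimizer, \(\delta_i=\min\{\kappa_i,d_i\mu\}\) with \(\mu\) the unique root of a continuous monotone equation. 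Berge then gives continuity of the value.

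\emph{Step 4 (sign changes of \(\Delta^+\)).} This is the main obstacle. At a point where \(\Delta^+(p)=0\) we have \(\Psi^+(p)=0\). Nearby, \(\Psi^+\) is evaluated at a feasible point, namely the proportional selection above. Since each \(\delta_i\le|\Delta^+|\), this gives
\[
\Psi^+\le \frac{|\Delta^+|^2}{2\min_i d_i},
\]
which tends to \(0\) by continuity of \(\Delta^+\). So \(\Psi^+\) is continuous there, and the same argument with \(\Delta^-\), \(\kappa^{<}\) and \(\kappa^{\ge}\) swapped handles \(\Psi^-\).
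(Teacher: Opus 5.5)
Your proof is correct. It ends where the paper's does, with Berge's theorem applied to a parametric quadratic program whose box data are continuous because the width $u_i-\ell_i=d_i|p-p_{0,i}|$ vanishes at $p_{0,i}$. The difference is in how you handle the switching of the active set $\mathcal A^+(p)$.

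The paper never splits on the sign of $\Delta^+$. It introduces the signed program $\widetilde\Psi^+(p)=\min\{\sum_i z_i^2/(2d_i)\colon \sum_i z_i=\Delta^+(p),\ \ell_i(p)-q_i^+(p)\le z_i\le u_i(p)-q_i^+(p)\}$, whose bounds are exactly your $-\kappa_i^{<}(p)$ and $\kappa_i^{\ge}(p)$. It then proves that a minimizer never mixes signs: zeroing the wrong-signed coordinates and shaving the others strictly lowers the objective. This gives $\widetilde\Psi^+=\Psi^+$, and a single application of Berge on all of $\mathcal I$ finishes the proof.

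You instead keep nonnegative deviations and apply Berge only on relatively open pieces of $\mathcal I$ where $\Delta^+$ has a strict sign. You close the remaining points with the squeeze $0\le\Psi^+\le|\Delta^+|^2/(2\min_i d_i)$. Your version dispenses with the sign-consistency claim and gives an explicit modulus of continuity at zeros of $\Delta^+$, at the cost of a case split; the paper's is a single uniform argument.

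You also address lower hemicontinuity, which the paper only asserts. Your convex-combination sketch works on the sign regions: there $t>0$ forces $\sum_j\kappa_j>0$, so the proportional selection is continuous and the admissible step toward a given feasible $\delta$ tends to $1$. One small slip: in your water-filling fallback the multiplier $\mu$ is not unique when $t=\sum_j\kappa_j$, although the optimizer still is.
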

Next, for any \(p\in\mathcal{I}\), define
\(\mathcal{I}_+(p)\coloneqq\left\{i\colon p\ge p_{0,i}\right\}\) and \(\mathcal{I}_-(p)\coloneqq\left\{i\colon p<p_{0,i}\right\}\).
\begin{lemma}\label{lem:2}
    Posit Assumptions \ref{mainass} and \ref{ass:pathA} and fix \(p \in \mathcal{I}\). Then, the inner problem of maximizing \(\Phi(P)\) over \(P \in \mathcal{P}\) satisfying \(p^*(P) = p\) is equivalent to minimizing \(\int_{p_{0,i}}^{p}q_i\) for \(i\in\mathcal{I}_+(p)\) and maximizing \(\int_{p}^{p_{0,i}}q_i\) for \(i\in\mathcal{I}_-(p)\), subject to feasibility and \(\sum_i q_i(p)=\bar{Q}\).
\end{lemma}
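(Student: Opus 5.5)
Fix \(p\in\mathcal{I}\) and restrict to \(P\in\mathcal{P}\) with \(p^*(P)=p\). Under Assumption~\ref{ass:pathA} every market is interior at \(p\), so \(q_i^*(P)=q_i(p)=P_i^{-1}(p)\), and on the relevant range \(P_i\) is a strictly decreasing bijection with inverse \(q_i(\cdot)\). The plan is to rewrite each summand of \(\Phi(P)\) as an integral of \(q_i\) over prices, using the layer-cake identity for the area between an inverse demand curve and a horizontal line.

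The key step is the change of variables. For market \(i\), the integral \(\int_{q_i^{\mathrm{obs}}}^{q_i(p)}(P_i(x)-p)\,dx\) is a signed area bounded by the curve \(P_i\), the vertical line at \(q_i^{\mathrm{obs}}\), and the horizontal line at height \(p\). Swapping the order of integration gives
\[
\int_{q_i^{\mathrm{obs}}}^{q_i(p)}\bigl(P_i(x)-p\bigr)dx=q_i^{\mathrm{obs}}(p-p_{0,i})-\int_{p_{0,i}}^{p}\bigl(q_i(s)-q_i^{\mathrm{obs}}\bigr)ds\cdot(-1)\cdots,
\]
so I will simply verify the clean form
\[
\int_{q_i^{\mathrm{obs}}}^{q_i(p)}\bigl(P_i(x)-p\bigr)dx=\int_{p}^{p_{0,i}}\bigl(q_i(s)-q_i^{\mathrm{obs}}\bigr)ds .
\]
To check it, write \(P_i(x)-p=\int_p^{P_i(x)}ds\), apply Fubini, and use monotonicity: \(x\) lies between \(q_i^{\mathrm{obs}}\) and \(q_i(s)\) exactly when \(s\) lies between \(P_i(x)\) and \(p\). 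This must be done once for the case \(p\ge p_{0,i}\), where \(q_i(p)\le q_i^{\mathrm{obs}}\), and once for the reverse case. The signs here are the delicate part.

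Summing over \(i\) and using \(\sum_i q_i^{\mathrm{obs}}=\bar Q\) gives
\[
\Phi(P)=\bar Qp-\sum_i q_i^{\mathrm{obs}}p_{0,i}-\sum_{i\in\mathcal{I}_+(p)}\int_{p_{0,i}}^{p}q_i(s)\,ds+\sum_{i\in\mathcal{I}_-(p)}\int_{p}^{p_{0,i}}q_i(s)\,ds .
\]
This matches the structure of (\ref{eq:Phi_sup_corr}) and (\ref{eq:Phi_inf_corr}). The first two terms are constant once \(p\) is fixed, so maximizing \(\Phi\) is equivalent to minimizing \(\int_{p_{0,i}}^{p}q_i\) for \(i\in\mathcal{I}_+(p)\) and maximizing \(\int_p^{p_{0,i}}q_i\) for \(i\in\mathcal{I}_-(p)\).

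The remaining task is to translate the constraints. The admissible class \(\mathcal{P}\) with \(p^*(P)=p\) corresponds exactly to inverse-quantity functions \(q_i\) that:
\begin{enumerate}
\item pass through \((p_{0,i},q_i^{\mathrm{obs}})\),
\item have a.e. slope in \([\beta_i,\alpha_i]\), which is the reciprocal of the slope bounds \(P_i'\in[g_{i,L},g_{i,U}]\),
\item satisfy the adding-up condition \(\sum_i q_i(p)=\bar Q\).
\end{enumerate}

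This correspondence is the main obstacle. For it I will invoke Assumption~\ref{ass:pathA} to show that the choke cap and the capacity bound \(q_i^{\max}\) do not bind on the interval between \(p\) and \(p_{0,i}\). I will also show that any such \(q_i\) on that interval can be extended to a full \(P_i\in\mathcal{P}\) on \([0,q_i^{\max}]\): extend with slope \(g_{i,U}\) to the left and with an admissible slope to the right, then check \(P_i(0)\le M_i\) using the last inequality of Assumption~\ref{ass:pathA}. Conversely, \(P_i\in W^{1,\infty}\) with \(P_i'\le g_{i,U}<0\) makes \(q_i\) Lipschitz with the reciprocal slope bounds, via the inverse function rule for bi-Lipschitz monotone maps.
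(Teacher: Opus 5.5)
Your proposal is correct and follows the paper's proof. Your ``clean form'' is a rearrangement of the paper's per-market identity \eqref{eq:id_i}, which the paper obtains by the fundamental theorem of calculus rather than Fubini; it is then summed using $\sum_i q_i^{\mathrm{obs}}=\bar Q$ and split by the sign of $p-p_{0,i}$, and your feasibility/extension discussion goes beyond what the paper's proof addresses, since it leaves ``feasibility'' implicit. One slip: because $\alpha_i=1/g_{i,U}<1/g_{i,L}=\beta_i$, the reciprocal slope interval is $[\alpha_i,\beta_i]$, not $[\beta_i,\alpha_i]$, and the garbled first display should simply be dropped.
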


\begin{lemma}\label{lem:3}
    \(q_i'(s)\in\left[\alpha_i,\beta_i\right]\) for a.e. \(s\in\left[\min\left\{p,p_{0,i}\right\},\max\left\{p,p_{0,i}\right\}\right]\).
\end{lemma}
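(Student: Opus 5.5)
The plan is to avoid differentiating $q_i$ directly and instead bound its difference quotients. This turns the slope restriction in Assumption~\ref{mainass}\ref{ass:a2_appendix} into a two-sided Lipschitz statement for the inverse. Fix $i$ and $p\in\mathcal I$, and write $J\coloneqq\left[\min\{p,p_{0,i}\},\max\{p,p_{0,i}\}\right]$.

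\emph{Step 1: $q_i$ is a genuine inverse on $J$.} By the simplification following Assumption~\ref{ass:pathA}, for every $s\in J$ we have $\ell_i(s),u_i(s)\in(0,q_i^{\max})$, and $q_i(s)\in[\ell_i(s),u_i(s)]$. Hence $q_i(s)$ is interior to $[0,q_i^{\max}]$.

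Since $P_i$ is absolutely continuous (Assumption~\ref{mainass}\ref{ass:a4_appendix}) with $P_i'\le g_{i,U}<0$ a.e., it satisfies
\[
P_i(x)-P_i(y)=\int_y^x P_i'\le g_{i,U}(x-y)<0\qquad\text{for } y<x,
\]
so $P_i$ is strictly decreasing. By continuity, the infimum defining $q_i(s)$ is attained at an interior point. This gives $P_i(q_i(s))=s$, i.e.\ $q_i=P_i^{-1}$ on $J$, as the paper asserts.

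\emph{Step 2: difference quotients.} Take $s<t$ in $J$ and set $x\coloneqq q_i(s)$, $y\coloneqq q_i(t)$. Strict monotonicity gives $y<x$. By the fundamental theorem of calculus and the a.e.\ slope bounds,
\[
t-s=P_i(y)-P_i(x)=-\int_y^x P_i'(z)\,dz\in\left[-g_{i,U}(x-y),\,-g_{i,L}(x-y)\right].
\]
Dividing by $t-s>0$ and rearranging gives
\[
\frac{q_i(t)-q_i(s)}{t-s}=\frac{y-x}{t-s}\in\left[\frac{1}{g_{i,U}},\frac{1}{g_{i,L}}\right]=[\alpha_i,\beta_i].
\]
The direction of the interval must be checked: since $g_{i,L}<g_{i,U}<0$, we have $1/g_{i,U}<1/g_{i,L}$, so $\alpha_i<\beta_i$ as in the definition of $d_i>0$.

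\emph{Step 3: conclude.} Step 2 shows $q_i$ is Lipschitz on $J$, with constant $\max\{|\alpha_i|,|\beta_i|\}$. By Rademacher's theorem it is differentiable a.e. on $J$. At each point of differentiability, the derivative is a limit of difference quotients lying in the closed interval $[\alpha_i,\beta_i]$, so it lies there too. This proves the lemma.

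The only delicate point is Step 1. There we must ensure that Assumption~\ref{ass:pathA} covers every $s\in J$, not only the endpoints $p$ and $p_{0,i}$; it does, because it is stated for all $s$ between $p$ and $p_{0,i}$. This is what rules out the generalized inverse sticking at $0$ or $q_i^{\max}$, where the slope bound could fail. The difference-quotient route in Step 2 deliberately avoids the chain rule $q_i'(s)=1/P_i'(q_i(s))$, which would require a null-set argument about where $P_i'$ exists.
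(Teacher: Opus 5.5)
Your proof is correct and follows the same route as the paper: bound the secant slopes of $P_i$ by $[g_{i,L},g_{i,U}]$, invert them to get secant slopes of $q_i$ in $[\alpha_i,\beta_i]$, and pass to the limit at the a.e.\ differentiability points of the Lipschitz function $q_i$. Your Step~1 adds something the paper leaves implicit: Assumption~\ref{ass:pathA} covers every $s$ between $p$ and $p_{0,i}$, so $q_i$ is a genuine inverse of $P_i$ on the whole interval, whereas the paper only asserts this for $p\in\mathcal I$.
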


\begin{lemma}\label{lem:4}
    Now let \(t_0<t_1\), \(\alpha<\beta\), and let function \(q\) be absolutely continuous on \(\left[t_0,t_1\right]\) with \(q(t_0)=q_0\) and \(q'(t)\in\left[\alpha,\beta\right]\) a.e. Define \(\underline{q}(t)\coloneqq q_0+\alpha\left(t-t_0\right)\).
    If \(q(t_1)=\underline{q}(t_1)+\delta\) for some \(\delta\in\left[0,\left(\beta-\alpha\right)\left(t_1-t_0\right)\right]\), then
\[
\int_{t_0}^{t_1}q(t)d t\ge\int_{t_0}^{t_1}\underline{q}(t)d t+\frac{\delta^2}{2\left(\beta-\alpha\right)}.
\tag{B5}\label{eq:triangle-app}
\]
Moreover, equality holds if and only if \(q\) is ''bang-bang;'' \(q'(t)=\alpha\) a.e. on \(\left[t_0,t_1-h\right]\) and \(q'(t)=\beta\) a.e. on \(\left[t_1-h,t_1\right]\), where \(h=\delta/\left(\beta-\alpha\right)\).
\end{lemma}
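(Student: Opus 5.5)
We compare $q$ directly with its linear minorant: set $r(t)\coloneqq q(t)-\underline q(t)$. Then $r$ is absolutely continuous on $[t_0,t_1]$, with $r(t_0)=0$, $r(t_1)=\delta$, and $r'(t)\in[0,\beta-\alpha]$ a.e. Since
\[
\int_{t_0}^{t_1}q\,dt-\int_{t_0}^{t_1}\underline q\,dt=\int_{t_0}^{t_1}r(t)\,dt,
\]
the inequality \eqref{eq:triangle-app} reduces to showing $\int_{t_0}^{t_1} r\ge \delta^2/(2(\beta-\alpha))$ over this class of functions. Write $\kappa\coloneqq\beta-\alpha>0$ and $h=\delta/\kappa\in[0,t_1-t_0]$.

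For the pointwise lower bound, integrate backward from the endpoint. For every $t$,
\[
r(t)=\delta-\int_t^{t_1}r'(s)\,ds\ge \delta-\kappa(t_1-t).
\]
Monotonicity ($r'\ge0$, $r(t_0)=0$) also gives $r(t)\ge0$. Hence $r(t)\ge \bar r(t)\coloneqq\max\{0,\delta-\kappa(t_1-t)\}$. The function $\bar r$ vanishes on $[t_0,t_1-h]$ and rises linearly with slope $\kappa$ on $[t_1-h,t_1]$; the constraint $h\le t_1-t_0$ is exactly what makes it admissible. Integrating gives $\int \bar r=\tfrac12 h\delta=\delta^2/(2\kappa)$, the triangle area, which proves the inequality.

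For the equality case, suppose equality holds. Then $\int (r-\bar r)=0$ with $r-\bar r\ge0$ continuous, so $r\equiv\bar r$. Therefore $r'=0$ a.e. on $[t_0,t_1-h]$ and $r'=\kappa$ a.e. on $[t_1-h,t_1]$, that is, $q'=\alpha$ and $q'=\beta$ respectively. Conversely, the bang-bang $q$ gives $r=\bar r$, so equality is attained. The degenerate case $\delta=0$ (so $h=0$) is covered, with $r\equiv0$. I expect no real obstacle. The only care needed is using absolute continuity to justify $r(t_1)-r(t)=\int_t^{t_1}r'$, and uniqueness holds only up to a.e. modification of $q'$, which matches the "a.e." phrasing in the statement.
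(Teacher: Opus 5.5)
Your proof is correct and is essentially the paper's argument. Your $r=q-\underline q$ is exactly the paper's cumulative mass $U(t)=\int_{t_0}^{t}u(s)\,ds$, and both proofs bound it pointwise below by $\max\{0,(\beta-\alpha)(t-(t_1-h))\}$ and obtain the equality case from forcing $r$ to equal this triangle function. The one difference is that the paper reaches $\int U$ through a Fubini step to $\int(t_1-s)u(s)\,ds$ and then an integration by parts that undoes it, whereas you write $\int q-\int\underline q=\int r$ directly, which is cleaner.
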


\begin{proof}[Proof of Proposition \ref{prop:reduction}]
Fix \(p\in\mathcal{I}\) and consider the inner maximization problem of maximizing \(\Phi(P)\), which by Lemma \ref{lem:2} is equivalent to minimizing \(\int_{p_{0,i}}^{p}q_i\) for \(i\in\mathcal{I}_+(p)\) and maximizing \(\int_{p}^{p_{0,i}}q_i\) for \(i\in\mathcal{I}_-(p)\), subject to \(\sum_i q_i(p)=\bar{Q}\). Ignoring the constraint \(\sum_i q_i(p)=\bar{Q}\) and optimizing the integral term in \eqref{eq:id_sum} with \(p^*=p\), \(\sum_{i=1}^n\int_{p_{0,i}}^{p^*}q_i(s)d s\), the solution is \(q_i=\ell_i\) for \(i\in\mathcal{I}_+(p)\) and \(q_i=u_i\) for \(i\in\mathcal{I}_-(p)\), corresponding to \(q_i^{+}(p)\).

If this sum does not equal \(\bar{Q}\), the constraint \(\sum_i q_i(p)=\bar{Q}\) forces deviations from the endpoints (\(\ell_i\) and \(u_i\)) at \(p\) on exactly the set \(\mathcal{A}^+(p)\), as follows (with Claim \ref{clm:active_set_deviations} formally proved in the supplementary appendix). Note that \(\sum_i q^+_i(p) \neq \bar{Q}\) if and only if \(\Delta^+(p) \neq 0\).

\begin{claim}\label{clm:active_set_deviations}
Fix \(p\in \mathcal I\).
\begin{enumerate}
    \item If \(\Delta^+(p)>0\), there exist numbers \(\left\{\delta_i\right\}_{i\in \mathcal{I}_+(p)}\) such that \(0\le\delta_i\le u_i(p)-\ell_i(p)\) for all \(i\in \mathcal{I}_+(p)\) and \(\sum_{i\in \mathcal{I}_+(p)}\delta_i=\Delta^+(p)\), and the endpoint vector \(q(p)\) defined by
\[
q_i(p)=\ell_i(p)+\delta_i\ \text{for }i\in \mathcal{I}_+(p),\qquad q_i(p)=u_i(p)\ \text{for }i\in \mathcal{I}_-(p)
\]
satisfies \(\sum_{i=1}^n q_i(p)=\bar{Q}\). Moreover, in the inner problem, there is an optimizer with \(q_i(p)=u_i(p)\) for all \(i\in \mathcal{I}_-(p)\); equivalently, deviations from the baseline endpoints \(q_i^+(p)\) may be taken without loss of optimality to occur only on \(\mathcal{A}^+(p)=\mathcal{I}_+(p)\).

\item If \(\Delta^+(p)<0\), there exist numbers \(\left\{\delta_i\right\}_{i\in \mathcal{I}_-(p)}\) such that \(0\le\delta_i\le u_i(p)-\ell_i(p)\) for all \(i\in \mathcal{I}_-(p)\) and \(\sum_{i\in \mathcal{I}_-(p)}\delta_i=-\Delta^+(p)\), and the endpoint vector \(q(p)\) defined by
\[
q_i(p)=\ell_i(p)\ \text{for }i\in \mathcal{I}_+(p),\qquad q_i(p)=u_i(p)-\delta_i\ \text{for }i\in \mathcal{I}_-(p)
\]
satisfies \(\sum_{i=1}^n q_i(p)=\bar{Q}\). Moreover, in the inner problem, there is an optimizer with \(q_i(p)=\ell_i(p)\) for all \(i\in \mathcal{I}_+(p)\); equivalently, deviations from the baseline endpoints \(q_i^+(p)\) may be taken without loss of optimality to occur only on \(\mathcal{A}^+(p)=\mathcal{I}_-(p)\).
\end{enumerate}
\end{claim}

Now fix \(p\in\mathcal{I}\). By Claim \ref{clm:active_set_deviations}, we may restrict attention to endpoint deviations supported on \(\mathcal{A}^+(p)\).
Let \(\left\{\delta_i\right\}_{i\in\mathcal{A}^+(p)}\) be feasible for \(\Psi^+(p)\), and set \(\delta_i\coloneqq 0\) for \(i\notin\mathcal{A}^+(p)\).
Then \(q_i(p)=\ell_i(p)+\delta_i\) for all \(i\) with \(p\ge p_{0,i}\), and \(q_i(p)=u_i(p)-\delta_i\) for all \(i\) with \(p<p_{0,i}\).

By Lemma \ref{lem:3}, \(q_i'(s) \in \left[\alpha_i,\beta_i\right]\) a.e. on \(s\in\left[\min\left\{p,p_{0,i}\right\},\max\left\{p,p_{0,i}\right\}\right]\), so if \(p\ge p_{0,i}\) and \(q_i(p)=\ell_i(p)+\delta_i\), we can apply Lemma \ref{lem:4} with \(\alpha=\alpha_i\), \(\beta=\beta_i\), \(t_0=p_{0,i}\), \(t_1=p\) to obtain
\[
\int_{p_{0,i}}^{p}q_i(s) ds\ge\int_{p_{0,i}}^{p}\ell_i(s) ds+\frac{\delta_i^2}{2d_i}.
\tag{\(B6\)}\label{eq:tri_plus}
\]
Analogously, if \(p<p_{0,i}\) and \(q_i(p)=u_i(p)-\delta_i\), then by a change of variables reversing the integration limits, applying Lemma \ref{lem:4} yields
\[
\int_{p}^{p_{0,i}}q_i(s) ds\le\int_{p}^{p_{0,i}}u_i(s) ds-\frac{\delta_i^2}{2d_i}.
\tag{\(B7\)}\label{eq:tri_minus}
\]

Using Lemma \ref{lem:2} and substituting \eqref{eq:tri_plus} and \eqref{eq:tri_minus} into \eqref{eq:id_sum} shows that for fixed \(p\in\mathcal{I}\),
\[\begin{split}
    \sup_{\substack{P\in\mathcal{P}\\ p^*(P)=p}}\Phi(P)
\leq \bar{Q}p-\sum_{i=1}^n q_i^{\mathrm{obs}}p_{0,i}
&-\sum_{i\colon p\ge p_{0,i}}\int_{p_{0,i}}^{p}\ell_i(s) ds
+\sum_{i\colon p<p_{0,i}}\int_{p}^{p_{0,i}}u_i(s) ds\\
&-\Psi^+(p).
\end{split}\tag{\(B8\)}\label{eq:ub}
\]

Continue to fix \(p\in\mathcal{I}\) and let \(\left\{\delta_i^*\right\}_{i\in\mathcal{A}^+(p)}\) attain \(\Psi^+(p)\)
(existence follows from compactness of the feasible set and continuity of the objective).
Extend by setting \(\delta_i^*\coloneqq 0\) for \(i\notin\mathcal{A}^+(p)\), and define \(h_i\coloneqq \delta_i^*/d_i\) for each \(i\).
Construct \(q_i^*(\cdot)\) on \(\left[\min\left\{p,p_{0,i}\right\},\max\left\{p,p_{0,i}\right\}\right]\) by the equality cases in \eqref{eq:tri_plus} and \eqref{eq:tri_minus}:
\begin{itemize}
\item If \(p\ge p_{0,i}\), set \(q_i^*(s)=\ell_i(s)\) for \(s\in\left[p_{0,i},p-h_i\right]\), and set
\[
q_i^*(s)\coloneqq \ell_i\left(p-h_i\right)+\beta_i\left(s-\left(p-h_i\right)\right)
\quad \text{for } s\in\left[p-h_i,p\right].
\]
\item If \(p<p_{0,i}\), set \(q_i^*(s)=u_i(s)\) for \(s\in\left[p+h_i,p_{0,i}\right]\), and set
\[
q_i^*(s)\coloneqq u_i\left(p+h_i\right)+\beta_i\left(s-\left(p+h_i\right)\right)
\quad \text{for } s\in\left[p,p+h_i\right].
\]
\end{itemize}
By construction, each \(q_i^*(\cdot)\) is continuous and piecewise affine, has a.e. slope in \(\left\{\alpha_i,\beta_i\right\}\), and satisfies \(q_i^*(p_{0,i})=q_i^{\mathrm{obs}}\).
Moreover, \(\sum_{i=1}^n q_i^*(p)=\sum_{i=1}^n q_i^+(p)+\Delta^+(p)=\bar{Q}\), and \(q^*\) attains equality in \eqref{eq:tri_plus} and \eqref{eq:tri_minus}.
Therefore, the upper bound \eqref{eq:ub} is tight and \(q^*(\cdot)\) attains the inner optimum at this \(p\).

Define \(P_i^*\) as the inverse of \(q_i^*\).
Then \(P_i^*\) is continuous and piecewise affine on \(\left[0,q_i^{\max}\right]\), with a.e. derivative in \(\left\{g_{i,L},g_{i,U}\right\}\), satisfies \(P_i^*\left(q_i^{\mathrm{obs}}\right)=p_{0,i}\), and satisfies \(P_i^*(0)\le M_i\) because \(q_i^*(p)=0\) for all \(p\ge M_i\) is feasible under the choke-aware bounds.
Thus, \(P^*\in\mathcal{P}\) attains the inner optimum at \(p\).

Finally, we maximize over \(p \in \mathcal{I}\). Define \(F^+(p)\) as the bracketed expression in \eqref{eq:Phi_sup_corr}. By Lemma \ref{lem:a}, \(\Psi^+(p)\) is continuous in \(p\), so \(F^+(p)\) is continuous on the compact set \(\mathcal{I}\) and attains a maximum, proving \eqref{eq:Phi_sup_corr}.

The argument for the lower bound \(\underline{\Phi}\) is identical, \textit{mutatis mutandis}.
\end{proof}

{
\renewcommand{\thetheorem}{3*}
\begin{theorem}\label{thm:extremals}
Posit Assumptions~\ref{mainass} and~\ref{ass:pathA}.
Both \(\overline{\Phi}\) and \(\underline{\Phi}\) are attained by some \(P^*\in\mathcal{P}\).
Moreover, each \(P_i^*\) may be chosen continuous and piecewise affine on \(\left[0,q_i^{\max}\right]\), with a.e. derivative \(P_i^{* \prime}(q)\in\left\{g_{i,L},g_{i,U}\right\}\), and satisfying \(P_i^*(0)\le M_i\).
\end{theorem}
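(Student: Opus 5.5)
The plan is to obtain Theorem~\ref{thm:extremals} essentially as a corollary of Proposition~\ref{prop:reduction}, adding two ingredients: attainment of the outer optimum over \(p\), and extension of the inner optimizer from the window \(\left[\min\{p,p_{0,i}\},\max\{p,p_{0,i}\}\right]\) to all of \(\left[0,q_i^{\max}\right]\) without leaving \(\mathcal{P}\). I treat \(\overline{\Phi}\); the argument for \(\underline{\Phi}\) is symmetric, using \eqref{eq:Phi_inf_corr}, \(\Psi^-\), and the reversed endpoint choices \(q_i^-(p)\).

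\textbf{Outer step.} Let \(F^+(p)\) denote the bracketed expression in \eqref{eq:Phi_sup_corr}. Under Assumption~\ref{ass:pathA}, \(\ell_i\) and \(u_i\) are affine on the relevant ranges, so the integral terms are continuous in \(p\); this includes the switch points \(p=p_{0,i}\), where the integrals vanish. By Lemma~\ref{lem:a}, \(\Psi^+\) is continuous, and \(\mathcal{I}\) is compact. Hence \(F^+\) attains its maximum at some \(p^\dagger\in\mathcal{I}\). Proposition~\ref{prop:reduction} says that \(\overline\Phi=F^+(p^\dagger)\), and that at \(p^\dagger\) there is an inner optimizer whose inverse-quantity functions \(q_i^*\) are piecewise affine, with slopes in \(\{\alpha_i,\beta_i\}\) and at most one kink on the window. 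Inverting, each \(P_i^*\) is piecewise affine with a.e.\ derivative in \(\{g_{i,L},g_{i,U}\}\) on the quantity interval between \(q_i^{\mathrm{obs}}\) and \(q_i^*(p^\dagger)\).

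\textbf{Extension step.} I expect this to be the main obstacle. Outside that quantity window, I extend \(P_i^*\) affinely with slope \(g_{i,U}\), the flattest admissible slope: leftward to \(q=0\) and rightward to \(q_i^{\max}\). On the right I can use any admissible slope provided \(P_i^*\) stays nonnegative; if needed, I would instead use the slope that keeps it nonnegative, relying on Assumption~\ref{ass:pathA}, which keeps the window strictly inside \((0,q_i^{\max})\). To get \(P_i^*(0)\le M_i\), I use the last inequality of Assumption~\ref{ass:pathA}. It states that from any point \((x,s)\) of the cone inside the window, the line of slope \(g_{i,U}\) reaches \(q=0\) at height \(s+(-g_{i,U})x<M_i\). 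Applying this at the window endpoint nearest zero gives the bound. The extension changes neither the anchor \(P_i^*(q_i^{\mathrm{obs}})=p_{0,i}\) nor the values in the window. Since \(\Phi\) depends on \(P\) only through the window (by Lemma~\ref{lem:2}) and through \(p^*(P)\), \(\Phi\) is unchanged by the extension. The extended \(P_i^*\) is Lipschitz and nonincreasing, so \(P^*\in\mathcal{P}\).

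\textbf{Consistency check.} I must confirm that the extended profile actually has \(p^*(P^*)=p^\dagger\) and \(q_i^*(P^*)=q_i^*(p^\dagger)\). By construction \(\sum_i q_i^*(p^\dagger)=\bar Q\). Each \(P_i^*\) is strictly decreasing, so every market has shadow price \(p^\dagger\) at its allocation, and Definition~\ref{def:eq-shadow} holds with interior coordinates. By Proposition~\ref{prop:equalize}, this split is the efficient one, and it is unique because of strict monotonicity. Therefore \(\Phi(P^*)=F^+(p^\dagger)=\overline\Phi\), so the supremum is attained, and \(P^*\) is continuous and piecewise affine on \(\left[0,q_i^{\max}\right]\) with the stated slope and choke properties.
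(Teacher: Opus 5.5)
Your proposal follows the paper's proof. You maximize the continuous bracketed objective of \eqref{eq:Phi_sup_corr} over the compact set $\mathcal I$, take the piecewise-affine inner optimizer built in the proof of Proposition~\ref{prop:reduction}, and extend it outside the price window without changing $\Phi$.

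Your leftward extension with slope $g_{i,U}$ is sharper than the paper's ``any monotone continuation'': only that slope is guaranteed, by the last inequality of Assumption~\ref{ass:pathA}, to reach $q=0$ at a price below $M_i$. Your check that $p^*(P^*)=p^\dagger$ also makes explicit a step the paper leaves implicit. The one loose remark is the rightward fallback. Since $g_{i,U}$ is already the flattest admissible slope and Assumption~\ref{ass:pathA} says nothing about prices to the right of the window, there is no other slope to switch to. The paper does not address nonnegativity on that side either.
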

}

\begin{proof}
By Proposition~\ref{prop:reduction}, the objective in \eqref{eq:Phi_sup_corr} is continuous on the compact set \(\mathcal{I}\), so it admits a maximizer \(p^*\in\mathcal{I}\).
Let \(\left\{\delta_i^*\right\}\) attain \(\Psi^+(p^*)\).
The proof of Proposition~\ref{prop:reduction} constructs inverse-quantity functions \(q_i^*(\cdot)\) that attain the inner optimum at \(p^*\) and are continuous and piecewise affine with slopes in \(\left\{\alpha_i,\beta_i\right\}\) on \(\left[\min\left\{p^*,p_{0,i}\right\},\max\left\{p^*,p_{0,i}\right\}\right]\), with at most one slope switch.
Outside this interval, extend each \(q_i^*(\cdot)\) to all prices by any monotone continuous continuation that keeps slopes in \(\left\{\alpha_i,\beta_i\right\}\) while \(q_i^*(p)>0\), and then sets \(q_i^*(p)=0\) for all \(p\ge M_i\). This extension does not change \(\Phi\), since \(\Phi\) depends on \(P_i\) only on \(\left[\min\left\{q_i^{\mathrm{obs}},q_i^*(p^*)\right\},\max\left\{q_i^{\mathrm{obs}},q_i^*(p^*)\right\}\right]\).
Define \(P_i^*\) as the generalized inverse of this extended \(q_i^*\).
Then \(P_i^*\) is continuous and piecewise affine on \(\left[0,q_i^{\max}\right]\), with a.e.\ derivative in \(\left\{g_{i,L},g_{i,U}\right\}\), satisfies \(P_i^*\left(q_i^{\mathrm{obs}}\right)=p_{0,i}\), and satisfies \(P_i^*(0)\le M_i\).
Therefore, \(P^*\in\mathcal{P}\) attains \(\overline{\Phi}\).
The construction for \(\underline{\Phi}\) is analogous using \eqref{eq:Phi_inf_corr} and \(\Psi^-(p)\).
\end{proof}

\section{Additional Robustness Figures}\label{app:additional-figures}

\begin{figure}[htbp]
    \centering
    \includegraphics[width=\textwidth]{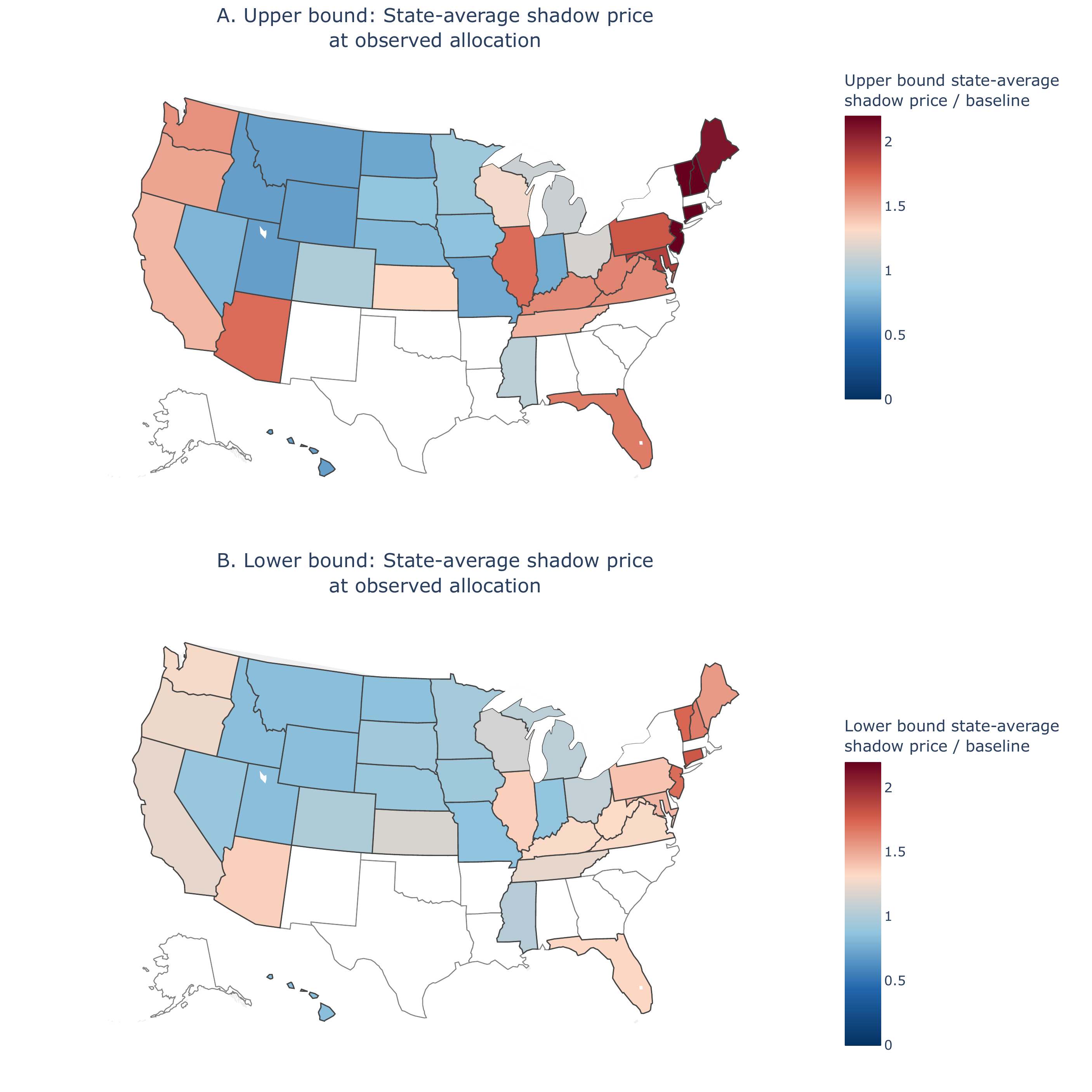}
    \caption{\textbf{State-Average Shadow Prices: No Imputation.} Same construction as Figure~\ref{fig:shadow_price_map}, but using only the 36 states with observed 1972 gallon sales (no imputation). States shown in white lack gallon data. The pattern is similar to the main specification with imputed states included.}
    \label{fig:shadow_price_map_known_exact}
\end{figure}

\newpage

\section{Supplementary Appendix}

\subsection{Proofs of Chaos Theorem Lemmas}

\paragraph{Lemma \ref{lem:trans-app}.} \textit{Let $v\neq w$ be adjacent vertices and $c^\star$ a simple tie for $v,w$. Then for any neighborhood $U\subset\R^n$ of $c^\star$ there exists $\varepsilon > 0$ and a smooth function $\gamma \colon \left[-\varepsilon,\varepsilon\right] \to U$ such that \(\gamma(0)=c^\star\); and the unique optimizer of \eqref{lp} equals $v$ for $t<0$ and $w$ for $t>0$,
while at $t=0$ both $v$ and $w$ are optimal.}

\begin{proof}[Proof of Lemma \ref{lem:trans-app}]
As \(c^\star\in\operatorname{relint}\mathcal H_{vw}\) and no third cone meets at \(c^\star\), and since \(U\) is a neighborhood of \(c^\star\), there exists \(\rho>0\) such that \(B\left(c^\star,\rho\right)\subset U\) and
\[
B\left(c^\star,\rho\right)=\left(B\left(c^\star,\rho\right)\cap\operatorname{int}\mathcal N_v\right)\dot\cup\left(B\left(c^\star,\rho\right)\cap\mathcal H_{vw}\right)\dot\cup\left(B\left(c^\star,\rho\right)\cap\operatorname{int}\mathcal N_w\right).
\]
Choose any \(\eta\) with \((w-v)\cdot\eta<0\), and define \(\gamma(t)=c^\star+t\eta\). Then
\[
(w-v)\cdot\gamma(t)=t(w-v)\cdot\eta,
\]
so \(\gamma\) crosses \(\mathcal H_{vw}\) with a sign change. Hence, for \(|t|\) sufficiently small,
\[
\gamma(t)\in\operatorname{int}\mathcal N_v\text{ for }t<0,\qquad \gamma(0)\in\mathcal H_{vw},\qquad \text{and} \qquad \gamma(t)\in\operatorname{int}\mathcal N_w\text{ for }t>0.
\]
Choosing \(\varepsilon>0\) small enough that \(\gamma\left(\left[-\varepsilon,\varepsilon\right]\right)\subset B\left(c^\star,\rho\right)\), the claim follows. Optimality follows from Remark~\ref{opt-nc}.
\end{proof}

\paragraph{Corollary \ref{cor:trans-app}.} \textit{Let \(v\neq w\) be adjacent vertices, and let \(c^\star\) be any cost vector for which both \(v\) and \(w\) are optimal for \eqref{lp}. Then for every open neighborhood \(U\subset\mathbb{R}^n\) of \(c^\star\) there exist \(\hat c\in U\), \(\varepsilon>0\), and a smooth function
\(\gamma\colon\left[-\varepsilon,\varepsilon\right]\to U\) such that \(\gamma(0)=\hat c\), both \(v\) and \(w\) are optimal at \(t=0\), the unique optimizer of \eqref{lp} equals \(v\) for \(t<0\), and the unique optimizer equals \(w\) for \(t>0\).}

\begin{proof}[Proof of Corollary \ref{cor:trans-app}]
If \(c^\star\) is a simple tie for \(v,w\), the conclusion is exactly Lemma~\ref{lem:trans-app}. Otherwise, since \(v\) and \(w\) are adjacent, the segment \([v,w]\) is an edge of \(\mathcal F\), and \(\mathcal H_{vw}\) equals the normal cone of that edge. In particular, every \(c\in\operatorname{relint}\left(\mathcal H_{vw}\right)\) exposes exactly the edge \([v,w]\); equivalently, \(v\) and \(w\) are both optimal and no other vertex is optimal. Since \(c^\star\in\mathcal H_{vw}\) and \(\operatorname{relint}\left(\mathcal H_{vw}\right)\) is dense in \(\mathcal H_{vw}\), every open neighborhood \(U\subset\mathbb R^n\) of \(c^\star\) contains some \(\hat c\in U\cap\operatorname{relint}\left(\mathcal H_{vw}\right)\). Thus \(\hat c\) is a simple tie for \(v,w\). Applying Lemma~\ref{lem:trans-app} at \(\hat c\) and shrinking \(\varepsilon\) if necessary yields the stated variation with image contained in \(U\).
\end{proof}

\iffalse
\paragraph{Lemma \ref{lem:jump-app}.} \textit{\(\lim_{t\downarrow0}\left(\sum_i\int_0^{w_i}P_i(x)\,dx-\gamma(t)\cdot w\right)-\lim_{t\uparrow0}\left(\sum_i\int_0^{v_i}P_i(x)\,dx-\gamma(t)\cdot v\right)=\Delta W\).}

\begin{proof}[Proof of Lemma \ref{lem:jump-app}]
By definition,
\[
\begin{aligned}
&\lim_{t\downarrow0}\left(\sum_i\int_0^{w_i}P_i(x)\,dx-\gamma(t)\cdot w\right)-\lim_{t\uparrow0}\left(\sum_i\int_0^{v_i}P_i(x)\,dx-\gamma(t)\cdot v\right)\\
&=\sum_i\left(\int_0^{w_i}P_i(x)\,dx-\int_0^{v_i}P_i(x)\,dx\right)-\lim_{t\to0}\gamma(t)\cdot(w-v)\\
&=\sum_i\int_{v_i}^{w_i}P_i(x)\,dx-\gamma(0)\cdot(w-v)\\
&=\int_{v_s}^{w_s}P_s(x)\,dx-\int_{w_r}^{v_r}P_r(x)\,dx\\
&=\Delta W\text{,}
\end{aligned}
\]
where we used the definition of a simple tie and Remark~\ref{tie-orth} to eliminate \(\gamma(0)\cdot(w-v)=0\), and the last equality is the definition of \(\Delta W\).
\end{proof}
\fi

\subsection{Proofs of Robust Bounds Lemmas}

\paragraph{Lemma \ref{lem:a}.} \textit{Under Assumptions \ref{mainass} and \ref{ass:pathA}, the functions \(\Psi^+(\cdot)\) and \(\Psi^-(\cdot)\) are continuous on \(\mathcal I\).}
    
\begin{proof}[Proof of Lemma \ref{lem:a}]
    We prove continuity of \(\Psi^+\); the argument for \(\Psi^-\) is identical.
    
    Under Assumption \ref{ass:pathA}, on \(\mathcal{I}\), \(\ell_i(\cdot)\) and \(u_i(\cdot)\) are given by \eqref{eq:simpell} and \eqref{eq:simpu}, and so are continuous on \(\mathcal{I}\). Evidently, so too are \(q_i^+(\cdot)\), \(q_i^-(\cdot)\), \(\Delta^+(\cdot)\) and \(\Delta^-(\cdot)\).

    For each \(i\) and \(p \in \mathcal{I}\), define the continuous functions
    \[
\underline z_i^+(p)\coloneqq\ell_i(p)-q_i^+(p),\qquad \text{and} \qquad \overline z_i^+(p)\coloneqq u_i(p)-q_i^+(p).
\]
Note that \(\underline z_i^+(p)\le0\le\overline z_i^+(p)\) for all \(i\), and
\[
\sum_{i=1}^n\underline z_i^+(p)=L(p)-\sum_{i=1}^n q_i^+(p)\le\bar Q-\sum_{i=1}^n q_i^+(p)=\Delta^+(p)\le U(p)-\sum_{i=1}^n q_i^+(p)=\sum_{i=1}^n\overline z_i^+(p),
\]
where the inequality uses \(p \in \mathcal{I}\), i.e., \(L(p)\le\bar Q\le U(p)\). Thus, the set
\[
Z^+(p)\coloneqq\left\{z\in\mathbb R^n\colon\sum_{i=1}^n z_i=\Delta^+(p),\ \underline z_i^+(p)\le z_i\le\overline z_i^+(p)\ \forall i\right\}
\]
is nonempty and compact. Define the optimization-problem
\[\tag{\(D1\)}\label{eq:psiplus}
\widetilde\Psi^+(p)\coloneqq\min_{z\in Z^+(p)}\sum_{i=1}^n\frac{z_i^2}{2d_i}.
\]
\begin{claim}
    \(\widetilde\Psi^+(p)=\Psi^+(p)\).
\end{claim}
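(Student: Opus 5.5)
The plan is to compare the two programs directly. I will show that every feasible point of $\Psi^+(p)$, extended by zeros, lies in $Z^+(p)$ with the same value. Conversely, every feasible $z\in Z^+(p)$ can be modified, without raising the objective, into a point supported on $\mathcal{A}^+(p)$ with the sign pattern that $\Psi^+$ requires. Fix $p\in\mathcal{I}$.

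First I read off the box $[\underline z_i^+(p),\overline z_i^+(p)]$ from the definition of $q_i^+$. For $i\in\mathcal{I}_+(p)$ we have $q_i^+=\ell_i$, so $z_i\in[0,u_i(p)-\ell_i(p)]$. For $i\in\mathcal{I}_-(p)$ we have $q_i^+=u_i$, so $z_i\in[-(u_i(p)-\ell_i(p)),0]$. Thus coordinates in $\mathcal{I}_+$ are forced nonnegative, those in $\mathcal{I}_-$ nonpositive, and the sign of $\Delta^+(p)=\sum_i z_i$ says which group must carry the net adjustment.

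\emph{Case $\Delta^+(p)=0$.} Then $\mathcal{A}^+(p)=\emptyset$ and $\Psi^+(p)=0$ as an empty sum. Also $z=0\in Z^+(p)$ gives value $0$, and the objective is nonnegative, so $\widetilde\Psi^+(p)=0$.

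\emph{Case $\Delta^+(p)>0$.} Here $\mathcal{A}^+=\mathcal{I}_+$.
\begin{itemize}
\item ($\ge$) Take any $z\in Z^+(p)$ and let $N\coloneqq\sum_{j\in\mathcal{I}_-}|z_j|\ge 0$. The positive part then sums to $S\coloneqq\sum_{i\in\mathcal{I}_+}z_i=\Delta^+(p)+N\ge\Delta^+(p)>0$.
\item Define $z'$ by $z'_j=0$ for $j\in\mathcal{I}_-$ and $z'_i=z_i\,\Delta^+(p)/S$ for $i\in\mathcal{I}_+$.
\item Since $0\le z'_i\le z_i\le u_i-\ell_i$ and the entries sum to $\Delta^+(p)$, we have $z'\in Z^+(p)$. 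Every coordinate weakly shrinks in absolute value, so $\sum z_i'^2/(2d_i)\le\sum z_i^2/(2d_i)$, using $d_i>0$.
\item Setting $\delta_i=z'_i$ on $\mathcal{A}^+$ gives a point feasible for $\Psi^+(p)$ with the same value, hence $\widetilde\Psi^+(p)\ge\Psi^+(p)$.
\item ($\le$) Any $\delta$ feasible for $\Psi^+(p)$, extended by $z_j=0$ off $\mathcal{A}^+$, lies in $Z^+(p)$ with identical value, so $\widetilde\Psi^+(p)\le\Psi^+(p)$.
\end{itemize}
This argument also shows that the feasible set of $\Psi^+(p)$ is nonempty, because $Z^+(p)\neq\emptyset$ was established above.

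\emph{Case $\Delta^+(p)<0$.} This is symmetric with the roles of $\mathcal{I}_\pm$ swapped. Zero out the positive coordinates (those in $\mathcal{I}_+$) and scale the negative ones by $|\Delta^+|/S'$, where $S'\coloneqq\sum_{j\in\mathcal{I}_-}|z_j|\ge|\Delta^+|$. Then take $\delta_i=-z'_i\in[0,u_i-\ell_i]$ on $\mathcal{A}^+=\mathcal{I}_-$.

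The only step needing care is the shrinking construction: I must check that scaling keeps each coordinate inside its box and preserves the sum, and both facts are immediate from the sign structure of the box. The remaining steps are bookkeeping.
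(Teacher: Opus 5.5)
Your proof is correct and follows essentially the same route as the paper. Both arguments read off the sign structure of the box $[\underline z_i^+(p),\overline z_i^+(p)]$, zero out the wrong-signed coordinates, and shrink the right-signed ones to restore $\sum_i z_i=\Delta^+(p)$, which weakly lowers every $z_i^2/(2d_i)$. The differences are minor: you apply an explicit proportional rescaling to every feasible $z$ and write out the reverse inclusion (extension by zeros), whereas the paper argues by contradiction that any minimizer already has the correct sign pattern.
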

\begin{proof}
If \(\Delta^+(p)=0\), both values equal \(0\).
If \(\Delta^+(p)>0\), let \(z\) be a minimizer of \(\widetilde\Psi^+(p)\). If some coordinate satisfies \(z_i<0\), define \(N\coloneqq\left\{i\colon z_i<0\right\}\) and \(m\coloneqq-\sum_{i\in N}z_i>0\). Since \(\sum_i z_i=\Delta^+(p)>0\), we have \(\sum_{i\colon z_i>0}z_i=\Delta^+(p)+m\). Construct \(\tilde z\) by setting \(\tilde z_i=0\) for \(i\in N\) and reducing the positive coordinates (each of which has lower bound \(0\)) by total mass \(m\) so that \(\sum_i\tilde z_i=\Delta^+(p)\). This preserves feasibility because it only increases negative coordinates up to \(0\) and decreases positive coordinates down toward \(0\). Moreover, \(\sum_i \tilde z_i^2/(2d_i)<\sum_i z_i^2/(2d_i)\), contradicting optimality. Thus every minimizer must satisfy \(z_i\ge0\) for all \(i\). For indices with \(p<p_{0,i}\) we have \(\overline z_i^+(p)=u_i(p)-q_i^+(p)=0\), so \(z_i=0\) there. For indices with \(p\ge p_{0,i}\) we have \(\underline z_i^+(p)=\ell_i(p)-q_i^+(p)=0\) and \(\overline z_i^+(p)=u_i(p)-\ell_i(p)\). Therefore \eqref{eq:psiplus} reduces to
\[
\min_{\left\{z_i\right\}_{i\colon p\ge p_{0,i}}}\left\{\sum_{i\colon p\ge p_{0,i}}\frac{z_i^2}{2d_i}\right\}\quad\text{s.t.}\quad \sum_{i\colon p\ge p_{0,i}}z_i=\Delta^+(p),\quad 0\le z_i\le u_i(p)-\ell_i(p),
\]
which is exactly \(\Psi^+(p)\) when \(\Delta^+(p)>0\) (since then \(A^+(p)=\left\{i\colon p\ge p_{0,i}\right\}\) and \(\left|\Delta^+(p)\right|=\Delta^+(p)\)).
The case \(\Delta^+(p)<0\) is symmetric: any minimizer must satisfy \(z_i\le0\) for all \(i\); then \(z_i=0\) for all \(i\) with \(p\ge p_{0,i}\) (because \(\underline z_i^+(p)=0\) there), and setting \(\delta_i\coloneqq-z_i\ge0\) on \(\left\{i\colon p<p_{0,i}\right\}\) yields exactly the definition of \(\Psi^+(p)\) when \(\Delta^+(p)<0\) (since then \(A^+(p)=\left\{i\colon p<p_{0,i}\right\}\) and \(\left|\Delta^+(p)\right|=-\Delta^+(p)\)).\end{proof}
Since \(Z^+(p)\) is a nonempty compact polytope defined by continuous bounds and a continuous affine equality, the correspondence \(p \mapsto Z^+(p)\) is continuous. Consequently, by Berge's Maximum theorem \(\widetilde\Psi^+(\cdot)\) (and so \(\Psi^+(\cdot)\)) is continuous on \(\mathcal{I}\).
\end{proof}

\paragraph{Lemma \ref{lem:2}.} \textit{Posit Assumptions \ref{mainass} and \ref{ass:pathA} and fix \(p \in \mathcal{I}\). Then, the inner problem of maximizing \(\Phi(P)\) over \(P \in \mathcal{P}\) satisfying \(p^*(P) = p\) is equivalent to minimizing \(\int_{p_{0,i}}^{p}q_i\) for \(i\in\mathcal{I}_+(p)\) and maximizing \(\int_{p}^{p_{0,i}}q_i\) for \(i\in\mathcal{I}_-(p)\), subject to feasibility and \(\sum_i q_i(p)=\bar{Q}\).}
\begin{proof}[Proof of Lemma \ref{lem:2}]
    Fix \(p \in \mathcal{I}\) and \(i\). Under Assumption~\ref{ass:pathA}, the equal-shadow allocation satisfies \(0<q_i^*(P)<q_i^{\max}\), so \(P_i\left(q_i^*(P)\right)=p^*\). By the fundamental theorem of calculus,
\[
\int_{q_i^{\mathrm{obs}}}^{q_i^*(P)}\left(P_i(x)-p^*\right)d x
=q_i^{\mathrm{obs}}\left(p^*-p_{0,i}\right)-\int_{p_{0,i}}^{p^*}q_i(s)d s;
\tag{\(D2\)}\label{eq:id_i}
\]
and, summing \eqref{eq:id_i} over \(i\) and using \(\sum_i q_i^{\mathrm{obs}}=\bar{Q}\) we obtain
\[
\Phi(P)=\bar{Q}p^*-\sum_{i=1}^n q_i^{\mathrm{obs}}p_{0,i}-\sum_{i=1}^n\int_{p_{0,i}}^{p^*}q_i(s)d s.
\tag{\(D3\)}\label{eq:id_sum}
\]

%Next, as \(q_i(p^*)=q_i^*(P)\) and \(q_i(p)\in\left[\ell_i(p),u_i(p)\right]\) for all \(p\),
Next, as \(q_i(p^*)=q_i^*(P)\) and \(q_i(p^*)\in\left[\ell_i(p^*),u_i(p^*)\right]\) for all \(i\)
\[
L(p^*)\le\sum_{i=1}^n q_i(p^*)=\sum_{i=1}^n q_i^*(P)=\bar{Q}\le U(p^*) \quad \Longrightarrow \quad p^*\in\mathcal{I}.
\]
Observe that
\[
-\int_{p_{0,i}}^{p}q_i(s)d s=
\begin{cases}
-\int_{p_{0,i}}^{p}q_i(s)d s,&i\in\mathcal{I}_+(p),\\[4pt]
\int_{p}^{p_{0,i}}q_i(s)d s,&i\in\mathcal{I}_-(p).
\end{cases}
\]
Consequently, conditional on \(p\), maximizing \(\Phi(P)\) is equivalent to minimizing \(\int_{p_{0,i}}^{p}q_i\) for \(i\in\mathcal{I}_+(p)\) and maximizing \(\int_{p}^{p_{0,i}}q_i\) for \(i\in\mathcal{I}_-(p)\), subject to \(\sum_i q_i(p)=\bar{Q}\).
\end{proof}

\paragraph{Lemma \ref{lem:3}.}
    \textit{\(q_i'(s)\in\left[\alpha_i,\beta_i\right]\) for a.e. \(s\in\left[\min\left\{p,p_{0,i}\right\},\max\left\{p,p_{0,i}\right\}\right]\).}
\begin{proof}[Proof of Lemma \ref{lem:3}]
For any \(x_1<x_2\),
\[
g_{i,L}\le\frac{P_i(x_2)-P_i(x_1)}{x_2-x_1}\le g_{i,U},
\]
and inverting this secant inequality yields
\[
\alpha_i\le\frac{q_i(p_2)-q_i(p_1)}{p_2-p_1}\le\beta_i
\quad\text{for all }p_1<p_2\text{ in the relevant range}.
\]
Take limits along the differentiability points of the Lipschitz function \(q_i\).
\end{proof}

\paragraph{Lemma \ref{lem:4}.} \textit{Now let \(t_0<t_1\), \(\alpha<\beta\), and let function \(q\) be absolutely continuous on \(\left[t_0,t_1\right]\) with \(q(t_0)=q_0\) and \(q'(t)\in\left[\alpha,\beta\right]\) a.e. Define \(\underline{q}(t)\coloneqq q_0+\alpha\left(t-t_0\right)\).
    If \(q(t_1)=\underline{q}(t_1)+\delta\) for some \(\delta\in\left[0,\left(\beta-\alpha\right)\left(t_1-t_0\right)\right]\), then}
\[
\int_{t_0}^{t_1}q(t)d t\ge\int_{t_0}^{t_1}\underline{q}(t)d t+\frac{\delta^2}{2\left(\beta-\alpha\right)}.
\tag{\(D4\)}\label{eq:triangle}
\]
\textit{Moreover, equality holds if and only if \(q\) is ``bang-bang;'' \(q'(t)=\alpha\) a.e. on \(\left[t_0,t_1-h\right]\) and \(q'(t)=\beta\) a.e. on \(\left[t_1-h,t_1\right]\), where \(h=\delta/\left(\beta-\alpha\right)\).}
\begin{proof}[Proof of Lemma \ref{lem:4}]
Write \(q'(t)=\alpha+u(t)\) with \(u(t)\in[0,\beta-\alpha]\) a.e. on \([t_0,t_1]\). The two constraints on \(u\) are:
\[
0\le u(t)\le\beta-\alpha\quad\text{for a.e. }t\in[t_0,t_1]\qquad\text{(box constraint),}
\]
and
\[
\int_{t_0}^{t_1}u(t)dt=\delta\qquad\text{(total mass constraint).}
\]
Let \(\underline{q}(t)\coloneqq q_0+\alpha(t-t_0)\). Then
\[
q(t)=\underline{q}(t)+\int_{t_0}^{t}u(s)ds.
\]
Integrating over \([t_0,t_1]\) and using Fubini gives
\[
\int_{t_0}^{t_1}q(t)dt=\int_{t_0}^{t_1}\underline{q}(t)dt+\int_{t_0}^{t_1}\left(\int_{t_0}^{t}u(s)ds\right)dt
=\int_{t_0}^{t_1}\underline{q}(t)dt+\int_{t_0}^{t_1}(t_1-s)u(s)ds.
\]
Thus, conditional on \(\delta\), minimizing \(\int_{t_0}^{t_1}q(t)dt\) is equivalent to minimizing
\[
\min_{u}\ \int_{t_0}^{t_1}(t_1-s)u(s)ds
\quad\text{s.t.}\quad
0\le u(s)\le\beta-\alpha\ \text{a.e.},\ \int_{t_0}^{t_1}u(s)ds=\delta.
\tag{$D5$}\label{eq:aux1}
\]
Define the cumulative mass \(U(t)\coloneqq\int_{t_0}^{t}u(s)ds\). \(U\) is absolutely continuous, nondecreasing, satisfies \(U(t_0)=0\), \(U(t_1)=\delta\), and
\(U'(t)=u(t)\in[0,\beta-\alpha]\) a.e. Integrating by parts:
\[
\int_{t_0}^{t_1}(t_1-s)u(s)ds=\int_{t_0}^{t_1}(t_1-s)U'(s)ds
=\left[(t_1-s)U(s)\right]_{t_0}^{t_1}-\int_{t_0}^{t_1}(-1)U(s)ds
=\int_{t_0}^{t_1}U(s)ds.
\]
So the program in \eqref{eq:aux1} is equivalent to
\[
\min_{U}\ \int_{t_0}^{t_1}U(t)dt
\quad\text{s.t.}\quad
U(t_0)=0,\ U(t_1)=\delta,\ 0\le U'(t)\le\beta-\alpha\ \text{a.e.}
%\tag{$A8$}\label{eq:aux2}
\]
Let \(h\coloneqq\delta/(\beta-\alpha)\in[0,t_1-t_0]\). Since \(U'(s)\le\beta-\alpha\) a.e., for any \(t\in[t_0,t_1]\),
\[
\delta-U(t)=U(t_1)-U(t)=\int_{t}^{t_1}U'(s)ds\le\int_{t}^{t_1}(\beta-\alpha)ds=(\beta-\alpha)(t_1-t),
\]
hence,
\[
U(t)\ge\delta-(\beta-\alpha)(t_1-t)=(\beta-\alpha)\left(t-(t_1-h)\right).
\]
Also \(U(t)\ge0\). Therefore, for all \(t\in[t_0,t_1]\),
\[
U(t)\ge U^{*}(t)\coloneqq\max\left\{0,(\beta-\alpha)\left(t-(t_1-h)\right)\right\}.
\]
Integrating both sides yields
\[
\int_{t_0}^{t_1}U(t)dt\ge\int_{t_0}^{t_1}U^{*}(t)dt
=\int_{t_1-h}^{t_1}(\beta-\alpha)\left(t-(t_1-h)\right)dt
=\frac{(\beta-\alpha)h^2}{2}
=\frac{\delta^2}{2(\beta-\alpha)}.
\]
Moreover, equality holds if and only if \(U(t)=U^{*}(t)\) for all \(t\), which forces
\(U'(t)=0\) a.e. on \([t_0,t_1-h]\) and \(U'(t)=\beta-\alpha\) a.e. on \([t_1-h,t_1]\), i.e.
\[
u(t)=
\begin{cases}
0,&t\in[t_0,t_1-h]\ \text{a.e.},\\
\beta-\alpha,&t\in[t_1-h,t_1]\ \text{a.e.}
\end{cases}
\]
That is, the optimizer concentrates all mass \(\delta\) on the terminal interval \([t_1-h,t_1]\), saturating the box constraint.
This is a special case of the bathtub principle (the Hardy-Littlewood rearrangement) for minimizing a linear functional under an integral constraint.
\end{proof}

\paragraph{Claim \ref{clm:active_set_deviations}}
\textit{Fix \(p\in \mathcal I\).}
\begin{enumerate}
    \item \textit{If \(\Delta^+(p)>0\), there exist numbers \(\left\{\delta_i\right\}_{i\in \mathcal{I}_+(p)}\) such that \(0\le\delta_i\le u_i(p)-\ell_i(p)\) for all \(i\in \mathcal{I}_+(p)\) and \(\sum_{i\in \mathcal{I}_+(p)}\delta_i=\Delta^+(p)\), and the endpoint vector \(q(p)\) defined by}
\[
q_i(p)=\ell_i(p)+\delta_i\ \text{for }i\in \mathcal{I}_+(p),\qquad q_i(p)=u_i(p)\ \text{for }i\in \mathcal{I}_-(p)
\]
\textit{satisfies \(\sum_{i=1}^n q_i(p)=\bar{Q}\). Moreover, in the inner problem, there is an optimizer with \(q_i(p)=u_i(p)\) for all \(i\in \mathcal{I}_-(p)\); equivalently, deviations from the baseline endpoints \(q_i^+(p)\) may be taken without loss of optimality to occur only on \(\mathcal{A}^+(p)=\mathcal{I}_+(p)\).}

\item \textit{If \(\Delta^+(p)<0\), there exist numbers \(\left\{\delta_i\right\}_{i\in \mathcal{I}_-(p)}\) such that \(0\le\delta_i\le u_i(p)-\ell_i(p)\) for all \(i\in \mathcal{I}_-(p)\) and \(\sum_{i\in \mathcal{I}_-(p)}\delta_i=-\Delta^+(p)\), and the endpoint vector \(q(p)\) defined by}
\[
q_i(p)=\ell_i(p)\ \text{for }i\in \mathcal{I}_+(p),\qquad q_i(p)=u_i(p)-\delta_i\ \text{for }i\in \mathcal{I}_-(p)
\]
\textit{satisfies \(\sum_{i=1}^n q_i(p)=\bar{Q}\). Moreover, in the inner problem, there is an optimizer with \(q_i(p)=\ell_i(p)\) for all \(i\in \mathcal{I}_+(p)\); equivalently, deviations from the baseline endpoints \(q_i^+(p)\) may be taken without loss of optimality to occur only on \(\mathcal{A}^+(p)=\mathcal{I}_-(p)\).}
\end{enumerate}

\begin{proof}[Proof of Claim \ref{clm:active_set_deviations}]
We prove (i) as the proof of (ii) is identical with signs reversed.

Assume \(\Delta^+(p)>0\). Since \(p\in \mathcal{I}\), we have \(\bar{Q}\le U(p)=\sum_{i=1}^n u_i(p)\). Using
\[
\sum_{i=1}^n q_i^+(p)=\sum_{i\in \mathcal{I}_+(p)}\ell_i(p)+\sum_{i\in \mathcal{I}_-(p)}u_i(p),
\]
we obtain
\[
\Delta^+(p)=\bar{Q}-\sum_{i\in \mathcal{I}_+(p)}\ell_i(p)-\sum_{i\in \mathcal{I}_-(p)}u_i(p)
\le\sum_{i\in \mathcal{I}_+(p)}\left(u_i(p)-\ell_i(p)\right).
\]
Thus the total upward slack available in the coordinates \(i\in \mathcal{I}_+(p)\) is at least \(\Delta^+(p)\), so we can choose \(\left\{\delta_i\right\}_{i\in \mathcal{I}_+(p)}\) with \(0\le\delta_i\le u_i(p)-\ell_i(p)\) and \(\sum_{i\in \mathcal{I}_+(p)}\delta_i=\Delta^+(p)\). Defining \(q(p)\) as in the statement then yields \(\sum_i q_i(p)=\bar{Q}\) by construction.

For the “without loss” part, let \(\left\{q_i(\cdot)\right\}_{i=1}^n\) be an optimizer for the inner problem at this \(p\).
Conditional on \(p\), the objective depends on \(q_i(\cdot)\) only through the signed integrals
\[
-\int_{p_{0,i}}^{p}q_i(s) ds\ \text{for }i\in \mathcal{I}_+(p),\qquad \text{and} \quad \int_{p}^{p_{0,i}}q_i(s) ds\ \text{for }i\in \mathcal{I}_-(p).
\]
By Lemmas \ref{lem:3} and \ref{lem:4}, for each market \(k\) and each admissible endpoint value \(q_k(p)\in\left[\ell_k(p),u_k(p)\right]\),
the choice of \(q_k(\cdot)\) that optimizes its signed integral term is the Lemma \ref{lem:4} equality-case (“bang-bang”) form.
Moreover, the resulting value depends on the endpoint only through a quadratic penalty in the endpoint gap:
it is strictly decreasing in \(q_k(p)-\ell_k(p)\) when \(k\in\mathcal{I}_+(p)\), and strictly decreasing in \(u_k(p)-q_k(p)\) when \(k\in\mathcal{I}_-(p)\).

If an optimizer had some \(i\in \mathcal{I}_-(p)\) with \(q_i(p)<u_i(p)\), then (because \(\sum_{k=1}^n q_k(p)=\bar{Q}>\sum_{k=1}^n q_k^+(p)\))
there must exist some \(j\in \mathcal{I}_+(p)\) with \(q_j(p)>\ell_j(p)\).
Take
\[
\varepsilon\in\left(0,\min\left\{u_i(p)-q_i(p),\,q_j(p)-\ell_j(p)\right\}\right],
\]
and modify only these two endpoints by setting
\[
\tilde q_i(p)\coloneqq q_i(p)+\varepsilon,\qquad \tilde q_j(p)\coloneqq q_j(p)-\varepsilon,
\]
leaving \(\tilde q_k(p)\coloneqq q_k(p)\) for all \(k\notin\left\{i,j\right\}\).
This preserves \(\sum_{k=1}^n \tilde q_k(p)=\bar{Q}\) and the box constraints \(\tilde q_k(p)\in\left[\ell_k(p),u_k(p)\right]\).
Now choose for markets \(i\) and \(j\) the Lemma \ref{lem:4} equality-case forms under the modified endpoints (leaving all other markets unchanged).
Since the \(\varepsilon\)-transfer strictly reduces both endpoint gaps \(u_i(p)-q_i(p)\) and \(q_j(p)-\ell_j(p)\), it strictly reduces the associated quadratic penalties
and therefore strictly increases the objective, contradicting optimality.
Hence, every optimizer must satisfy \(q_i(p)=u_i(p)\) for all \(i\in\mathcal{I}_-(p)\), so deviations from \(q_i^+(p)\) may be taken to occur only on
\(\mathcal{I}_+(p)=\mathcal{A}^+(p)\).
\end{proof}

	\newpage
	\singlespacing 

			\renewcommand\refname{References}		
			\printbibliography

@misc{jec1973gasoline,
    address = {Washington, DC},
    title = {The gasoline and fuel oil shortage: {Hearings} before the subcommittee on consumer economics of the joint economic committee, congress of the united states, ninety-third congress, first session, may 1, 2, and june 2, 1973},
    publisher = {U.S. Government Printing Office},
    author = {{U.S. Congress, Joint Economic Committee, Subcommittee on Consumer Economics}},
    year = {1973},
    note = {Number: 99-740 O
tex.entrytype: government},
}

@techreport{federal_energy_office_review_1974,
    title = {Review of {Complaints} {Concerning} the {Mandatory} {Petroleum} {Allocation} {Program} {And} {The} {Regulation} of {Petroleum} {Pricing}},
    url = {https://www.gao.gov/assets/b-178205-090294.pdf},
    number = {090294},
    author = {Federal Energy Office},
    year = {1974},
}

@misc{time_controls_1973,
    title = {{CONTROLS}: {A} {Threat} of {Food} {Shortage}},
    shorttitle = {{CONTROLS}},
    url = {https://time.com/archive/6841371/controls-a-threat-of-food-shortage/},
    urldate = {2025-03-19},
    journal = {TIME},
    author = {TIME},
    month = jul,
    year = {1973},
}

@article{kang_robustness_2025,
	title = {Robustness {Measures} for {Welfare} {Analysis}},
	volume = {115},
	number = {8},
	journal = {American Economic Review},
	author = {Kang, Zi Yang and Vasserman, Shoshana},
	year = {2025},
	pages = {2449--2487},
}

@misc{eia_petroleum_products_supplied_annual,
	title = {Monthly Energy Review, Table 3.1: Petroleum Overview, Annual (1949--2024)},
	author = {{U.S. Energy Information Administration}},
	year = {2025},
	url = {https://www.eia.gov/totalenergy/data/monthly/index.php},
	note = {Accessed via Monthly Energy Review tables; products supplied series used for pre-shock trend comparison.},
}

@misc{fhwa_net_volume_gallons_taxed,
	title = {Highway Statistics, Table MF-202: Motor-Fuel Use},
	author = {{Federal Highway Administration}},
	year = {2012},
	url = {https://www.fhwa.dot.gov/policyinformation/statistics/2012/mf202.cfm},
	note = {Annual net volume of gasoline taxed; used as a gasoline-specific cross-check on 1974 trend shortfall.},
}

@article{new_york_times_baby_1973,
    chapter = {Archives},
    title = {Baby Chicks Killed and Cooked for Feed},
    issn = {0362-4331},
    url = {https://www.nytimes.com/1973/06/25/archives/baby-chicks-killed-and-cooked-for-feed.html},
    urldate = {2026-02-01},
    journal = {The New York Times},
    author = {{The New York Times}},
    month = jun,
    year = {1973},
}

@article{associated_press_chicks_1973,
    chapter = {Archives},
    title = {Chicks {Smothered} as {Farmers} {Decry} {Price} {Freeze}},
    issn = {0362-4331},
    url = {https://www.nytimes.com/1973/06/27/archives/chicks-smothered-as-farmers-decry-price-freeze.html},
    urldate = {2025-03-19},
    journal = {The New York Times},
    author = {{Associated Press}},
    month = jun,
    year = {1973},
}

@book{bradley_oil_1997,
    title = {Oil, {Gas} and {Government}: {The} {U}.{S}. {Experience}},
    isbn = {0-8476-8108-4},
    shorttitle = {Oil, {Gas} and {Government}},
    publisher = {Rowman \& Littlefield Inc},
    author = {Bradley, Robert L.},
    month = oct,
    year = {1997},
}

@misc{time_shortages_1974,
    title = {{SHORTAGES}: {Gas} {Fever}: {Happiness} {Is} a {Full} {Tank}},
    shorttitle = {{SHORTAGES}},
    url = {https://time.com/archive/6875854/shortages-gas-fever-happiness-is-a-full-tank/},
    urldate = {2026-02-07},
    journal = {TIME},
    author = {TIME},
    month = feb,
    year = {1974},
}

@article{gupte_fuel_1979,
    chapter = {Archives},
    title = {{FUEL} {CRISIS} {TERMED} {CRITICAL} {AT} {PUMPS} {IN} {NEW} {YORK} {REGION}},
    issn = {0362-4331},
    url = {https://www.nytimes.com/1979/06/25/archives/fuel-crisis-termed-critical-at-pumps-in-new-york-region-long-lines.html},
    urldate = {2026-02-07},
    journal = {The New York Times},
    author = {Gupte, Pranay B.},
    month = jun,
    year = {1979},
}

@misc{turkel_2023,
    title = {Fifty years ago, the energy crisis officially began. And it hasn't ended.},
    url = {https://themainemonitor.org/fifty-years-ago-the-climate-crisis-began-and-it-hasnt-ended/},
    journal = {The Maine Monitor},
    author = {Turkel, Tux},
    month = sep,
    year = {2023},
    urldate = {2026-02-10},
}

@article{mulligan_supply_chain_2025,
    title = {Equilibrium Responses to Price Controls: A Supply-Chain Approach},
    volume = {203},
    journal = {Public Choice},
    author = {Mulligan, Casey B.},
    year = {2025},
    pages = {23--52},
    doi = {10.1007/s11127-024-01196-8},
}

@unpublished{kang_robust_regulation_2026,
    title = {Robust Regulation: Prices vs.\ Quantities},
    author = {Kang, Zi Yang},
    year = {2026},
    note = {Working paper, University of Toronto},
}

@techreport{aaa_aaa_1974,
    address = {Falls Church, VA},
    title = {{AAA} {Fuel} {Gauge} {Report} {II}},
    number = {5},
    author = {AAA},
    month = feb,
    year = {1974},
}

@misc{zotero-item-8347,
    title = {Emergency {Petroleum} {Allocation} {Extension} {Act} of 1974: {Hearing}},
    publisher = {U.S. Govt. Print. Off., 1975},
    author = {US Congress},
    month = jul,
    year = {1974},
}

@article{hausman_individual_2016,
	title = {Individual {Heterogeneity} and {Average} {Welfare}},
	volume = {84},
	number = {3},
	journal = {Econometrica},
	author = {Hausman, Jerry A. and Newey, Whitney K.},
	year = {2016},
	pages = {1225--1248},
}

@article{hausman_nonparametric_1995,
	title = {Nonparametric {Estimation} of {Exact} {Consumers} {Surplus} and {Deadweight} {Loss}},
	volume = {63},
	number = {6},
	journal = {Econometrica},
	author = {Hausman, Jerry A. and Newey, Whitney K.},
	year = {1995},
	pages = {1445--1476},
}

@techreport{kremer_worst-case_2018,
	type = {Working {Paper}},
	title = {Worst-{Case} {Bounds} on {R}\&{D} and {Pricing} {Distortions}},
	number = {25119},
	institution = {NBER},
	author = {Kremer, Michael and Snyder, Christopher M.},
	year = {2018},
}

@article{buurma-olsen_quantifying_2025,
	title = {Quantifying misallocation of public housing},
	volume = {242},
	issn = {0047-2727},
	url = {https://www.sciencedirect.com/science/article/pii/S0047272724002081},
	doi = {10.1016/j.jpubeco.2024.105272},
	urldate = {2026-01-14},
	journal = {Journal of Public Economics},
	author = {Buurma-Olsen, Jennifer and Koster, Hans R. A. and van Ommeren, Jos and Damsté, Jort Sinninghe},
	month = feb,
	year = {2025},
	pages = {105272},
}

@article{deacon_rationing_1985,
	title = {Rationing by {Waiting} and the {Value} of {Time}: {Results} from a {Natural} {Experiment}},
	volume = {93},
	number = {4},
	journal = {Journal of Political Economy},
	author = {Deacon, Robert T. and Sonstelie, Jon},
	year = {1985},
	pages = {627--647},
}

@article{akbarpour_redistributive_2024,
	title = {Redistributive {Allocation} {Mechanisms}},
	volume = {132},
	number = {6},
	journal = {Journal of Political Economy},
	author = {Akbarpour, Mohammad and Dworczak, Piotr and Kominers, Scott Duke},
	year = {2024},
	pages = {1831--1875},
}

@article{condorelli_market_2013,
	title = {Market and {Non}-{Market} {Mechanisms} for the {Optimal} {Allocation} of {Scarce} {Resources}},
	volume = {82},
	journal = {Games and Economic Behavior},
	author = {Condorelli, Daniele},
	year = {2013},
	pages = {582--591},
}

@article{barzel_theory_1974,
	title = {A {Theory} of {Rationing} by {Waiting}},
	volume = {17},
	number = {1},
	journal = {Journal of Law and Economics},
	author = {Barzel, Yoram},
	year = {1974},
	pages = {73--95},
}

@article{poole_fuel_1973,
	title = {Fuel {Shortage}: {A} {Prehistory}},
	journal = {Reason},
	author = {Poole, Robert W.},
	year = {1973},
}

@book{grunbaum_convex_2003,
	address = {New York},
	edition = {2nd},
	series = {Graduate {Texts} in {Mathematics}},
	title = {Convex {Polytopes}},
	volume = {221},
	publisher = {Springer},
	author = {Grünbaum, Branko},
	year = {2003},
}

@article{deacon_price_1980,
	title = {Price {Controls} and {International} {Petroleum} {Product} {Prices}},
	author = {Deacon, Robert T. and Mead, Walter J. and Agarwal, Vinod B.},
	year = {1980},
	note = {Publisher: Department of Energy},
}

@article{verleger_us_1979,
	title = {The {U}.{S}. {Petroleum} {Crisis} of 1979},
	volume = {1979},
	number = {2},
	journal = {Brookings Papers on Economic Activity},
	author = {Verleger, Philip K.},
	year = {1979},
	pages = {463--476},
}

@article{kholodilin_rent_2024,
	title = {Rent control effects through the lens of empirical research: {An} almost complete review of the literature},
	volume = {63},
	journal = {Journal of Housing Economics},
	author = {Kholodilin, Konstantin A.},
	year = {2024},
	pages = {101976},
}

@unpublished{kominers_price_2025,
	title = {A {Price} {Theory} of {Price} {Gouging}},
	author = {Kominers, Scott Duke and Dworczak, Piotr},
	year = {2025},
}

@article{brunt_moving_2025,
	title = {Moving beyond {Harberger}'s {Triangle} to present the inefficiency from misallocated market transactions under price controls},
	volume = {56},
	issn = {0022-0485, 2152-4068},
	url = {https://www.tandfonline.com/doi/full/10.1080/00220485.2025.2524631},
	doi = {10.1080/00220485.2025.2524631},
	language = {en},
	number = {4},
	urldate = {2025-12-26},
	journal = {The Journal of Economic Education},
	author = {Brunt, Christopher S.},
	month = oct,
	year = {2025},
	pages = {322--328},
}

@article{deacon_welfare_1989,
	title = {The {Welfare} {Costs} of {Rationing} by {Waiting}},
	volume = {27},
	journal = {Economic Inquiry},
	author = {Deacon, Robert T. and Sonstelie, Jon},
	year = {1989},
	pages = {179--196},
}

@article{weitzman_is_1977,
	title = {Is the {Price} {System} or {Rationing} {More} {Effective} in {Getting} a {Commodity} to {Those} {Who} {Need} {It} {Most}?},
	volume = {8},
	journal = {Bell Journal of Economics},
	author = {Weitzman, Martin L.},
	year = {1977},
	pages = {517--524},
}

@book{yergin_prize_1991,
	address = {New York},
	title = {The prize: the epic quest for oil, money, and power},
	isbn = {978-0-671-50248-5},
	shorttitle = {The prize},
	language = {eng},
	publisher = {Simon \& Schuster},
	author = {Yergin, Daniel},
	year = {1991},
}

@article{davis_allocative_2011,
	title = {The {Allocative} {Cost} of {Price} {Ceilings} in the {U}.{S}. {Residential} {Market} for {Natural} {Gas}},
	volume = {119},
	issn = {0022-3808, 1537-534X},
	url = {https://www.journals.uchicago.edu/doi/10.1086/660124},
	doi = {10.1086/660124},
	language = {en},
	number = {2},
	urldate = {2025-05-15},
	journal = {Journal of Political Economy},
	author = {Davis, Lucas W. and Kilian, Lutz},
	month = apr,
	year = {2011},
	pages = {212--241},
}

@article{hughes_evidence_2008,
	title = {Evidence of a {Shift} in the {Short}-{Run} {Price} {Elasticity} of {Gasoline} {Demand}},
	volume = {29},
	number = {1},
	journal = {The Energy Journal},
	author = {Hughes, Jonathan E. and Knittel, Christopher R. and Sperling, Daniel},
	year = {2008},
	pages = {113--134},
}

@article{diamond_effects_2019,
	title = {The {Effects} of {Rent} {Control} {Expansion} on {Tenants}, {Landlords}, and {Inequality}: {Evidence} from {San} {Francisco}},
	volume = {109},
	doi = {10.1257/aer.20181289},
	number = {9},
	journal = {American Economic Review},
	author = {Diamond, Rebecca and McQuade, Tim and Qian, Franklin},
	year = {2019},
	pages = {3365--3394},
}

@article{glaeser_misallocation_2003,
	title = {The {Misallocation} of {Housing} under {Rent} {Control}},
	volume = {93},
	doi = {10.1257/000282803769206188},
	number = {4},
	journal = {American Economic Review},
	author = {Glaeser, Edward L. and Luttmer, Erzo F. P.},
	year = {2003},
	pages = {1027--1046},
}

@book{rockoff_drastic_1984,
	address = {New York},
	title = {Drastic {Measures}: {A} {History} of {Wage} and {Price} {Controls} in the {United} {States}},
	publisher = {Cambridge University Press},
	author = {Rockoff, Hugh},
	year = {1984},
}

@article{harberger_monopoly_1954,
	title = {Monopoly and {Resource} {Allocation}},
	volume = {44},
	number = {2},
	journal = {American Economic Review},
	author = {Harberger, Arnold C.},
	year = {1954},
	pages = {77--87},
}

@article{gordon_response_1973,
	title = {The {Response} of {Wages} and {Prices} to the {First} {Two} {Years} of {Controls}},
	volume = {1973},
	number = {3},
	journal = {Brookings Papers on Economic Activity},
	author = {Gordon, Robert J.},
	year = {1973},
	pages = {765--815},
}

@article{oi_measuring_1976,
	title = {On {Measuring} the {Impact} of {Wage}-{Price} {Controls}: {A} {Critical} {Appraisal}},
	volume = {2},
	doi = {10.1016/S0167-2231(76)80003-6},
	journal = {Carnegie-Rochester Conference Series on Public Policy},
	author = {Oi, Walter Y.},
	year = {1976},
	pages = {7--64},
}

@article{frech_welfare_1987,
	title = {The {Welfare} {Cost} of {Rationing}-{By}-{Queuing} {Across} {Markets}: {Theory} and {Estimates} from the {U}. {S}. {Gasoline} {Crises}*},
	volume = {102},
	issn = {0033-5533},
	shorttitle = {The {Welfare} {Cost} of {Rationing}-{By}-{Queuing} {Across} {Markets}},
	url = {https://doi.org/10.2307/1884682},
	doi = {10.2307/1884682},
	number = {1},
	urldate = {2025-04-23},
	journal = {The Quarterly Journal of Economics},
	author = {Frech, III, H. E. and Lee, William C.},
	month = feb,
	year = {1987},
	pages = {97--108},
}

@book{sowell_knowledge_1980,
	address = {New York},
	title = {Knowledge and {Decisions}},
	publisher = {BasicBooks. Reissued 1996},
	author = {Sowell, Thomas},
	year = {1980},
}

@article{Bulow2012,
	title = {Regulated {Prices}, {Rent} {Seeking}, and {Consumer} {Surplus}},
	volume = {120},
	number = {1},
	urldate = {2018-03-09},
	journal = {Journal of Political Economy},
	author = {Bulow, Jeremy I. and Klemperer, Paul D.},
	year = {2012},
	pages = {160--186},
}

@article{Murphy1992,
	title = {The {Transition} to a {Market} {Economy}: {Pitfalls} of {Partial} {Reform}},
	volume = {107},
	doi = {10.2307/2118367},
	number = {3},
	urldate = {2017-10-31},
	journal = {The Quarterly Journal of Economics},
	author = {Murphy, Kevin M. and Shleifer, Andrei and Vishny, Robert W.},
	month = aug,
	year = {1992},
	note = {Publisher: Oxford University Press},
	pages = {889--906},
}

\end{document}